\lstdefinelanguage{JSON}{
    autogobble=true,
    basicstyle=\normalfont\footnotesize\ttfamily,
    numbers=left,
    numberstyle=\scriptsize,
    stepnumber=1,
    showstringspaces=false,
    breaklines=true,
    frame=lines,
    literate=
     *{0}{{{\color{numb}0}}}{1}
      {1}{{{\color{numb}1}}}{1}
      {2}{{{\color{numb}2}}}{1}
      {3}{{{\color{numb}3}}}{1}
      {4}{{{\color{numb}4}}}{1}
      {5}{{{\color{numb}5}}}{1}
      {6}{{{\color{numb}6}}}{1}
      {7}{{{\color{numb}7}}}{1}
      {8}{{{\color{numb}8}}}{1}
      {9}{{{\color{numb}9}}}{1}
      {:}{{{\color{punct}{:}}}}{1}
      {,}{{{\color{punct}{,}}}}{1}
      {\{}{{{\color{delim}{\{}}}}{1}
      {\}}{{{\color{delim}{\}}}}}{1}
      {[}{{{\color{delim}{[}}}}{1} 
      {]}{{{\color{delim}{]}}}}{1}, 
    morestring=[b]", 
    morestring=[d]',
}
\newif\ifshortversion\shortversionfalse
\newif\ifanonymous\anonymousfalse
\def\orcidID#1{\smash{\href{http://orcid.org/#1}{\protect\raisebox{-1.25pt}{\protect\includegraphics{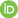}}}}}
\title{Validating Streaming JSON Documents with Learned VPAs%
    \thanks{This work was supported by the Belgian FWO \enquote{SAILor} project (G030020N). Ga\"etan Staquet is a research fellow (Aspirant) of the Belgian F.R.S.-FNRS.}%
}
\author{%
    V\'{e}ronique Bruy\`{e}re\inst{1}\orcidID{0000-0002-9680-9140}%
\and Guillermo A. P\'{e}rez\inst{2}\orcidID{0000-0002-1200-4952}%
\and Ga\"{e}tan Staquet\inst{1,2}\orcidID{0000-0001-5795-3265}%
}
\authorrunning{V. Bruy\`ere, G. A. P\'erez, and G. Staquet}
\institute{%
    University of Mons (UMONS), Mons, Belgium\\
	\email{\{veronique.bruyere,gaetan.staquet\}@umons.ac.be}
    \and%
    University of Antwerp (UAntwerp) -- Flanders Make, Antwerp, Belgium 
	\email{guillermo.perez@uantwerpen.be}%
}
\begin{document}
    \maketitle
    
    \begin{abstract}
        We present a new streaming algorithm to validate JSON documents against a set of constraints given as a JSON schema. Among the possible values a JSON document can hold, objects are unordered collections of key-value pairs while arrays are ordered collections of values. We prove that there always exists a visibly pushdown automaton (VPA) that accepts the same set of JSON documents as a JSON schema.
        Leveraging this result, our approach relies on learning a VPA for the provided schema.
        As the learned VPA assumes a fixed order on the key-value pairs of the objects, we abstract its transitions in a special kind of graph, and propose an efficient streaming algorithm using the VPA and its graph to decide whether a JSON document is valid for the schema.
        We evaluate the implementation of our algorithm on a number of random JSON documents, and compare it to the classical validation algorithm.
        \keywords{Visibly pushdown automata \and JSON documents \and streaming validation}
    \end{abstract}

\setcounter{tocdepth}{3}

\section{Introduction}

     \emph{JavaScript Object Notation} (JSON) has overtaken XML as the de facto standard data-exchange
     format, in particular for web applications. JSON documents are easier to read for programmers and end users since they only have arrays and objects as structured types. Moreover, in contrast to XML\@, they do not include named open and end tags for all values,
     but open and end tags (braces actually) for arrays and objects only.
     \emph{JSON schema}~\cite{JSONSchemaSite} is a simple schema language that allows users to
     impose constraints on the structure of JSON documents.

     In this work, we are interested in the \emph{validation} of
     \emph{streaming} JSON documents against JSON schemas. Several previous
     results have been obtained about the formalization of XML schemas and
     the use of formal methods to validate XML documents (see, e.g.,~\cite{DBLP:journals/mst/BonevaCS15,DBLP:conf/www/KumarMV07,mnns17,ns18,schwentick12,DBLP:conf/pods/SegoufinV02}).
     Recently, a standard to formalize JSON schemas has been proposed and
     (hand-coded) validation tools for such schemas can be found
     online~\cite{JSONSchemaSite}. Pezoa et al,
     in~\cite{DBLP:conf/www/PezoaRSUV16}, observe that the standard of JSON
     documents is still evolving and that the formal semantics of JSON
     schemas is also still changing. Furthermore, validation tools seem to
     make different assumptions about both documents and schemas. The
     authors of~\cite{DBLP:conf/www/PezoaRSUV16} carry out an initial
     formalization of JSON schemas into formal grammars from which they are
     able to construct a \emph{batch} validation tool from a given JSON
     schema.
        
     In this paper, we rely on the formalization work of~\cite{DBLP:conf/www/PezoaRSUV16} and propose a \emph{streaming} algorithm
     for validating JSON documents against JSON schemas. To our knowledge, this is the first JSON validation algorithm that is streaming.
     For XML, works that study streaming document
     validation base such algorithms on the construction of some
     automaton (see, e.g.,~\cite{DBLP:conf/pods/SegoufinV02}, for
     XML).
     \ifanonymous
     The authors of~\cite{DBLP:conf/tacas/BruyerePS22} first experimented with one-counter automata for this purpose.
     \else
     In~\cite{DBLP:conf/tacas/BruyerePS22}, we first experimented with
     one-counter automata for this purpose.
     \fi
     We submit that
     \emph{visibly-pushdown automata} (VPAs) are a better fit for this task
     --- this is in line with~\cite{DBLP:conf/www/KumarMV07}, where the same
     was proposed for streaming XML documents. In contrast to one-counter
     automata,\footnote{By nesting objects and arrays, we obtain a set of JSON documents encoding $\{a^n b^m c^m d^n \mid n,m \in \mathbb{N}\}$, a context-free language that requires two counters.}
     we show that VPAs are expressive enough to capture the language of JSON
     documents satisfying any JSON schema.
        
     More importantly, we explain that \emph{active learning} \textit{\`a la}
     Angluin~\cite{DBLP:journals/iandc/Angluin87} is a good alternative to the
     automatic construction of such a VPA from the formal semantics of a given
     JSON schema. This is possible in the presence of labeled examples or a
     computer program that can answer membership and (approximate) equivalence
     queries about a set of JSON documents. This learning approach has two
     advantages. First, we derive from the learned VPA a streaming validator
     for JSON documents. Second, by automatically learning an automaton
     representation, we circumvent the need to write a schema and subsequently
     validate that it represents the desired set of JSON documents. Indeed, it
     is well known that one of the highest bars that users have to clear to
     make use of formal methods is the effort required to write a formal
     specification, in this case, a JSON schema.
      
     \paragraph{Contributions.} We present a VPA active learning framework to
     achieve what was mentioned above --- though we fix an order on the keys
     appearing in objects. The latter assumption helps our algorithm learn faster.
     Secondly, we show how to bootstrap the learning algorithm by leveraging
     existing validation and document-generation tools to implement
     approximate equivalence checks.
     Thirdly, we describe how to validate streaming documents using our fixed-order learned automata --- that is, our algorithm accepts other permutations of keys, not just the one
     encoded into the VPA\@.
     Finally, we present an empirical evaluation of our learning and validation algorithms, implemented on top of \LearnLib~\cite{DBLP:conf/tacas/MertenSHM11}.
        
     All contributions, while complementary, are valuable in their
     own right. First, our learning algorithm for VPAs is a novel gray-box
     extension of TTT~\cite{DBLP:phd/dnb/Isberner15} that leverages side
     information about the language of all JSON documents. Second, our
     validation algorithm that uses a fixed-order VPA is novel and can be
     applied regardless of whether the automaton is learned or constructed
     from a schema.
     For the validation algorithm, we developed the concept of \emph{key graph}, which allows us to efficiently realize the validation no matter the key-value order in the document, and might be of independent interest for other JSON-analysis applications using VPAs.
     Finally, we implemented our own batch validator to
     facilitate approximating equivalence queries as required by our learning
     algorithm. Both the new validator and the equivalence
     oracle are efficient, open-source, and easy to modify. We strongly
     believe the latter can be re-used in similar projects aiming to learn
     automata representations of sets of JSON documents.

\ifshortversion
    A long version of this work is on arXiv: \url{https://arxiv.org/abs/2211.08891}.
\else
    \paragraph{Structure of the paper.} 
        In \Cref{sec:vpa}, we recall the concept of VPA and the structure we impose for our learning context.
        Namely, we work on \emph{1-single entry VPAs}~\cite{DBLP:conf/icalp/AlurKMV05,DBLP:phd/dnb/Isberner15} where all call transitions lead to the initial state, and that admit a unique minimal automaton.
        We also recall how to learn such an automaton using TTT~\cite{DBLP:phd/dnb/Isberner15}.
        In \Cref{sec:JSONformat}, we introduce JSON documents and JSON schemas, and define a formal grammar encoding a schema under the abstractions we consider.
        Moreover, we prove that for any JSON schema, we can construct a VPA that accepts the same set of JSON documents, which implies that our learning approach is feasible.
        Then, in \Cref{sec:validation}, we present our new streaming validation algorithm based on learned VPAs.
        This algorithm relies on a specific graph that abstracts the transitions of the VPA with respect to the objects that appear in the accepted JSON documents.
        We define this graph and prove several useful properties before providing the validation algorithm, for which we prove time and space complexity results and its correctness.
        Finally, in \Cref{sec:implementation} we discuss the implementation of our algorithms and show experimental results comparing our validation approach with the ones proposed in~\cite{JSONSchemaSite}.
\fi

\section{Visibly Pushdown Languages}\label{sec:vpa}

First, we recall the definition VPAs~\cite{DBLP:conf/stoc/AlurM04} 
and state some of their 
properties. We 
also recall how they can be actively learned following Angluin's approach~\cite{DBLP:journals/iandc/Angluin87}.

\subsection{Preliminaries}\label{subsec:prelim}

An \emph{alphabet} $\alphabet$ is a finite set whose elements are called \emph{symbols}. A \emph{word} $w$ over $\alphabet$ is a finite sequence of symbols from $\alphabet$, with the \emph{empty word} denoted by $\emptyword$.
The length of $w$ is denoted $\lengthOf{w}$; the set of all words, $\alphabet^*$.
Given two words $v,w \in \alphabet^*$, $v$ is a \emph{prefix} (resp.\ \emph{suffix}) of $w$ if there exists $u \in \alphabet^*$ such that $w = vu$ (resp. $w = uv$), and $v$ is a \emph{factor} of $w$ if there exist $u,u' \in \alphabet^*$ such that $w = uvu'$.
Given $L \subseteq \alphabet^*$, called a \emph{language}, we denote by $\prefixes{L}$ (resp. $\suffixes{L}$) the set of prefixes (resp.\ suffixes) of words of $L$.

Given a set $Q$, we write $\identity{Q}$ for the \emph{identity relation} $\{ (q,q) \mid q \in Q\}$ on $Q$.
  
\subsection{Visibly Pushdown Automata}\label{subsec:VPAs} 
  
VPA~\cite{DBLP:conf/stoc/AlurM04} are particular pushdown automata that we recall in this section. 
The \emph{pushdown alphabet}, denoted $\pushdownAlphabet = (\callAlphabet, \returnAlphabet, \internalAlphabet)$, is partitioned into  pairwise disjoint alphabets $\callAlphabet, \returnAlphabet, \internalAlphabet$ such that $\callAlphabet$ (resp. $\returnAlphabet$, $\internalAlphabet$) is the set of \emph{call} symbols (resp.\ \emph{return} symbols, \emph{internal} symbols). In this paper, we work with the particular alphabet of return symbols
$\returnAlphabet = \{\bar a \mid a \in \callAlphabet\}$. For any such $\pushdownAlphabet$, we denote by $\alphabet$ the alphabet $\callAlphabet \cup \returnAlphabet \cup \internalAlphabet$. Given a pushdown alphabet $\pushdownAlphabet$, the set $\wellMatched$ of \emph{well-matched} words over $\pushdownAlphabet$ is defined:
\begin{itemize}
    \item $\emptyword \in \wellMatched$,
    \item $\forall a \in \internalAlphabet, a \in \wellMatched$,
    \item if $w,w' \in \wellMatched$, then $ww' \in \wellMatched$,
    \item if $a \in \callAlphabet, w \in \wellMatched$, then $aw\bar a \in \wellMatched$.
\end{itemize}
Also, the \emph{call/return balance} function $\balance : \alphabet^* \to \Z$ is defined as 
\[
    \balance(ua) = \balance(u) + \begin{cases}
        \phantom{-}1    & \text{if \(a \in \callAlphabet\),}\\
        -1              & \text{if \(a \in \returnAlphabet\),}\\
        \phantom{-}0    & \text{if \(a \in \internalAlphabet\).}
    \end{cases}
\]
In particular, for all $w \in \wellMatched$, we have $\balance(u) \geq 0$ for each prefix $u$ of $w$ and $\balance(u) \leq 0$ for each suffix $u$ of $w$. Finally, the \emph{\depth} $\depthSymbol(w)$ of a well-matched word $w$ is equal to $\max \{\balance(u) \mid u \in \prefixes{\{w\}}\}$, that is, the maximum number of unmatched call symbols among the prefixes of $w$.

\begin{definition}\label{def:VPA}
A \emph{visibly pushdown automaton} (VPA) 
over a pushdown alphabet $\pushdownAlphabet$ is a tuple 
$(\states, \pushdownAlphabet, \stackAlphabet, \transitionFunction, \initialStates, \finalStates)$ where \(\states\) is a finite non-empty set of \emph{states}, \(\initialStates \subseteq \states\) is a set of \emph{initial} states, \(\finalStates \subseteq \states\) is a set of \emph{final} states,  \(\stackAlphabet\) is a \emph{stack alphabet}, and  $\transitionFunction$ is a finite set of \emph{transitions} of the form \(\transitionFunction = \callFunction \cup \returnFunction \cup \internalFunction\) where
\begin{itemize}
    \item $\callFunction \subseteq \states \times \callAlphabet \times \states \times \stackAlphabet$ is the set of \emph{call} transitions,
    \item $\returnFunction \subseteq \states \times \returnAlphabet \times \stackAlphabet \times \states$ is the set of \emph{return} transitions,  
    \item $\internalFunction \subseteq \states \times \internalAlphabet \times \states$ is the set of \emph{internal} transitions.
\end{itemize}
The \emph{size} $\automatonSize$ of a VPA $\automaton$ is its number of states. Its number of transitions is denoted by $|\transitionFunction|$. 
\end{definition}

Let us describe the \emph{transition system} $\transitionSystem$ of a VPA $\automaton$ whose vertices are configurations. 
A \emph{configuration} is a pair $\left\langle q,\stackWord \right\rangle$ where $q\in Q$ is a state and $\stack\in\Gamma^*$ a stack content. A configuration is \emph{initial} (resp.\ \emph{final}) if $q\in \initialStates$ (resp. $q\in \finalStates$) and $\stackWord = \emptyword$. For $a \in \alphabet$, we write $\left\langle q,\stackWord \right\rangle \xrightarrow{a} \left\langle q',\stackWord' \right\rangle$ in $\transitionSystem$ if there is:
\begin{itemize}
    \item a call transition $(q,a,q',\gamma) \in \callFunction$ verifying $\stackWord' = \gamma\stackWord$,\footnote{The stack symbol $\gamma$ is pushed on the left of $\stackWord$.}
    \item a return transition $(q,a,\gamma,q') \in \returnFunction$ verifying $\stackWord = \gamma\stackWord'$,
    \item an internal transition $(q,a,q') \in \internalFunction$ such that $\stackWord' = \stackWord$.
\end{itemize}
The transition relation of $\transitionSystem$ is extended to words in the usual way. We say that $\automaton$ \emph{accepts} a word $w \in \alphabet^*$ if there exists a path in $\transitionSystem$ from an initial configuration to a final configuration that is labeled by $w$. The \emph{language of} \(\automaton\), denoted by $\languageOf{\automaton}$, is defined as the set of all words accepted by $\automaton$:
             \[
                \languageOf{\automaton} = \left\{w \in \alphabet^* \:\middle|\: \exists q \in \initialStates, \exists q' \in \finalStates, \left\langle q, \emptyword \right\rangle \xrightarrow{w} \left\langle q', \emptyword \right\rangle\right\}.
            \]
Any language accepted by some VPA is a \emph{visibly pushdown language} (VPL). Notice that such a language is composed of well-matched words only.\footnote{The original definition of VPA~\cite{DBLP:conf/stoc/AlurM04} allows acceptance of ill-matched words.} Given a VPA $\automaton$ over $\pushdownAlphabet$, the \emph{reachability relation} $\accRelation$ of $\automaton$ is: 
            \[
                \accRelation = \left\{(q,q') \in \states^2 \:\middle|\: \exists w \in \wellMatched, \left\langle q, \emptyword \right\rangle \xrightarrow{w} \left\langle q', \emptyword \right\rangle \right\}.
            \]

Finally, we say that $p \in \states$ is a \emph{\binState{}} if there exists no path in $\transitionSystem$ of the form $\left\langle q, \emptyword \right\rangle \xrightarrow{w} \left\langle p, \stackWord \right\rangle \xrightarrow{w'} \left\langle q', \emptyword \right\rangle$ with $q \in \initialStates$ and $q' \in \finalStates$. If a VPA $\automaton$ has \binStates{}, those states can be removed from $\states$ as well as the transitions containing \binStates{} without modifying the accepted language. 

\begin{example}
Let \(\pushdownAlphabet = (\{a\}, \{\bar a\}, \{b\})\) be a pushdown alphabet and \(\stackAlphabet = \{\gamma\}\) be a stack alphabet.
An example of a VPA \(\automaton\) with three states is given in \cref{fig:vpa}.
Since \(a\) is a call symbol, the transition going from \(q_0\) to \(q_1\) indicates that the symbol \(\gamma\) must be pushed on the stack.
On the other hand, the return transition reading \(\bar a\) from \(q_1\) to \(q_2\) can only be triggered if the top of the stack is \(\gamma\).

The word \(aba\bar a^2\) is accepted by \(\automaton\).
Indeed, we have the following sequence of configurations:
\(
    (q_0, \emptyword) \transition^a (q_1, \gamma) \transition^b (q_0, \gamma) \transition^a (q_1, \gamma^2) \transition^{\bar a} (q_2, \gamma) \transition^{\bar a} (q_2, \emptyword),
\)
i.e., a $w$-labeled path
in $\transitionSystem$ from an initial configuration to a final configuration.
\end{example}

\begin{figure}[t]
    \centering
    \begin{tikzpicture}[
    automaton
]
    \node [state, initial]                  (q0)    {\(q_0\)};
    \node [state, right=of q0]              (q1)    {\(q_1\)};
    \node [state, accepting, right=of q1]   (q2)    {\(q_2\)};
    
    \path
        (q0)    edge [bend left]    node {\(a, \gamma\)} (q1)
        (q1)    edge    node {\(b\)}    (q0)
                edge                node {\(\bar a, \gamma\)} (q2)
        (q2)    edge [loop right]   node {\(\bar a, \gamma\)} (q2)
    ;
\end{tikzpicture}
    \caption{A VPA accepting the language \(\{a{(ba)}^n \bar a^{n+1} \mid n \in \N\}\).}%
    \label{fig:vpa}
\end{figure}

\subsection{Minimal Deterministic VPAs}\label{subsec:MinimalVPAs}

Given a VPA $\automaton = (\states, \pushdownAlphabet, \stackAlphabet, \transitionFunction, \initialStates, \finalStates)$, we say that it is \emph{deterministic} (det-VPA) if $|\initialStates| = 1$ and $\automaton$ does not have two distinct transitions with the same left-hand side. By \emph{left-hand side}, we mean $(q,a)$ for a call transition $(q,a,q',\gamma) \in \callFunction$ or an internal transition $(q,a,q') \in \internalFunction$, and $(q,a,\gamma)$ for a return transition $(q,a,\gamma,q') \in \returnFunction$. 

\begin{theorem}[\cite{DBLP:conf/stoc/AlurM04,DBLP:journals/corr/abs-0911-3275}]\label{thm:detVPA}
    For any VPA $\automaton$ over $\pushdownAlphabet$, one can construct a det-VPA $\automaton[B]$ over $\pushdownAlphabet$ such that $\languageOf{\automaton} = \languageOf{\automaton[B]}$. Moreover, the size of $\automaton[B]$ is in $\complexity(2^{\automatonSize^2})$ and the size of its stack alphabet is in $\complexity(|\callAlphabet|\cdot 2^{\automatonSize^2})$.
\end{theorem}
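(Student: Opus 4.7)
The plan is to use the classical subset-style determinization of Alur and Madhusudan, in which states of $\automaton[B]$ are binary relations on $\states$, i.e.\ elements of $2^{\states \times \states}$, immediately yielding the size bound $\complexity(2^{\automatonSize^2})$. The intuition is that, after reading a well-matched factor $w$ since the last unmatched call (or since the start of the run if there is none), the current state $R$ of $\automaton[B]$ records the \emph{summary} of $\automaton$ on $w$: we want $(q,q') \in R$ iff $\automaton$ admits a $w$-labeled path from $\langle q, \emptyword\rangle$ to $\langle q', \emptyword\rangle$. This ``reset at call, compose at return'' discipline is sound precisely because, in a VPA, the position of every call/return match is already fixed by the input word.

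First I would set the initial state of $\automaton[B]$ to $\identity{\initialStates}$ and declare as accepting those relations containing some pair in $\initialStates \times \finalStates$. For an internal symbol $a \in \internalAlphabet$, I update $R$ to $R \mathbin{;} \internalFunction_a$, where $\internalFunction_a = \{(q,q') \mid (q,a,q') \in \internalFunction\}$ and $\mathbin{;}$ denotes relational composition. For a call symbol $a \in \callAlphabet$, the automaton pushes the pair $(R, a)$ onto the stack --- giving a stack alphabet of size $\complexity(|\callAlphabet| \cdot 2^{\automatonSize^2})$ --- and moves to $\identity{\states}$, restarting the summary computation inside the matched block. For the matching return $\bar a$, with current state $S$ and top-of-stack $(R, a)$, the new state is
\[
    R \mathbin{;} \bigl\{ (q, q''') \mid \exists \gamma, q', q'' : (q, a, q', \gamma) \in \callFunction,\ (q', q'') \in S,\ (q'', \bar a, \gamma, q''') \in \returnFunction \bigr\},
\]
which reincorporates the effect of the just-closed matched block into the summary maintained at the enclosing nesting level.

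Correctness is then proved by induction on the depth $\depthSymbol(w)$: the internal rule and composition maintain the summary invariant within a single nesting level, while the call/return cooperation ensures that, upon closing a matched block, the inner summary is correctly fused --- via the existential quantification over the call transition, the intermediate summary, and the matching return transition --- with the outer summary stored on the stack. The main obstacle is formulating this invariant crisply and verifying it in the return case, since one must simultaneously quantify over three ingredients (the chosen call transition, the inner summary $S$, and the matching return transition) to faithfully mirror the full set of $\automaton$-configurations reachable on the input from the relation encoded by $\automaton[B]$. A straightforward induction on the input word then yields $\languageOf{\automaton[B]} = \languageOf{\automaton}$, and the asserted size bounds follow immediately by inspection of the construction.
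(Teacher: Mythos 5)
Your proposal is correct and follows essentially the same construction as the paper: states of $\automaton[B]$ are binary relations on $\states$ recording summaries of well-matched factors since the last unmatched call, with the identity relation pushed at calls, relational composition at internals, and the fusion of the stored outer summary with the call transition, inner summary, and matching return transition at returns. Your return rule, initial/accepting states, and size bounds all coincide with the paper's (which merely restricts states to subsets of the reachability relation $\accRelation$ rather than all of $2^{\states\times\states}$, an inessential difference).
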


\begin{proof}
Let us briefly recall this construction. Let $\automaton = (\states, \pushdownAlphabet, \stackAlphabet, \transitionFunction, \initialStates, \finalStates)$. The states of $\automaton[B]$ are subsets $\setOfStates$ of the reachability relation $\accRelation$ of $\automaton$ and the stack symbols of $\automaton[B]$ are of the form $(\setOfStates,a)$ with $\setOfStates \subseteq \accRelation$ and $a \in \callAlphabet$. Let $w = u_1a_1u_2a_2 \ldots u_n a_n u_{n+1}$ be such that $n \geq 0$ and $u_i \in \wellMatched, a_i \in \callAlphabet$ for all $i$. That is, we decompose $w$ in terms of its unmatched call symbols. Let $\setOfStates_i$ be equal to $\{(p,q) \mid  \langle p, \emptyword \rangle \xrightarrow{u_i} \langle q, \emptyword \rangle\}$ for all $i$. Then after reading $w$, the det-VPA $\automaton[B]$ has its current state equal to $\setOfStates_{n+1}$ and its stack containing $(\setOfStates_n,a_n) \ldots (\setOfStates_2,a_2)(\setOfStates_1,a_1)$. Assume we are reading the symbol $a$ after $w$, then $\automaton[B]$ performs the following transition from $\setOfStates_{n+1}$:
\begin{itemize}
    \item if $a \in \callAlphabet$, then push $(\setOfStates_{n+1},a)$ on the stack and go to the state $\setOfStates = \identity{\states}$ (a new unmatched call symbol is read),
    \item if $a \in \internalAlphabet$, then go to the state $\setOfStates = \{(p,q) \mid \exists (p,p') \in \setOfStates_{n+1}, (p',a,q) \in \internalFunction \}$ ($u_{n+1}$ is extended to the well-matched word $u_{n+1}a$),
    \item if $a \in \returnAlphabet$, then pop $(\setOfStates_n,a_n)$ from the stack if $\bar a_n =  a$, and go to the state $\setOfStates = \{(p,q) \mid \exists (p,p') \in \setOfStates_n, (p',a_n,r',\gamma) \in \callFunction, (r',r) \in \setOfStates_{n+1}, (r,a,\gamma,q) \in \returnFunction \}$ (the call symbol $a_n$ is matched with the return symbol $a = \bar a_n$, leading to the well-matched word $u_n a_n u_{n+1}a$).
\end{itemize}
Finally the initial state of $\automaton[B]$ is $\identity{\initialStates}$ and its final states are sets $\setOfStates$ containing some $(p,q)$ with $p \in \initialStates$ and $q \in \finalStates$.
\qed\end{proof}

The family of VPLs is closed under several natural operations.

\begin{theorem}[\cite{DBLP:conf/stoc/AlurM04}]\label{thm:closeness}
Let $L_1$ and $L_2$ be two VPLs over $\pushdownAlphabet$. Then, $L_1 \cup L_2$, $L_1 \cap L_2$, $\wellMatched \setminus L_1$, $L_1  L_2$, and $L^*$ are VPLs over $\pushdownAlphabet$. 
\end{theorem}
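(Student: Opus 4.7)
The plan is to handle the five operations one by one, using the standard automata-theoretic constructions adapted to the visibly pushdown setting, and exploiting crucially the fact that the pushdown alphabet is fixed (so call/return/internal roles agree across $L_1$ and $L_2$) and that acceptance here requires an empty stack (so the stack content at any accepting configuration is trivial).

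For \emph{union}, I would take two VPAs $\automaton[A]_1$ and $\automaton[A]_2$ with disjoint state sets (relabel if needed) and disjoint stack alphabets, and form their disjoint union, letting the initial (resp.\ final) states be $\initialStates_1 \cup \initialStates_2$ (resp.\ $\finalStates_1 \cup \finalStates_2$); nondeterminism lets each input word be processed by exactly one of the two machines. For \emph{intersection}, I would use a standard product construction: states are pairs in $\states_1 \times \states_2$, stack symbols are pairs in $\stackAlphabet_1 \times \stackAlphabet_2$, and each transition type pairs the corresponding transitions of $\automaton[A]_1$ and $\automaton[A]_2$. Because call/return/internal symbols induce the same stack action in both automata, the two stacks stay synchronized and the construction is well-defined.

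For \emph{complement} with respect to $\wellMatched$, I first determinize $\automaton[A]_1$ using Theorem~\ref{thm:detVPA}, then complete it (adding a sink state and the missing transitions, preserving the visibly pushdown structure), and finally swap final and non-final states. The one point of care is that the resulting VPA must accept only well-matched words; since acceptance already requires the stack to be empty, any ill-matched word is automatically rejected, so swapping final states only toggles membership among well-matched words as required. For \emph{concatenation}, I would glue $\automaton[A]_2$ after $\automaton[A]_1$ by adding, for every internal/call/return transition of $\automaton[A]_2$ originating in an initial state $q_0 \in \initialStates_2$, a copy of that transition originating from every final state of $\automaton[A]_1$ (or, equivalently, using $\emptyword$-transitions from $\finalStates_1$ to $\initialStates_2$ and then eliminating them). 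Since a word is accepted by $\automaton[A]_1$ only when the stack is empty, the switch from $\automaton[A]_1$ to $\automaton[A]_2$ happens with an empty stack, so $\automaton[A]_2$ starts its computation in a fresh configuration. The \emph{Kleene star} is handled analogously, by adding back-edges from each final state to the initial state and adjoining $\emptyword \in L^*$ by declaring a new initial state that is also final.

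The main obstacle in all five cases is keeping the stack behavior consistent with the visibly pushdown discipline, and in particular ensuring that interactions between the two machines (concatenation, star) only occur at empty-stack configurations. This is exactly what the semantics of $\languageOf{\automaton}$ forces, so the constructions above go through without further surprise; routine inductions on the length of well-matched decompositions yield $\languageOf{\automaton} = L_1 \circ L_2$ for each operation $\circ$ considered.
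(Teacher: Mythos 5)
You should first note that the paper itself gives no proof of this theorem: it imports it from Alur and Madhusudan, so your proposal is being measured against the standard constructions. Your treatments of union, intersection and complement are correct, and your concatenation construction also works --- though the justification you give (\enquote{the switch from $\automaton[A]_1$ to $\automaton[A]_2$ happens with an empty stack}) is not quite the right argument. Nothing in the semantics prevents a glued transition from being taken from a final state of $\automaton[A]_1$ while its stack is non-empty; what actually saves the construction is that the stack alphabets are disjoint, so any leftover symbols of $\automaton[A]_1$ can never be popped by $\automaton[A]_2$'s return transitions and therefore persist to the end, blocking acceptance. That a-posteriori argument should be made explicit.

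The genuine gap is the Kleene star, where the same reasoning is unavailable because there is only one machine and one stack alphabet: a loop-back may occur with junk on the stack, and a \emph{later} iteration may pop that junk, so the final empty-stack test no longer certifies that every iteration read a word of $L_1$. Concretely, over $\pushdownAlphabet = (\{a\},\{\bar a\},\{b\})$ let $L_1 = \{ab\bar a\}$, accepted by the VPA with states $q_0$ (initial), $q_1$, $q_f$ (final) and transitions $(q_0,a,q_1,\gamma)$, $(q_1,b,q_f)$, $(q_f,\bar a,\gamma,q_f)$. Your construction adds the back-edge copy $(q_f,a,q_1,\gamma)$ of the initial call transition, and the resulting automaton accepts $abab\bar a\bar a$ via the run $q_0 \xrightarrow{a} q_1 \xrightarrow{b} q_f \xrightarrow{a} q_1 \xrightarrow{b} q_f \xrightarrow{\bar a} q_f \xrightarrow{\bar a} q_f$, whose stack grows to $\gamma\gamma$ and ends empty --- yet $abab\bar a\bar a \notin L_1^* = \{(ab\bar a)^n \mid n \geq 0\}$. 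The missing idea is a mechanism forcing restarts to occur only when the portion of the stack created since the last restart is empty: one tracks in the control state a bit recording whether the current iteration's stack is empty and tags the stack symbol pushed at iteration-depth zero so that popping it restores the bit, enabling the loop-back exactly at those points. With that modification (which is essentially the Alur--Madhusudan construction) the star case goes through; as written, your construction over-approximates $L_1^*$.
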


Though a VPL $L$ in general does not have a unique minimal det-VPA $\automaton$ accepting $L$, imposing the following subclass
leads to a unique minimal acceptor.

\begin{definition}[\cite{DBLP:conf/icalp/AlurKMV05,DBLP:phd/dnb/Isberner15}]\label{def:1SEVPA}
    A \emph{1-module single entry VPA}\footnote{The definitions
    of 1-SEVPA 
    in~\cite{DBLP:conf/icalp/AlurKMV05} and~\cite{DBLP:phd/dnb/Isberner15} 
    differ slightly. We follow the one in~\cite{DBLP:phd/dnb/Isberner15}.}
    (1-SEVPA) is a det-VPA $\automaton = (\states, \pushdownAlphabet, \stackAlphabet, \transitionFunction, \initialStates = \{q_0\}, \finalStates)$ such that
    its stack alphabet $\stackAlphabet$ is equal to $\states \times \callAlphabet$, and
    all its call transitions $(q,a,q',\gamma) \in \callFunction$ are such that $q' = q_0$ and $\gamma = (q,a)$.
\end{definition}

\begin{theorem}[\cite{DBLP:conf/icalp/AlurKMV05}]\label{thm:minimal1SEVPA}
    For any VPL $L$, there exists a unique minimal (with regards to the number of states) 1-SEVPA accepting \(L\), up to a renaming of the states.\footnote{This 1-SEVPA may be exponentially bigger than the size of a VPA accepting \(L\).}
\end{theorem}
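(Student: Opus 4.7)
The plan is to establish a Myhill--Nerode-style characterization of 1-SEVPA states via a congruence on well-matched words, then build a canonical quotient 1-SEVPA $\automaton[M]_L$ and show every 1-SEVPA accepting $L$ has at least as many states, with the surjection being a bijection exactly for minimal ones. The 1-SEVPA constraint is what makes uniqueness possible: forcing every call to land in $q_0$ with stack symbol $(q, a)$ rigidly encoding the caller eliminates the freedom that prevents uniqueness of minimal det-VPAs in general.

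First I would define $\sim_L$ on $\wellMatched$ by $w_1 \sim_L w_2$ iff for all $u, v \in \alphabet^*$ with $uw_1v, uw_2v \in \wellMatched$, one has $uw_1v \in L \Leftrightarrow uw_2v \in L$. For any 1-SEVPA $\automaton[B]$ accepting $L$ and any two well-matched words $w_1, w_2$ reaching the same state from $\langle q_0, \emptyword \rangle$, one shows $w_1 \sim_L w_2$: running any context $u$ before either $w_i$ yields the same configuration in both cases, because the 1-SEVPA restriction makes the state after the last unmatched call of $u$ always equal to $q_0$, so the well-matched runs of $w_1, w_2$ that follow both start from $q_0$ and hence reach the same state. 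This bounds the index of $\sim_L$ by $|\states|$ and in particular makes it finite.

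Next I would construct $\automaton[M]_L$ with states $\{[w] \mid w \in \wellMatched\}$, initial state $[\emptyword]$, final states $\{[w] \mid w \in L\}$, stack alphabet $\states \times \callAlphabet$, internal transitions $[w] \xrightarrow{a} [wa]$, call transitions $[w] \xrightarrow{a} [\emptyword]$ pushing $([w], a)$, and return transitions $[w'] \xrightarrow{\bar a, ([w], a)} [waw'\bar a]$. By design this is a 1-SEVPA, and an induction on the structure of $\wellMatched$ shows $\languageOf{\automaton[M]_L} = L$. For minimality, the map $q \mapsto [w_q]$ sending each state of an arbitrary 1-SEVPA accepting $L$ to the class of any well-matched word reaching it is a well-defined surjection onto the states of $\automaton[M]_L$, giving $|\states| \geq$ the number of $\sim_L$-classes. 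Uniqueness up to renaming then follows because in a minimal 1-SEVPA this surjection is a bijection that commutes with all transitions (including the stack symbol renaming $(q, a) \mapsto ([w_q], a)$), hence is an isomorphism.

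The main obstacle will be to verify well-definedness of the transitions of $\automaton[M]_L$ on equivalence classes. Internal transitions are easy: if $w_1 \sim_L w_2$, then $w_1 a \sim_L w_2 a$ follows by absorbing a prefix of the right context. The delicate case is return transitions: given $w_1 \sim_L w_2$ and $w_1' \sim_L w_2'$, one must establish $w_1 a w_1' \bar a \sim_L w_2 a w_2' \bar a$. Every context $u(\cdot)v$ around $w_i a w_i' \bar a$ must be folded into two nested contexts, one around $w_i$ and one around $w_i'$, and this decomposition works precisely because the 1-SEVPA restriction forces the inner run of $w_i'$ to start at $q_0$ regardless of what the outer run did before the call, so replacing $w_1'$ by $w_2'$ (and then $w_1$ by $w_2$) preserves membership independently.
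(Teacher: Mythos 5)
There is a genuine gap, and it sits exactly at the definition of the congruence. You define $w_1 \sim_L w_2$ using \emph{arbitrary} contexts $u,v \in \alphabet^*$ with $uw_iv \in \wellMatched$, and then claim that in any 1-SEVPA two well-matched words reaching the same state from $\langle q_0,\emptyword\rangle$ are $\sim_L$-equivalent because ``the well-matched runs of $w_1,w_2$ that follow [the last unmatched call of $u$] both start from $q_0$.'' That is not what happens: writing $u = u't$ with $u' \in {(\wellMatched\cdot\callAlphabet)}^*$ and $t \in \wellMatched$, the run that restarts at $q_0$ after the last unmatched call is the run of $tw_i$, not of $w_i$; the run of $w_i$ itself begins at the state reached by $t$ from $q_0$. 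Reaching the same state from $q_0$ does not imply reaching the same state from that intermediate state. Concretely, take $b,c,c',d \in \internalAlphabet$ and the VPL $L = \{bc,\, dc'\}$: in the minimal 1-SEVPA, $c$ and $c'$ both reach the sink state from $q_0$, yet $bc \in L$ and $bc' \notin L$, so $c \not\sim_L c'$ under your definition. Hence your key lemma is false, the index of $\sim_L$ is not bounded by $|\states|$ in the way you claim, the map $q \mapsto [w_q]$ is not well-defined, and your quotient automaton, while it does accept $L$, is in general \emph{not} minimal (it needlessly separates $c$ from $c'$ above), so the whole surjection/bijection argument collapses.

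The fix is the one the paper itself records in Section~2: restrict the left contexts to $u \in {(\wellMatched \cdot \callAlphabet)}^*$, i.e.\ use the set $\contextPairs$ of context pairs $(u,v)$ with $v \in \suffixes{\wellMatched}$ and $\balance(u) = -\balance(v)$, and define $\VPLRelation$ accordingly. With $u$ forced to end in an unmatched call (or be empty), a 1-SEVPA is in state $q_0$ after reading $u$, so the run of $w_i$ genuinely starts at $q_0$ and your lemma becomes true; in the example above, $c \VPLRelation c'$ as it should be. The remainder of your outline --- the quotient construction with stack symbols $([w],a)$, the well-definedness of return transitions by folding one context into two nested ones (first around $w'$ with left context $uwa \in {(\wellMatched\cdot\callAlphabet)}^*$, then around $w$), and the surjection-turned-isomorphism for minimal automata --- is then the standard argument of the cited reference and goes through. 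Note that the restriction on $u$ is needed in the return-transition case too: $uwa$ lies in ${(\wellMatched\cdot\callAlphabet)}^*$ precisely because $u$ does.
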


Let us remark two facts about minimal 1-SEVPAs. First, given a minimal 1-SEVPA $\automaton$, there may exist a smaller VPA accepting the same language
(that is therefore not a 1-SEVPA). Second, the transition relation of $\automaton$ is a total function, meaning that $\automaton$ may have a \binState{} (that is unique, in case of existence).

\subsection{Learning VPAs}\label{sec:definitions:learning}

Let us recall the concept of \emph{learning} a deterministic finite automaton (DFA), as introduced in~\cite{DBLP:journals/iandc/Angluin87}. 
Let \(L\) be a regular language over an alphabet $\Sigma$. The task of the \emph{learner} is to construct a DFA \(\hypothesis\) such that \(\languageOf{\hypothesis} = L\) by interacting with the \emph{teacher}.
The two possible types of interactions are \emph{membership queries} (does \(w \in \Sigma^*\) belong to \(L\)?), and \emph{equivalence queries} (does the DFA \(\hypothesis\) accept \(L\)?). For the latter type, if the answer is negative, the teacher also provides a counterexample, i.e., a word \(w\) such that \(w \in L \iff w \notin \languageOf{\hypothesis}\). The so-called \emph{$L^*$ algorithm} of~\cite{DBLP:journals/iandc/Angluin87} learns at least one representative per equivalence class of the Myhill-Nerode congruence of \(L\)~\cite{hu00} from which the minimal DFA \(\cal D\) accepting \(L\) is constructed. This learning process terminates and it uses a polynomial number of membership and equivalence queries in the size of \(\cal D\), and in the length of the longest counterexample returned by the teacher~\cite{DBLP:journals/iandc/Angluin87}.

In~\cite{DBLP:phd/dnb/Isberner15}, an efficient learning algorithm for VPLs is given by extending Angluin's learning algorithm. The Myhill-Nerode congruence for regular languages is extended to VPLs as follows. Given a pushdown alphabet \(\pushdownAlphabet\) and a VPL $L$ over \(\pushdownAlphabet\), we consider the set $\contextPairs$ of \emph{context pairs}\footnote{Notice that a non-empty word in $(\wellMatched \cdot \callAlphabet)^*$ is an element of $\prefixes{\wellMatched}$ ending with a call symbol.}
            \[
                \contextPairs = \left\{(u, v) \in {(\wellMatched \cdot \callAlphabet)}^* \times ~\suffixes{\wellMatched} \:\middle|\: \balance(u) = -\balance(v)\right\},
            \]
and we define the equivalence relation \(\VPLRelation \subseteq \wellMatched \times \wellMatched\)~\cite{DBLP:conf/icalp/AlurKMV05,DBLP:phd/dnb/Isberner15} such that 
$w \VPLRelation w'$ if and only if $\forall (u, v) \in \contextPairs, uwv \in L \iff u w' v \in L$.

The minimal 1-SEVPA accepting $L$ as described in \Cref{thm:minimal1SEVPA} is constructed from 
$\VPLRelation$ such that its states are the equivalence classes of $\VPLRelation$.

\begin{theorem}[\cite{DBLP:phd/dnb/Isberner15}]\label{thm:learningVPA}
Let $L$ be a VPL over $\pushdownAlphabet$ and $n$ be the index of $\VPLRelation$.
Then, one can learn the minimal 1-SEVPA accepting $L$ with at most $n-1$ equivalence queries and a number of membership queries polynomial in $n$, $|\Sigma|$, and $\log \ell$, where
$\ell$ is the length of the longest counterexample returned by the teacher. 
\end{theorem}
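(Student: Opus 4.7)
The plan is to lift Angluin's $L^*$ algorithm (and its TTT refinement) from the DFA setting to VPLs by replacing the Myhill--Nerode congruence with the congruence $\VPLRelation$ defined via context pairs. The learner maintains a discrimination tree whose inner nodes are labeled by context pairs $(u,v) \in \contextPairs$, whose leaves correspond to discovered equivalence classes of $\VPLRelation$, and each leaf carries an \emph{access word} $w_q \in \wellMatched$ representing the class. A membership query on $uwv$ decides on which side of the inner node $(u,v)$ the word $w$ lies, so one can \emph{sift} any word into a leaf using one membership query per inner node on the path from the root.

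From such a tree with leaves $\{q_0,\dots,q_{m-1}\}$, a candidate 1-SEVPA $\hypothesis$ is constructed as in \Cref{def:1SEVPA}: the initial state is the leaf of $\emptyword$, final states are those whose access word lies in $L$ (checked with one membership query each), internal transitions $(q,a,q')$ are obtained by sifting $w_q a$, and return transitions $(q,\bar a, (q',a), q'')$ are obtained by sifting $w_{q'} a\, w_q \bar a$ (call transitions are determined by the 1-SEVPA discipline and need no queries). The correctness of this construction follows from \Cref{thm:minimal1SEVPA}: the leaves are pairwise inequivalent under $\VPLRelation$ by design, so if the sifts are consistent with $\VPLRelation$ then $\hypothesis$ is (isomorphic to a quotient of) the minimal 1-SEVPA.

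Counterexamples are processed in the spirit of Rivest--Schapire: given a counterexample $z$, a binary search over the decomposition of $z$ along $\hypothesis$'s run identifies a position where the real run disagrees with the hypothesized one. Using the stack discipline of a 1-SEVPA one extracts from that position a new context pair $(u,v) \in \contextPairs$ and a word $w$ such that some current leaf actually contains two $\VPLRelation$-inequivalent words separated by $(u,v)$. Splitting that leaf by adding an inner node labeled $(u,v)$ strictly increases the number of leaves. The binary search costs $\bigO(\log \ell)$ membership queries plus one extra membership query for each bit tested, which is where the $\log \ell$ factor in the bound comes from.

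For the final counting, each equivalence query either closes the learning (so $\hypothesis$ is the minimal 1-SEVPA) or increases the leaf count by at least one; starting from a single leaf we can have at most $n-1$ negative equivalence queries. Between two equivalence queries the learner refills the hypothesis by sifting $\bigO(n \cdot |\Sigma|)$ transition targets, each sift costing $\bigO(n)$ membership queries in the worst case, plus the $\bigO(\log \ell)$ queries for counterexample analysis; summing over the $n{-}1$ refinements gives a membership-query count polynomial in $n$, $|\Sigma|$, and $\log \ell$. The main subtlety I expect is the counterexample analysis: one must justify that the binary search genuinely produces a context pair in $\contextPairs$ (in particular that $u$ lies in $(\wellMatched \cdot \callAlphabet)^*$ with the correct balance with $v$), which requires exploiting the visibility of $\pushdownAlphabet$ and the 1-SEVPA stack convention to reconstruct the needed call/return split from the mismatch position; the rest of the argument is a fairly direct transcription of TTT to the VPL setting.
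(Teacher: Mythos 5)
The paper gives no proof of this theorem: it is imported verbatim from the cited thesis of Isberner, with only the remark that the construction extends \TTT{} to VPLs, and your sketch follows exactly that route (discrimination tree labeled by context pairs of $\contextPairs$, sifting to build the 1-SEVPA transitions, Rivest--Schapire binary search over the counterexample supplying the $\log \ell$ factor), so there is no divergence to report. One minor point of accounting: you count $n-1$ failed equivalence queries starting from a single leaf plus a final successful one, i.e.\ $n$ in total, whereas the stated bound of $n-1$ comes from the fact that the initial discrimination tree already separates at least two classes via the root discriminator $(\emptyword,\emptyword)$ when $n \geq 2$; likewise the return transitions of a 1-SEVPA number $\Theta(n^2)$ rather than $O(n\cdot|\Sigma|)$, though this does not affect the claimed polynomial bound.
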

The learning process designed in~\cite{DBLP:phd/dnb/Isberner15} extends to VPLs the \emph{\TTT{} algorithm} proposed in~\cite{DBLP:conf/rv/IsbernerHS14} for regular languages. 
\TTT{} improves the efficiency of the $L^*$ algorithm by eliminating redundancies in counterexamples provided by the teacher.   

\section{JSON Format}\label{sec:JSONformat}

In this section, we first describe JSON documents~\cite{DBLP:journals/rfc/rfc8259}. This format is currently the most popular one used for exchanging information on the web. We also describe JSON schemas that impose some constraints on the structure of JSON documents. These schemas can thus be seen as a formalism defining subsets of JSON documents. Then, for the purpose of this paper, we make some abstractions on JSON documents and schemas, and we show how to construct a VPA that accepts the set of JSON documents defined by a given JSON schema. We finally explain how to learn such a VPA from the schema.

\subsection{JSON Documents}

We describe here the structure of JSON documents. Our presentation is inspired by~\cite{DBLP:conf/www/PezoaRSUV16} such that some details are skipped for readability (we refer to the official website~\cite{JSONDotOrg} for a full description).  

The JSON format defines six different types of \emph{JSON values}: 
\begin{itemize}
    \item \verb!true!, \verb!false! are JSON values.
    \item \verb!null! is a JSON value. 
    \item Any decimal number (positive, negative) is a JSON value, called a \emph{number}. In particular any number that is an integer is called an \emph{integer}. 
    \item Any finite sequence of Unicode characters starting and ending with \verb!"! is a JSON value, called a \emph{string value}.
    
    \item If $v_1, v_2, \ldots, v_n$ are JSON values and $k_1, k_2, \ldots, k_n$ are \emph{pairwise distinct} string values, then 
    \(
        \{ k_1 \!\!:\!\! v_1 , k_2 \!\!:\!\! v_2 , \ldots , k_n \!\!:\!\! v_n \}
    \)
    is a JSON value, called an \emph{object}. Each $k_i \!\!:\!\! v_i$ is called a \emph{key-value pair} such that $k_i$ is the \emph{key}. The collection of these pairs is \emph{unordered}.
    
    \item If $v_1, v_2, \ldots, v_n$ are JSON values, then 
    \(
        [v_1, v_2, \ldots, v_n]
    \)
    is a JSON value, called an \emph{array}. Each $v_i$ is an \emph{element} and the collection thereof is \emph{ordered}.
\end{itemize}
The first four types of JSON values are called \emph{\basicValues{}}. In this work, \emph{JSON documents} are supposed to be objects.\footnote{In~\cite{DBLP:journals/rfc/rfc8259}, a JSON document can be any JSON value and duplicated keys are allowed inside objects. In this paper, we follow what is commonly used in practice: JSON documents are objects, and keys are pairwise distinct inside objects.}

\begin{example}\label{ex:JSONdocument}
    An example of a JSON document is given in \Cref{fig:json_document}.
    We can see that this document is an object containing three keys: \verb!"title"!, whose associated value is a string value; \verb!"keywords"!, whose value is an array containing string values; and \verb!"conference"!, whose value is an object.
    This inner object contains two keys: \verb!"name"!, whose value is a string value; \verb!"year"!, whose value is an integer.

    \lstinputlisting[language=JSON, caption=A JSON document., label=fig:json_document]{figures/documents/example.json}
\end{example}

It is possible to navigate through JSON documents. If $J$ is an object and $k$ is a key, then $J[k]$ is the value $v$ such that the key-value pair $k\!:\!v$ appears in $J$. If $J$ is an array and $n$ is a natural number, then $J[n]$ is the $(n+1)$-th element of $J$.\footnote{Arrays are indexed from 0.} More generally, values can be retrieved from JSON documents by using \emph{JSON pointers}:
sequences of references as defined previously.
For instance, in \Cref{ex:JSONdocument}, \verb!J[keywords][1]!, where \texttt{J} is the root of the document, allows to retrieve the value \verb!"JSON documents"!.

\subsection{JSON Schemas}\label{sec:definitions:schema}

A \emph{JSON schema} can impose some constraints on JSON documents by specifying any of the six types of JSON values that appear in those documents. We say that a JSON document \emph{satisfies} the schema if it verifies the constraints imposed by this schema. In this section, we give a simplified presentation of JSON schemas and refer to~\cite{JSONSchemaSite} for a complete description and to~\cite{DBLP:conf/www/PezoaRSUV16} for a formalization (i.e.\ a formal grammar with its syntax and semantics). 

A JSON schema is itself a JSON document that uses several keywords that help shape and restrict the set of JSON documents that this schema specifies:\footnote{A more formal description is given in the sequel.}
\begin{itemize}
    \item It can be imposed that a string value has a minimum/maximum length or satisfies a pattern expressed by a regular expression;
    \item That a number belongs to some interval or is a multiple of some number.
    \item Within object schemas, restrictions can be imposed on the key-value pairs of the objects. For example, the value associated with some key has itself to satisfy a certain schema, or some particular keys must be present in the object.
    A minimum/maximum number of pairs can also be imposed.
    \item Within array schemas, it can be imposed that all elements of the array satisfy a certain schema, or that the array has a minimum/maximum size.
    \item Schemas can be combined with Boolean operations, in the sense that a JSON document must satisfy the conjunction/disjunction of several JSON schemas, or it must not satisfy a certain JSON schema.
    \item A schema can be the enumeration of certain JSON values.
    \item A schema can be defined as one referred to by a JSON pointer. This allows a \emph{recursive} structure for the JSON documents satisfying a certain schema.
\end{itemize}

\begin{example}\label{ex:JSONschema}
    The schema from \Cref{fig:json_schema}
    describes the objects that can have three keys: \verb!"title"!, whose associated value must be a string value; \verb!"keywords"!, whose value must be an array containing string values; and \verb!"conference"!, whose value must be an object.
    Among these, \verb!"title"! and \verb!"conference"! are required.
    The JSON document of \Cref{ex:JSONdocument} satisfies this JSON schema.

    \lstinputlisting[language=JSON, caption=A JSON schema., label=fig:json_schema]{figures/schema/example.json}
\end{example}

\subsection{Abstract JSON Documents and Schemas}\label{subsec:AbstractJSON}

For the purpose of this paper, we consider somewhat \emph{abstract} JSON values, documents, and schemas. We see JSON values as well-matched words over the pushdown alphabet $\JSONpushdownAlphabet$ that we describe hereafter. 
We do not consider the restrictions that can be imposed on string values and numbers, and we abstract all string values as \verb!s!, and all numbers as \verb!n! (as \verb!i! when they are integers). We denote by $\valAlphabet = \{\verb!true!,\verb!false!,\verb!null!,\verb!s!,\verb!n!,\verb!i!\}$ the alphabet composed of the \basicValues{}. We also do not consider enumerations. Concerning the key-value pairs appearing in objects, each key together with the symbol \enquote{$\doubleDot$} following the key is abstracted as an alphabet symbol~$k$. We assume knowledge of a \emph{finite} alphabet $\keyAlphabet$ of keys. The pushdown alphabet $\JSONpushdownAlphabet = (\callAlphabet,\returnAlphabet,\internalAlphabet)$ that we use is such that:
\begin{itemize}
    \item $\internalAlphabet = \keyAlphabet \cup \valAlphabet \cup \{\comma\}$, where $\comma$ is used in place of the comma,
    \item $\callAlphabet = \{ \laccol, \lcrochet \}$, where $\laccol$ (resp. $\lcrochet$) is used in place of \enquote{$\{$} (resp.\ \enquote{$[$}), 
    \item $\returnAlphabet = \{ \raccol, \rcrochet \}$, with $\overline \laccol = \raccol$ and $\overline \lcrochet =  \rcrochet$.
\end{itemize}
We denote by $\JSONalphabet$ the set $\callAlphabet \cup \returnAlphabet \cup \internalAlphabet$.

\begin{example}\label{ex:AbstractDoc}
The JSON document given in \Cref{ex:JSONdocument} is abstracted as the word 
\[
    \laccol k_1 \verb!s! \comma k_2 \lcrochet \verb!s! \comma \verb!s! \comma \verb!s! \rcrochet \comma k_3 \laccol k_4 \verb!s! \comma k_5 \verb!i! \raccol \raccol \in \wellMatched[\JSONpushdownAlphabet]
\]
where $\keyAlphabet$ contains the keys $k_i, i \in \{1,\ldots,5\}$.
\end{example}

We now use the formalism of \emph{extended context-free grammars} (extended CFGs) to define JSON schemas with the abstractions mentioned previously. We recall that in an extended CFG, the right-hand sides of productions are regular expressions over the terminals and non-terminals. Here, we even use \emph{generalized} regular expressions such that intersections and negations are allowed in addition to union, concatenation and Kleene-$*$ operations. An extended CFG $\grammar$ defining a JSON schema uses the alphabet $\JSONalphabet$ of terminals, an alphabet $\nonTerminalAlphabet = \{S,S_1,S_2, \ldots\}$ of non-terminals (with one of them being the axiom), and a finite series of productions as described in the next definition.

\begin{definition}\label{def:JSONsyntax}
Productions used in abstract JSON schemas are of the form:\footnote{In an abuse of notation, the non-terminals of $\nonTerminalAlphabet$ are also called schemas.}\begin{description} 
    \item[Primitive schema] $\schema \Coloneqq v$ with $v \in \valAlphabet$. 
    \item[Object schema] $\schema \Coloneqq \laccol k_1 \schema_1 \comma k_2 \schema_2 \comma \ldots \comma k_n \schema_n \raccol$, with $n \geq 0$ and pairwise distinct keys $k_i \in \keyAlphabet$, $i \in \{1,\ldots, n\}$.
    \item[Array schema] either $\schema \Coloneqq \lcrochet \emptyword \vee \schema_1 (\comma \schema_1)^* \rcrochet$, or $\schema \Coloneqq \lcrochet \schema_1 \comma \ldots \comma \schema_1\rcrochet$ with $n\geq 0$ occurrences of $\schema_1$. 
    \item[Boolean operations] $\schema \Coloneqq \schema_1  \vee \ldots \vee \schema_n$, $\schema \Coloneqq \schema_1 \wedge \ldots \wedge \schema_n$, and $\schema \Coloneqq \neg \schema_1$.
\end{description} 
\end{definition}
This extended CFG $\grammar$ must be \emph{closed}, i.e., whenever it contains a production $\schema \Coloneqq \laccol k_1 \schema_1 \comma k_2 \schema_2 \comma \ldots \comma k_n \schema_n \raccol$, then it also contains \emph{all productions} $\schema \Coloneqq \laccol k_{i_1} \schema_{i_1} \comma k_{i_2} \schema_{i_2} \comma \ldots \comma k_{i_n} \schema_{i_n} \raccol$ where $(i_1,\ldots,i_n)$ is a permutation of $(1,\ldots,n)$. Indeed we recall that objects are unordered collections of key-value pairs. 

\begin{example}\label{ex:abstractJSONschema}
The JSON schema of \Cref{ex:JSONschema} can be defined as follows:
\begin{eqnarray}
\nonTerminal_0 &\Coloneqq& \laccol k_1 \nonTerminal_1 \comma k_2 \nonTerminal_2 \comma k_3 \nonTerminal_3 \raccol\\
\nonTerminal_0 &\Coloneqq& \laccol k_1 \nonTerminal_1 \comma k_3 \nonTerminal_3 \raccol \\
\nonTerminal_1 &\Coloneqq&  \verb!s!\\
\nonTerminal_2 &\Coloneqq&  \lcrochet \emptyword \vee \nonTerminal_1 (\comma \nonTerminal_1)^* \rcrochet\\
\nonTerminal_3 &\Coloneqq& \laccol k_4 \nonTerminal_1 \comma k_5 \nonTerminal_4 \raccol\\
\nonTerminal_4 &\Coloneqq&  \verb!i!
\end{eqnarray}
where we add to the first, second, and fifth productions all the related productions with key permutations. The axiom of this grammar is $S_0$.
\end{example}

Let us explain the \emph{semantics} of a closed extended CFG $\grammar$ defining a JSON schema. Let $S \in \nonTerminalAlphabet$ be a non-terminal and $J \in \wellMatched[\JSONpushdownAlphabet]$ be a JSON value. We say $J$ \emph{satisfies} $\schema$, or $J$ is \emph{valid} with regard to $\schema$, denoted by $J \models \schema$, if one of the following holds:
\begin{itemize}
    \item $\schema$ is a primitive schema $v$ and $J = v$.
    \item $\schema$ is an object schema $\laccol k_1 \schema_1 \comma k_2 \schema_2 \comma \ldots \comma k_n \schema_n \raccol$, $J$ is an object $\laccol k_1 v_1 \comma k_2 v_2 \comma \ldots \comma k_n v_n \raccol$ such that $v_i \models \schema_i$ for every $i \in \{1,\ldots,n\}$. 
    \item $\schema$ is an array schema $\lcrochet \emptyword \vee \schema_1 (\comma \schema_1)^* \rcrochet$, $J$ is an array, and for each element $v$ of $J$, we have $v \models \schema_1$. 
    \item $\schema$ is an array schema $\lcrochet \schema_1 \comma \ldots \comma \schema_1\rcrochet$ with $n\geq 0$ occurrences of $\schema_1$, $J$ is an array of size $n$, and for each element $v$ of $J$, we have $v \models \schema_1$. 
    \item $\schema$ is $\schema_1 \vee \schema_2 \vee \ldots \vee \schema_n$ and there exists $i \in \{1,\ldots,n\}$ such that $J \models \schema_i$.
    \item $\schema$ is $\schema_1 \wedge \schema_2 \wedge \ldots \wedge \schema_n$ and for all $i \in \{1,\ldots,n\}$ we have $J \models \schema_i$.
    \item $\schema$ is $\neg \schema_1$ and $J \not\models \schema_1$.
\end{itemize}

Let us make some comment about this semantics. If $J$ is an object satisfying $S$ with respect to the production $\schema \Coloneqq \laccol k_1 \schema_1 \comma k_2 \schema_2 \comma \ldots \comma k_n \schema_n \raccol$, as $\grammar$ is closed, then the document $J$ with any permutation of its key-value pairs also satisfies $S$ as $\grammar$ contains all the related productions with key permutations. Notice also that an array $J$ can be composed of any number of elements or of a fixed number of elements, all satisfying the same schema. Moreover, the empty object $\laccol \raccol$ and the empty array $\lcrochet \rcrochet$ are allowed.

We denote by $\languageOf{\grammar}$ the set of all JSON values $J$ satisfying the axiom $S$ of $\grammar$. Hence, when a JSON schema, given as a grammar $\grammar$, defines a set of JSON documents,  this set of documents is the language $\languageOf{\grammar}$.\footnote{In the sequel, we abuse notation and speak of JSON values, documents and schemas instead of abstractions.}

\begin{example}
Consider the closed extended CFG $\grammar$ of \Cref{ex:abstractJSONschema}.
The JSON document $J$ of \Cref{ex:AbstractDoc}, equal to
\[
    \laccol k_1 \verb!s! \comma k_2 \lcrochet \verb!s! \comma \verb!s! \comma \verb!s! \rcrochet \comma k_3 \laccol k_4 \verb!s! \comma k_5 \verb!i! \raccol \raccol
\]
satisfies the axiom $S_0$ of $\grammar$. This is also the case for the document $J'$ equal to 
\[
    \laccol k_3 \laccol k_5 \verb!i! \comma k_4 \verb!s!  \raccol \comma k_2 \lcrochet \verb!s! \comma \verb!s! \comma \verb!s! \rcrochet \comma k_1 \verb!s! \raccol
\]
Therefore, $J, J'$ are both valid and $J, J' \in \languageOf{\grammar}$.
\end{example}

The set of all JSON values can be defined by a particular grammar as given in the next lemma.
\begin{lemma}\label{lem:universal}
The set of all JSON values $J \in \wellMatched[\JSONpushdownAlphabet]$ is equal to $\languageOf{\grammarUniv}$ where $\grammarUniv$ is the following closed extended CFG, called \emph{universal}, with $\nonTerminalV$ being the axiom:
\begin{itemize}
    \item $\nonTerminalV \Coloneqq v$  for all  $v \in \valAlphabet$,
    \item $\nonTerminalV \Coloneqq \lcrochet \emptyword \vee \nonTerminalV (\comma \nonTerminalV)^* \rcrochet$,
    \item $\nonTerminalV \Coloneqq \laccol k_1 \nonTerminalV \comma k_2 \nonTerminalV \comma \ldots \comma k_n \nonTerminalV \raccol$ for all sequences $(k_1,\ldots,k_n)$, $n \geq 0$, of pairwise distinct keys of $\keyAlphabet$.
\end{itemize}
 \end{lemma}

\begin{remark}\label{rem:well-formed}
As mentioned in~\cite{DBLP:conf/www/PezoaRSUV16}, we suppose to work with \emph{well-formed} extended CFGs that avoid problematic situations like in the production $\nonTerminal \Coloneqq \neg \nonTerminal$ or the productions $\nonTerminal \Coloneqq \nonTerminal_1 \vee \nonTerminal_2$, $\nonTerminal_2 \Coloneqq \nonTerminal$.
That is, we avoid grammars with cyclic definitions inside the productions that involve Boolean operations (see~\cite{DBLP:conf/www/PezoaRSUV16} for more details).
\end{remark}

\subsection{VPAs for JSON Schemas}\label{subsec:VPAandJSON}

We now explain how VPAs can be used in place of JSON schemas. Let us first introduce some definitions.

\begin{definition}
Let $\grammar$ be a closed extended CFG defining a JSON schema. Let $\languageOf{\grammar}$ be the set of JSON documents over $\JSONpushdownAlphabet$ satisfying this schema. Given an order $<$ of $\keyAlphabet$, the language $\languageOrderedOf{\grammar}$ is the subset of $\languageOf{\grammar}$ composed of the JSON documents whose key order inside objects respects the order $<$ of $\keyAlphabet$.
\end{definition}

In \Cref{sec:implementation}, we discuss our implementation of a new algorithm validating JSON documents against a given JSON schema presented in \Cref{sec:validation}. A key order is naturally induced by the data structures used in this implementation.

\begin{example}
Consider again the JSON schema of \Cref{ex:abstractJSONschema}. Let us order the alphabet $\keyAlphabet$ such that $k_1 < k_2 < k_3 < k_4 < k_5$. The JSON document $J$ of \Cref{ex:AbstractDoc} belongs to $\languageOrderedOf{\grammar}$ but none of the documents equal to $J$ up to any permutation of its key-value pairs belong to $\languageOrderedOf{\grammar}$ (permutations of the three pairs with keys $k_1, k_2, k_3$ and of the two pairs with keys $k_4, k_5$).
\end{example}

We have the next property, whose proof is given further below.

\begin{theorem}\label{thm:VPAforJSON}
Let $\grammar$ be a closed extended CFG defining a JSON schema. Then, there exists a VPA $\automaton$ such that $\languageOf{\automaton} = \languageOf{\grammar}$. Moreover for all orders $<$ of $\keyAlphabet$, there exists a VPA ${\mathcal B}$ such that $\languageOf{\mathcal B} = \languageOrderedOf{\grammar}$.
\end{theorem}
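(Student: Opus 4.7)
My plan is to proceed by structural induction on $\grammar$. Well-formedness of $\grammar$ ensures that Boolean operations ($\vee, \wedge, \neg$) are only applied to non-terminals of strictly smaller \emph{Boolean-depth}, so the grammar decomposes into a Boolean-free base layer---consisting only of primitive, object, and array productions, possibly mutually recursive---on top of which Boolean combinators are layered in a DAG. I will handle these two layers separately.

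For the Boolean-free base layer, I construct a single VPA $\automaton$ directly from the grammar. Its states encode the current \emph{parser position} inside a production of some non-terminal, together with appropriate entry and exit markers. The transition relation is defined syntactically: reading a call symbol $\laccol$ (resp.\ $\lcrochet$) at a position that expects an object (resp.\ array) consumes that symbol and pushes onto the stack a symbol encoding the \emph{return context}---the position to which control must return once the matching $\raccol$ (resp.\ $\rcrochet$) has been read. Reading a key in $\keyAlphabet$, a basic value in $\valAlphabet$, or $\comma$ corresponds to an internal transition advancing the parser position, while reading a matching return symbol pops the stack and transitions to the encoded return state. Nondeterminism resolves the choice among alternative productions of a non-terminal and the unknown length of an unbounded array. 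Correctness follows by a routine induction on input length, establishing a correspondence between accepting runs of $\automaton$ and valid derivations of the Boolean-free sub-grammar.

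For the Boolean layer, \Cref{thm:closeness} directly yields VPAs for $\schema_1 \vee \ldots \vee \schema_n$, $\schema_1 \wedge \ldots \wedge \schema_n$, and $\neg \schema_1$ from VPAs already built for each $\schema_i$, via union, intersection, and complementation relative to $\wellMatched[\JSONpushdownAlphabet]$. Well-formedness keeps this outer induction well-founded, so the VPA assembled for the axiom $S_0$ satisfies $\languageOf{\automaton} = \languageOf{\grammar}$, proving the first claim. The second claim follows by the same scheme, except that when constructing the Boolean-free layer, for each object schema $\schema \Coloneqq \laccol k_1 \schema_1 \comma \ldots \comma k_n \schema_n \raccol$ I retain only the single production whose keys appear in the order $<$, discarding the other $n!-1$ permutation-related productions; the resulting VPA $\mathcal{B}$ accepts exactly $\languageOrderedOf{\grammar}$. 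The main technical obstacle is the direct Boolean-free construction, where I must ensure that stack symbols faithfully encode all necessary return contexts while respecting the visibility constraint---pushes and pops may occur only on call and return symbols---which is guaranteed precisely because objects and arrays in $\grammar$ are delimited by matching call/return symbols.
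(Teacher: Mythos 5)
Your construction for the Boolean-free part (parser positions as states, return contexts pushed on call symbols, component automata for the right-hand sides glued together by call/return transitions) is essentially the paper's construction in \Cref{lem:grammarToVPA}, and your treatment of the key order (keeping only the production whose keys respect $<$) matches what is done in \Cref{cor:universal}. The gap is the stratification on which you build the rest of the argument. Well-formedness only forbids cycles that stay entirely inside Boolean productions; it does not prevent an object or array production from referencing a Boolean non-terminal, nor a cycle that passes through such a structural production. Consequently the grammar does \emph{not} decompose into a Boolean-free base layer with Boolean combinators stacked on top in a bottom-up constructible order. The universal grammar of \Cref{lem:universal} is already a counterexample: $\nonTerminal_0 \Coloneqq \laccol k_1 \nonTerminal \comma \ldots \comma k_n \nonTerminal \raccol$ is structural but refers to the Boolean non-terminal $\nonTerminal \Coloneqq \nonTerminal_0 \vee \nonTerminal_1$, which refers back to $\nonTerminal_0$. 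To build your ``base layer'' VPA for $\nonTerminal_0$ you already need a VPA for $\nonTerminal$, and to apply \Cref{thm:closeness} to $\nonTerminal$ you need a VPA for $\nonTerminal_0$; your outer induction therefore never gets off the ground on exactly the recursive schemas the formalism is meant to express.

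The paper avoids this by not separating the two layers at all: Boolean productions are inlined into the right-hand sides during the grammar transformation, so that after the transformation every surviving non-terminal has a production of the form $\nonTerminal_j \Coloneqq a_j e_j \bar a_j$, and a \emph{single} VPA is built for the whole mutually recursive system at once --- disjunction, conjunction and negation are absorbed into the generalized regular expressions $e_j$, which are still compiled to DFAs. Negation is then repaired globally by one final intersection with the universal VPA of \Cref{cor:universal} (using \Cref{thm:closeness}), rather than by a VPL complementation per occurrence of $\neg$. A secondary issue, independent of the first: complementing relative to $\wellMatched[\JSONpushdownAlphabet]$ lets $\neg\nonTerminal_1$ be ``satisfied'' by well-matched words that are not JSON values at all, so without a final intersection with the universal automaton your top-level VPA can accept words outside $\languageOf{\grammarUniv}$; your proposal omits this step.
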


By this theorem, given a JSON schema, there exists a VPA $\automaton$ (resp. $\mathcal B$) accepting all JSON documents satisfying this schema (resp.\ only those respecting the key order). These two VPAs can be supposed to be minimal 1-SEVPAs by \Cref{thm:minimal1SEVPA}. The minimal 1-SEVPA $\mathcal B$ could be \emph{exponentially smaller} than the minimal 1-SEVPA $\automaton$ as the order of the key-value pairs is fixed inside objects
(see an illustrating example in \Cref{app:expSmaller}). We use this minimal 1-SEVPA $\mathcal B$ in the next section for the validation of JSON documents.

Given an order on $\keyAlphabet$ and a JSON schema $\schema$ defined by a closed extended CFG $\grammar$, the minimal 1-SEVPA accepting $\languageOrderedOf{\grammar}$ can be learned in the sense of \Cref{thm:learningVPA}. Notice that we have to adapt the learning algorithm if the teacher only knows the schema $\schema$:
\begin{itemize}
    \item A membership query over a JSON document asks whether this document satisfies $\schema$.
    \item An equivalence query is answered\footnote{It is common to proceed this way in automata learning when exact equivalence is intractable, as explained in~\cite[Section 4]{DBLP:journals/iandc/Angluin87}.} by generating a certain number of random (valid and invalid) JSON documents and by verifying that the learned VPA $\automaton[H]$ and the schema $\schema$ agree on the documents' validity.
    \item In membership and equivalence queries, all considered JSON documents have the order of their key-value pairs respecting the given order of $\keyAlphabet$.
\end{itemize}
The randomness used in the equivalence queries implies that the learned 1-SEVPA may not exactly accept $\languageOrderedOf{\grammar}$. Setting the number of generated documents to be large would help reducing the probability that an incorrect 1-SEVPA is learned. In \Cref{sec:implementation}, we discuss in more details how to generate adequate JSON documents for equivalence queries.

Let us now prove \Cref{thm:VPAforJSON}.

\begin{proof}[of \Cref{thm:VPAforJSON}]
Let $\grammar$ be a closed extended CFG defining a JSON schema. To prove the existence of a VPA $\automaton$ accepting $\languageOf{\grammar}$, we adapt a construction provided in the proof of~\cite[Theorem 1]{DBLP:conf/www/KumarMV07}.\footnote{Some difficulties arise from the presence of Boolean operators in the grammar productions.} The terminal alphabet of $\grammar$ is equal to $\JSONalphabet = \callAlphabet \cup \returnAlphabet \cup \internalAlphabet$; suppose that its non-terminal alphabet is equal to $\nonTerminalAlphabet = \{\nonTerminal_0,\nonTerminal_1,\ldots,\nonTerminal_n\}$ where $S_0$ is the axiom. We assume that any production that has $S_0$ as left-hand side is of the form $S_0 \Coloneqq \laccol e \raccol$, where $e$ is a generalized regular expression over the alphabet $\nonTerminalAlphabet \cup \internalAlphabet$, as JSON documents are supposed to be objects. 
We need to apply several steps in a way to construct the required VPA $\automaton$.

\medskip
$(i)$ We transform the grammar $\grammar$ such that: 
\begin{itemize}
    \item the left-hand sides of productions are pairwise distinct,
    \item each production is of the form $\nonTerminal_j \Coloneqq a_j e_j \bar a_j$ such that $a_j \in \{\laccol,\lcrochet\}$ and $e_j$ is a generalized regular expression over the alphabet $\nonTerminalAlphabet \cup \internalAlphabet$.
\end{itemize}
We proceed as follows to obtain the transformed grammar.
\begin{enumerate}
    \item If $\nonTerminal_j$ appears as the left-hand side of $k \geq 2$ productions, then it is replaced by $k$ new non-terminal symbols $\nonTerminal_{j_1},\nonTerminal_{j_2},\ldots,\nonTerminal_{j_k}$ (one for each of the $k$ productions), and each of occurrence of $\nonTerminal_j$ in the right-hand side of productions is replaced by $\nonTerminal_{j_1} \vee \nonTerminal_{j_2} \vee \ldots \vee \nonTerminal_{j_k}$.
    \item Any production $\nonTerminal_j \Coloneqq v$ with $v \in \valAlphabet$ is deleted. Each occurrence of $\nonTerminal_j$ in the right-hand side of productions is replaced by $v$. 
    \item We proceed as in the previous item with all productions $\nonTerminal_j \Coloneqq \nonTerminal_{j_1} \vee \nonTerminal_{j_2} \vee \ldots \vee \nonTerminal_{j_n}$, $\nonTerminal_j \Coloneqq \nonTerminal_{j_1} \wedge \nonTerminal_{j_2} \wedge \ldots \wedge \nonTerminal_{j_n}$, and $\nonTerminal_j \Coloneqq \neg\nonTerminal_{j_1}$.
\end{enumerate}
After this transformation, a non-terminal $\nonTerminal_j$ uniquely identifies a production and the latter is of the form $\nonTerminal_j \Coloneqq a_j e_j \bar a_j$ with $a_j \in \{\laccol,\lcrochet\}$.\footnote{This process terminates because the grammar is well-formed (see Remark~\ref{rem:well-formed}).} Notice that the axiom $\nonTerminal_0$ of the original grammar may have been replaced by several axioms. 
Given a production $\nonTerminal_j \Coloneqq \laccol e_j \raccol$, the generalized regular expression $e_j$ is of the form $k_1 \phi_1 \comma k_2 \phi_2 \comma \ldots \comma k_n \phi_n$ where each $\phi_i$ is a Boolean expression of elements from $\valAlphabet \cup \nonTerminalAlphabet$. Similarly, the expression $e_j$ in each production $\nonTerminal_j \Coloneqq \lcrochet e_j \rcrochet$ is of the form either $\emptyword \vee \phi_1 (\comma \phi_1)^*$, or $\phi_1 \comma \ldots \comma \phi_1$. 

\medskip
$(ii)$ In a way to obtain a normal form for the productions $\nonTerminal_j \Coloneqq a_j e_j \bar a_j$, we simplify the Boolean expressions $\phi_i$ that appear in the generalized regular expressions $e_j$, and the productions containing them, as follows. 
\begin{enumerate}
    \item Each Boolean expression $\phi_i$ is put into disjunctive normal form (DNF). We will see that $0$ (meaning \emph{false}) may appear inside $\phi_i$ that is thus simplified with the usual Boolean rules. 
    \item In each DNF $\phi_i$, 
    \begin{itemize}
    \item each conjunction that contains some $v, v' \in \valAlphabet$, with $v \neq v'$, is replaced by $0$,
    \item each conjunction that contains some $v \in \valAlphabet$ and $S \in \nonTerminalAlphabet$ is replaced by $0$ (indeed $S$ defines either an object or an array),
    \item each conjunction that contains some $S, S' \in \nonTerminalAlphabet$ is replaced by $0$, whenever $S$ defines an object and $S'$ an array, or the contrary.
    \end{itemize}
    \item Each production $\nonTerminal_j \Coloneqq a_j e_j \bar a_j$ such that $e_j$ contains a Boolean expression $\phi_i$ equal to $0$ is replaced by $\nonTerminal_j \Coloneqq 0$, with one exception detailed hereafter. Each occurrence of $\nonTerminal_j$ in the right-hand side of other productions is thus replaced by $0$. The exception is $\nonTerminal_j \Coloneqq \lcrochet \emptyword \vee \phi_1 (\comma \phi_1)^* \rcrochet$ which is simplified into $\nonTerminal_j \Coloneqq \lcrochet \emptyword \rcrochet$ when $\phi_1$ is equal to $0$.
\end{enumerate}
After this second step, symbol $0$ does not appear in any DNF formula $\phi_i$ and some productions may have the form $\nonTerminal_j \Coloneqq 0$.

In the next steps $(iii)$ and $(iv)$, we want to modify the grammar in a way that operators $\neg$ and $\wedge$ disappear from the DNF formulas $\phi_i$.

\medskip
$(iii)$ The simplified DNF formulas $\phi_i$ have literals of the form $v$, $\neg v$, $\nonTerminal_j$, or $\neg \nonTerminal_j$, with $v \in \keyAlphabet$, $\nonTerminal_j \in \nonTerminalAlphabet$. We want to define each $\neg v$ and $\neg \nonTerminal_j$ by some additional new productions, in a way to replace them by those productions. 

\begin{enumerate}
\item For this purpose, we first add the productions defining the set of all JSON values as given in Lemma~\ref{lem:universal} (we recall that these productions use the non-terminal $\nonTerminalV$).
\item Let us explain how to define $\neg v$, with $v \in \valAlphabet$, by adequate new productions. We need to define all JSON values except $v$. This is possible by adapting the grammar of Lemma~\ref{lem:universal} and by using the non-terminal $\nonTerminalV$:
\begin{itemize}
    \item $\nonTerminalR_1 \Coloneqq v'$  for all  $v' \in \valAlphabet \setminus \{v\}$,
    \item $\nonTerminalR_2 \Coloneqq \laccol k_1 \nonTerminalV \comma k_2 \nonTerminalV\comma \ldots \comma k_n \nonTerminalV \raccol$ for all sequences $(k_1,\ldots,k_n)$, $n \geq 0$, of pairwise distinct keys of $\keyAlphabet$,
    \item $\nonTerminalR_3 \Coloneqq \lcrochet \emptyword \vee \nonTerminalV (\comma \nonTerminalV)^* \rcrochet$,
    \item $\nonTerminal_{\neg v} \Coloneqq \nonTerminalR_1 \vee \nonTerminalR_2 \vee \nonTerminalR_3$.
\end{itemize}
We can then replace each occurrence of $\neg v$ in the DNF formulas $\phi_i$ by $\nonTerminal_{\neg v}$. 
\item Let us now explain how to define $\neg \nonTerminal_j$, with $\nonTerminal_j \in \nonTerminalAlphabet$, by adequate productions. Recall that $\nonTerminal_j$ is the left-hand side of the production $\nonTerminal_j \Coloneqq a_j e_j \bar a_j$. Let us consider the particular example $a_je_j \bar a_j = \laccol k_1 \phi_1 \comma k_2 \phi_2 \raccol$ (the reader could infer the general case from this particular example). We need to define all JSON values except the ones defined by $\nonTerminal_j$. Defining $\neg S_j$ is done thanks to the following grammar: 
\begin{itemize}
    \item $\nonTerminalT_1 \Coloneqq \laccol k'_1 \nonTerminalV \comma k'_2 \nonTerminalV\comma \ldots \comma k'_n \nonTerminalV \raccol$ for all sequences $(k'_1,\ldots,k'_n)$, $n \geq 0$, of pairwise distinct keys of $\keyAlphabet$, except sequence $(k_1,k_2)$,
    \item $\nonTerminalT_1 \Coloneqq \laccol k_1 \neg\phi_1 \comma k_2 \phi_2 \raccol$
    \item $\nonTerminalT_1 \Coloneqq \laccol k_1 \phi_1 \comma k_2 \neg\phi_2 \raccol$
    \item $\nonTerminalT_1 \Coloneqq \laccol k_1 \neg\phi_1 \comma k_2 \neg\phi_2 \raccol$
    \item $\nonTerminalT_2 \Coloneqq v$  for all  $v \in \valAlphabet$,
    \item $\nonTerminalT_3 \Coloneqq \lcrochet \emptyword \vee \nonTerminalV (\comma \nonTerminalV)^* \rcrochet$,
    \item $\nonTerminal_{\neg \nonTerminal_{j}} \Coloneqq \nonTerminalT_1 \vee \nonTerminalT_2 \vee \nonTerminalT_3$.
\end{itemize}
We can then replace each occurrence of $\neg S_j$ in the DNF formulas $\phi_i$ by $\nonTerminal_{\neg \nonTerminal_{j}}$.
\end{enumerate}
With this step, several new productions have been added to the productions $S_j \Coloneqq a_j e_j \bar a_j$, with $S_j \in \nonTerminalAlphabet$. We again apply to those new productions the transformations described in $(i)$ and $(ii)$, and we replace each occurrence of $\neg v$ and $\neg \nonTerminal_j$ by $\nonTerminal_{\neg v}$ and $\nonTerminal_{\neg S_j}$ respectively. Notice there is no occurrence of neither $\neg \nonTerminalV$, nor $\neg \nonTerminalR_j$, nor $\neg \nonTerminalT_j$, $j=1,2,3$, in the new productions\footnote{The used Boolean operators are limited to \(\lor\).}, and therefore no need to define them by some productions. After this step, no production contains the negation of a primitive value or a non-terminal. We use the same notation $\nonTerminalAlphabet$ for the set of all non-terminals.

\medskip
$(iv)$ We now proceed to simplify the conjunctions away.
For each set of non-terminals \(\{S_{i_1}, \dotsc, S_{i_k}\} \subseteq \nonTerminalAlphabet\), we want to define \(\varphi = S_{i_1} \land \dotsc \land S_{i_k}\) by an additional new production with a left-hand side denoted by $\nonTerminal_{\varphi}$.
We have the following cases:
\begin{enumerate}
    \item If the set \(\{S_{i_1}, \dotsc, S_{i_k}\}\) contains some non-terminals  defining objects and some others defining arrays, then we define $\nonTerminal_{\varphi} \Coloneqq 0$.
    \item If the set contains only non-terminals defining objects, we can assume \(S_{i_{\ell}} \Coloneqq \laccol k_1^{\ell} \phi_1^{\ell} \comma \dotso \comma k_{m_{\ell}}^{\ell} \phi_{m_{\ell}}^{\ell} \raccol\).
        We have two possibilities:
        \begin{itemize}
            \item If all $S_{i_\ell}$ use the same sequence $(k_1,\ldots,k_m)$ of keys, then we set $\nonTerminal_\varphi \Coloneqq \laccol k_1 (\wedge_{\ell = 1}^k \phi_1^{\ell}) \comma \dotso \comma k_m (\wedge_{\ell = 1}^k \phi_m^{\ell}) \raccol$. 
            \item Otherwise, we set \(\nonTerminal_\varphi \Coloneqq 0\).
        \end{itemize}
    \item Suppose that the set contains only non-terminals defining arrays.
    \begin{itemize}
    \item In the case where all arrays have a fixed number of elements, we perform as in the previous item. 
    \item If these arrays all define an unbounded number of elements, i.e., \(\nonTerminal_{i_{\ell}} \Coloneqq \lcrochet \emptyword \lor \phi^{\ell} {(\comma \phi^{\ell})}^* \rcrochet\), then we set \(\nonTerminal_\varphi \Coloneqq \lcrochet \emptyword \lor (\wedge_{\ell = 1}^k \phi^{\ell}) {(\comma (\wedge_{\ell = 1}^k \phi^{\ell}))}^* \rcrochet\).
    \item Let us now focus on the case where some non-terminals define a fixed number $m$ of elements (which we can assume is the same for each non-terminal; otherwise, let \(\nonTerminal_\varphi \Coloneqq 0\)), and some other non-terminals define an unbounded number of elements.
    Then, we set \(\nonTerminal_\varphi \Coloneqq \lcrochet (\wedge_{\ell = 1}^k \phi^{\ell}) \comma \dotso \comma (\wedge_{\ell = 1}^k \phi^{\ell}) \rcrochet\), such that we define \(m\) elements.
    \end{itemize}
\end{enumerate}
Once the new productions $\nonTerminal_\varphi$ are defined for all conjunctions \(\varphi = S_{i_1} \land \dotsc \land S_{i_k}\), we replace by $\nonTerminal_\varphi$ every occurrence of $\varphi$ in the DNF formulas of the productions. Moreover, as the new productions with left-hand side $\nonTerminal_\varphi$ may contain occurrences of formulas of the form $\wedge_{\ell = 1}^k \phi^{\ell}$, we put those formulas in DNF and we also replace the resulting conjunctions $\varphi'$ by the non-terminal $\nonTerminal_{\varphi'}$. In this way, no production contains anymore neither negations nor conjunctions.

\medskip
$(v)$ Finally, here is the last step to obtain the required VPA $\automaton$ accepting $\languageOf{\grammar}$. We have a series of productions in our grammar, whose some have a right-hand side equal to $0$. We delete those productions. For the remaining ones, we rename by $\nonTerminalAlphabet = \{\nonTerminal_1,\ldots,\nonTerminal_n\}$ the resulting alphabet of non-terminals, and by $\nonTerminal_j \Coloneqq a_j e_j \bar a_j$ the production whose $\nonTerminal_j \in \nonTerminalAlphabet$ is the unique left-hand side, for all $j \in \{1,\ldots, n\}$.
We partition $\nonTerminalAlphabet$ into $\nonTerminalAlphabet^{\laccol} \cup \nonTerminalAlphabet^{\lcrochet}$, such that $\nonTerminal_j \in \nonTerminalAlphabet^{a_j}$ according to the value of $a_j$ in the production $\nonTerminal_j \Coloneqq a_j e_j \bar a_j$. Recall that the axiom of the original grammar $\grammar$ may have been replaced by several axioms in $\nonTerminalAlphabet$ by step $(i)$. Recall also that the DNF formulas appearing inside each $e_j$ are now disjunctions of symbols from $\nonTerminalAlphabet \cup \internalAlphabet$, that is, $e_j$ is a \emph{classical} regular expression over $\nonTerminalAlphabet \cup \internalAlphabet$ (that is, a regular expression using union, concatenation and Kleene-$*$ operation).\footnote{Steps $(ii)-(iv)$ are new compared to the proof of~\cite[Theorem 1]{DBLP:conf/www/KumarMV07}, due to the presence of Boolean operators $\neg$ and $\wedge$ in the grammar productions.}

It may happen that $\nonTerminalAlphabet$ contains no axiom $\nonTerminal_j$ (since each production $\nonTerminal_j \Coloneqq 0$, with $\nonTerminal_j$ being an axiom, has been deleted). In this case, it is easy to construct a VPA accepting the empty language. For the sequel, we thus suppose that this situation does not hold.

For each production $\nonTerminal_j \Coloneqq a_j e_j \bar a_j$, we construct a complete DFA $\mathcal{B}_j$ over the alphabet $\nonTerminalAlphabet \cup \internalAlphabet$ for the (classical) regular expression $e_j$. Specifically, let $\mathcal{B}_j = (Q_j,q^0_j,F_j,\delta_j)$ where $Q_j$ is the set of states, $q^0_j$ is the initial state, $F_j$ is the set of final states, and $\delta_j$ is the (total) transition function. If this automaton has bin states, we suppose that it is unique and denoted by $\bot_j$.
We then construct the cartesian product $\mathcal B$ of all the automata $\mathcal{B}_j$, $j \in \{1,\ldots, n\}$. 

The required VPA $\automaton$ over $\JSONpushdownAlphabet$ is actually a 1-SEVPA.\@
It has the form $\automaton = (\states, \JSONpushdownAlphabet, \stackAlphabet, \transitionFunction, \{q_0\}, \{q_f\})$ with $q_0 = (q^0_1, \ldots, q^0_j, \ldots, q^0_n)$, $\states = \times_{j=1}^n Q_j \cup \{q_f\}$ where $q_f$ is a new state, $\stackAlphabet = \states \times \{\laccol,\lcrochet\}$, and $\transitionFunction = \callFunction \cup \returnFunction \cup \internalFunction$ defined as follows.
\begin{itemize}
    \item The set of internal transitions $\internalFunction$ is composed of the transitions of $\mathcal{B}$ labeled by symbols in $\internalAlphabet$ (and thus not in $\nonTerminalAlphabet$).
    \item For each transition $(p,\nonTerminal_j, p')$ of $\mathcal B$ labeled by $\nonTerminal_j \in \nonTerminalAlphabet^a$, with $a \in \{\laccol,\lcrochet\}$, we add the following transitions in $\automaton$: 
    \begin{itemize}
        \item the transition $(p,a,q_0,(p,a)) \in \callFunction$ (after reading $a$, we jump to the initial state $q_0$ of $\automaton$ by pushing $(p,a)$ on the stack), and
        \item the transitions $(q,\bar a,(p,a),p') \in \returnFunction$ for all $q = (q_1,\ldots,q_n) \in Q$ such that $q_j$ is a final state in $\mathcal{B}_j$,  (at the end of a run in $\mathcal B$ that is accepting in $\mathcal{B}_j$, with $(p,a)$ on top of the stack of $\automaton$, after reading $\bar a$, we go to the state $p'$ of $\automaton$).
    \end{itemize}
    \item For each $j \in \{1,\ldots,n\}$ such that $\nonTerminal_j$ is an axiom, we add the call transition $(q_0,\laccol,q_0,(q_0,\laccol))$ to $\callFunction$ and the return transitions $(q,\raccol,(q_0,\laccol),q_f)$ to $\returnFunction$ for all $q = (q_1,\ldots,q_n) \in Q$ such that $q_j$ is a final state in $\mathcal{B}_j$.
\end{itemize}
Notice that the constructed 1-SEVPA may be non deterministic by the way the return transitions have been defined (it is deterministic when we only consider the internal and call transitions).

\medskip
Let us prove that $\automaton$ accepts $\languageOf{\grammar}$. To ease the writing of the proof, let us introduce a notation to designate the language of a disjunction over \(\nonTerminalAlphabet\).
Let \(\languageOf{\nonTerminal_{i_1} \lor \dotsb \lor \nonTerminal_{i_n}} = \bigcup_{j = 1}^n \languageOf{\nonTerminal_{i_j}}\), where each \(\nonTerminal_{i_j}\) is a non-terminal of \(\grammar\) and \(\languageOf{\nonTerminal_{i_j}}\) is the set of words that satisfy \(\nonTerminal_{i_j}\).
We also write \(\nonTerminal \in \phi\) when the non-terminal \(\nonTerminal\) is present in the disjunction \(\phi\).

To prove that $\automaton$ accepts $\languageOf{\grammar}$, we want to show the following claim.
\begin{claim}
For any non-terminal \(\nonTerminal_j \in \nonTerminalAlphabet^a\) (with \(a \in \{\laccol, \lcrochet\}\)) and any word \(a w \bar a \in \wellMatched\), it holds that \(a w \bar a \in \languageOf{\nonTerminal_j}\) if, and only if, in $\automaton$, there is some state \(q = (q_1, \dotsc, q_n)\) such that \((q_0, \emptyword) \xrightarrow{w} (q, \emptyword)\) and \(q_{j} \in F_{j}\).
\end{claim}
From that claim, we can derive that if \(\nonTerminal_j\) is an axiom of \(\grammar\), then \(a w \bar a\) satisfies \(\nonTerminal_j\) if and only if the run for \(w\) ends in a state \((q_1, \dotsc, q_n)\) in which \(q_j \in F_j\).
By construction of \(\automaton\), this is equivalent to reach the final state \(q_f\) after reading \(a w \bar a\) from the initial state $q_0$. Hence we conclude that the language of \(\automaton\) is the union of the languages described by each axiom, i.e., \(\automaton\) accepts \(\languageOf{\grammar}\).

Let \(a \in \{\laccol, \lcrochet\}, S_j \in \nonTerminalAlphabet^a\) be a non-terminal of \(\grammar\), and \(a w \bar a \in \wellMatched\).
We prove our claim by induction over the depth of \(w\) (recall that the depth of \(w\) is the maximal number of unmatched call symbols among the prefixes of \(w\)).

\emph{Base case}.
    The depth of \(w\) is zero, i.e., \(w \in \internalAlphabet^*\).
    By construction of \(\automaton\), there exists a run
    \begin{align*}
        &&(q_0, \emptyword) &\xrightarrow{w} (q, \emptyword) & \text{in \(\automaton\)}\\
        &\iff& q_0 &\xrightarrow{w} q & \text{in \(\automaton[B]\)}.
    \end{align*}
    Thus, by construction of \(\automaton[B], w \in \languageOf{\nonTerminal_j}\) if, and only if, \(q_j \in F_j\).

\emph{Induction step}.
    Let \(d \in \N\) and assume the claim holds for every word of depth \(\leq d\).
    Let \(w \in \wellMatched\) of depth \(d + 1\).
    We suppose that the production of \(\nonTerminal_j\) is of the shape \begin{eqnarray} \label{eq}
        \nonTerminal_j &\Coloneqq& a u_1 \phi_1 \dotso u_m \phi_m u_{m + 1} \bar a
    \end{eqnarray} with each \(u_{\ell} \in \internalAlphabet^*\) and \(\phi_j\) is a disjunction over non-terminals of \(\nonTerminalAlphabet\).
    Notice that this covers the definition of objects, and of arrays of shape \(\lcrochet \phi_1 \comma \dotsc \comma \phi_1 \rcrochet\).
    The case for the second definition of arrays (where \(\nonTerminal_j \Coloneqq \lcrochet \emptyword \lor \phi_1 {(\comma \phi_1)}^* \rcrochet\)) can be handled in a similar fashion.
    We prove the equivalence stated in the claim by showing both directions.

\emph{\(\Rightarrow\)}
    Assume \(a w \bar a \in \languageOf{\nonTerminal_j}\).
    We want to prove that there is a run in $\automaton$ reading \(w\) that ends in a state \(q = (q_1, \dotsc, q_n)\) such that \(q_j \in F_j\).

    As \(a w \bar a \in \languageOf{\nonTerminal_j}\), see (\ref{eq}), we can decompose 
    \[w = u_1 b_1 w_1 \bar b_1 \dotso u_m b_m w_m \bar b_m u_{m + 1}\]
    such that \(b_{\ell} w_{\ell} \bar b_{\ell} \in \languageOf{\phi_{\ell}}\) and $w_\ell$ has depth $\leq d$, for each \(\ell \in \{1, \dotsc, m\}\).
    That is, there exists a non-terminal \(\nonTerminal_{h_{\ell}} \in \phi_{\ell}\) such that \(b_{\ell} w_{\ell} \bar b_{\ell} \in \languageOf{\nonTerminal_{h_{\ell}}}\).
    By the induction hypothesis, we thus have states \(r^1, \dotsc, r^m\) such that \((q_0, \emptyword) \xrightarrow{w_{\ell}} (r^{\ell}, \emptyword)\) and \(r^{\ell}_{h_{\ell}} \in F_{h_{\ell}}\) for each \(\ell \in \{1, \dotsc, m\}\).
    Then, we have the following run in \(\automaton\), with \(p^{\ell}, s^{\ell} \in Q\) and \(\alpha_{\ell} \in \stackAlphabet\)
    \begin{align*}
        (q_0, \emptyword)
            &\xrightarrow{u_1}      (p^1, \emptyword)
             \xrightarrow{b_1}      (q_0, \alpha_1)
             \xrightarrow{w_1}      (r^1, \alpha_1)
             \xrightarrow{\bar b_1} (s^1, \emptyword) \xrightarrow{u_2}      \dotso\\
            &\dotso
            (p^m, \emptyword)
            \xrightarrow{b_m}      (q_0, \alpha_m)
             \xrightarrow{w_m}      (r^m, \alpha_m)
             \xrightarrow{\bar b_m} (s^m, \emptyword)
             \xrightarrow{u_{m+1}}  (q, \emptyword)
    \end{align*}
    with the following equivalent run in \(\automaton[B]\)
    \begin{align*}
        q_0
            &\xrightarrow{u_1}                      p^1
             \xrightarrow{\nonTerminal_{h_{1}}}     s^1
             \xrightarrow{u_2}                      \dotso
             \xrightarrow{u_m}                      p^m
             \xrightarrow{\nonTerminal_{h_{m}}}     s^m
             \xrightarrow{u_{m+1}}                  q.
    \end{align*}
    By construction of $\automaton[B]$, as \(a w \bar a \in \languageOf{\nonTerminal_j}\), it must be that \(u_1 \nonTerminal_{h_{1}} u_2 \dotso u_m \nonTerminal_{h_m} u_{m + 1} \in \languageOf{{\automaton[B]}_j}\), which is equivalent to \(q_j \in F_{j}\).

\emph{\(\Leftarrow\)}
    Assume there is a state \(q = (q_1, \dotsc, q_n)\) such that \(q_j \in F_j\) and \((q_0, \emptyword) \xrightarrow{w} (q, \emptyword)\).
    We show that \(a w \bar a \in \languageOf{\nonTerminal_j}\).
    We decompose
    \[w = u'_1 b'_1 w'_1 \bar b'_1 \dotso u'_t b'_t w'_t \bar b'_t u'_{t+1}\]
    with \(t \geq 1\) (as the depth of \(w\) is positive), \(u'_\ell \in \internalAlphabet^*\), and \(b'_\ell w'_\ell \bar b'_\ell \in \wellMatched\) for each $\ell$.
    Thus, we can decompose the given run into
        \begin{align*}
        (q_0, \emptyword)
            &\xrightarrow{u'_1}      (p^1, \emptyword)
             \xrightarrow{b'_1}      (q_0, \alpha_1)
             \xrightarrow{w'_1}      (r^1, \alpha_1)
             \xrightarrow{\bar b'_1} (s^1, \emptyword) \xrightarrow{u'_2}      \dotso\\
            &\dotso
            (p^t, \emptyword)
            \xrightarrow{b'_t}      (q_0, \alpha_t)
             \xrightarrow{w'_t}      (r^t, \alpha_t)
             \xrightarrow{\bar b'_t} (s^t, \emptyword)
             \xrightarrow{u'_{t+1}}  (q, \emptyword).
    \end{align*}

As \(q_j \in F_j\), it must be that \(q_j \neq \bot_j\). By construction of \(\automaton[A]\) and $\automaton[B]$, we then have \(p^\ell_j \neq \bot_j\) and \(s^\ell_j \neq \bot_j\) for every \(\ell \in \{1, \dotsc, t\}\). 
    Moreover, for each \(\ell \in \{1,\ldots,t\}\), there must exist a non-terminal \(\nonTerminal_{h_{\ell}}\) such that \(r^{\ell}_{h_{\ell}} \in F_{h_{\ell}}\).
    We thus have a corresponding run in \(\automaton[B]\)
        \begin{align*}
        q_0
            &\xrightarrow{u'_1}                      p^1
             \xrightarrow{\nonTerminal_{h_{1}}}     s^1
             \xrightarrow{u'_2}                      \dotso
             \xrightarrow{u'_t}                      p^t
             \xrightarrow{\nonTerminal_{h_{t}}}     s^t
             \xrightarrow{u'_{t+1}}                  q.
    \end{align*}
    
    By the induction hypothesis, it follows that \(b'_{\ell} w'_{\ell} \bar b'_{\ell} \in \languageOf{\nonTerminal_{h_{\ell}}}\) for all $\ell$.
    Recall that, by our assumption, the latter run, seen in $\automaton[B]_j$, leads to the final state $q_j \in F_j$.
    That is, \(u'_1 \nonTerminal_{h_1} \dotso u'_t \nonTerminal_{h_t} u'_{t + 1}\) belongs to \(\languageOf{\automaton[B]_j}\).
    By~\eqref{eq}, this means that \(t = m\), each $u'_\ell$ is equal to $u_\ell$, and each $S_{h_\ell}$ belongs to $\phi_\ell$.
    Therefore $a w \bar a$ belongs to $\languageOf{\nonTerminal_j}$.
    The claim is thus proved. As stated before, we then deduce that \(\automaton\) accepts \(\languageOf{\grammar}\).

\medskip We have thus proved the first statement of Theorem~\ref{thm:VPAforJSON}. This theorem also states that given a grammar $\grammar$ and an order $<$ over $\keyAlphabet$, a VPA $\automaton$ can be constructed such that $\languageOf{\automaton} = \languageOrderedOf{\grammar}$. The construction is exactly the same as done above, by taking into account the fixed order imposed on $\keyAlphabet$ in the productions defining objects.
\qed\end{proof}

While this proves that there exists a VPA accepting the same set of documents as a JSON schema, this is not a constructive approach.
Indeed, we do not explain how to go from a schema to a grammar, and this is not trivial due to how Boolean operations are defined inside a schema.
Moreover, to get this automaton as explained in the proof of Theorem~\ref{thm:VPAforJSON}, the given grammar is modified through several steps that may lead to a huge number of productions, hence to a huge 1-SEVPA.\@
Therefore, we rely on a learning algorithm to build a 1-SEVPA (see \Cref{sec:definitions:learning}).
The learning process itself also has the advantage of being agnostic of the semantics of a schema which is still changing and being debated: only the classical validator needs a version of the semantics.

\section{Validation of JSON Documents}\label{sec:validation}

One important computational problem related to JSON schemas consists in determining whether a JSON document~$J$ satisfies a schema $\schema$.
In this section, we provide a \emph{streaming} algorithm that validates a JSON document against a JSON schema.
By \enquote{streaming}, we mean an algorithm that performs the validation test by processing the document in a single pass, symbol by symbol, and by using a limited amount of memory with respect to the size of the given document.

Our approach is new and works as follows. Given a JSON schema $\schema$, we learn the minimal 1-SEVPA $\automaton$ accepting the language $\languageOf{\automaton}$ equal to the set of all JSON documents $J$ satisfying $S$ and respecting a given order $<$ on $\keyAlphabet$. We know that this is possible as explained in the previous section. Unfortunately, checking whether a JSON document $J$ satisfies the JSON schema $\schema$ does not amount to checking whether $J \in \languageOf{\automaton}$ as the key-value pairs inside the objects of $J$ can be arbitrarily ordered. Instead, we design a streaming algorithm that uses $\automaton$ in a clever way to allow arbitrary orders of key-value pairs. To do this, we use a \emph{key graph} defined and shown to be computable in the sequel.
Then, we describe our validation algorithm and study its complexity.

Henceforth we fix a schema $\schema$ given by a closed extended CFG $\grammar$, an order $<$ on $\keyAlphabet$, and a 1-SEVPA $\automaton = (\states, \JSONpushdownAlphabet, \stackAlphabet, \transitionFunction, \{q_0\}, \finalStates)$ accepting $\languageOrderedOf{\grammar}$.

\subsection{Key Graph}\label{subsec:KeyGraph}

In this section, w.l.o.g.\ we suppose that $\automaton$ has \emph{no bin states}. 
Let $\transitionSystem$ be the transition system of the 1-SEVPA $\automaton$.
We explain how to associate to $\automaton$ a particular graph $\keyGraph$, called \emph{key graph}, abstracting the paths of $\transitionSystem$ labeled by the contents of the objects appearing in words of $\languageOrderedOf{\grammar}$.
This graph is essential in our validation algorithm.

\begin{definition}\label{def:keyGraph} 
The \emph{key graph} $\keyGraph$ of $\automaton$ has:
\begin{itemize}
\item the vertices $(p,k,p')$ with $p,p' \in \states$ and $k \in \keyAlphabet$ if there exists in $\transitionSystem$ a path $\langle p,\emptyword \rangle \xrightarrow{kv} \langle p',\emptyword \rangle$ with $v \in \valAlphabet \cup \{a u \bar a \mid a \in \callAlphabet, u \in \wellMatched[\JSONpushdownAlphabet]\}$,\footnote{Notice that each vertex $(p,k,p')$ of $\keyGraph$ only stores the key $k$ and not the word $kv$.}
\item the edges $((p_1,k_1,p'_1),(p_2,k_2,p'_2))$ if there exists $(p'_1,\comma,p_2) \in \internalFunction$.
\end{itemize}
\end{definition}

We have the following property. 
\begin{lemma}\label{lem:abstractKeyGraph}
There exists a path $((p_1,k_1,p'_1)(p_2,k_2,p'_2)\ldots (p_n,k_n,p'_n))$ in $\keyGraph$ with $p_1 = q_0$ if and only if there exist 
\begin{itemize}
    \item a word $u = k_1 v_1 \comma k_2 v_2 \comma \ldots \comma k_n v_n$ such that each $k_iv_i$ is a key-value pair and $u$ is a factor of a word in $\languageOrderedOf{\grammar}$, 
    \item and a path $\langle q_0, \emptyword \rangle \xrightarrow{u} \langle p'_n, \emptyword \rangle$ in $\transitionSystem$ that decomposes as follows:
    \begin{gather*}
        \langle p_i,\emptyword \rangle \xrightarrow{k_iv_i} \langle p'_i,\emptyword \rangle, \forall i \in \{1,\ldots,n\}\\
        \shortintertext{ and }
        \langle p'_i,\emptyword \rangle \xrightarrow{\comma} \langle p_{i+1},\emptyword \rangle, \forall i \in \{1,\ldots,n-1\}.
    \end{gather*}
\end{itemize}
\end{lemma}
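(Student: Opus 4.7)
The plan is to prove the two directions separately, since both reduce to a direct translation between vertices/edges of $\keyGraph$ and the steps of a run in $\transitionSystem$. For the $(\Leftarrow)$ direction, given $u = k_1 v_1 \comma \ldots \comma k_n v_n$ and the stated decomposed run, each sub-run $\langle p_i, \emptyword \rangle \xrightarrow{k_i v_i} \langle p'_i, \emptyword \rangle$ is exactly the witness required by \Cref{def:keyGraph} for $(p_i, k_i, p'_i)$ being a vertex: the value $v_i$ of a key-value pair is either a basic value in $\valAlphabet$ or a well-matched object/array of the form $a u' \bar a$ with $a \in \callAlphabet$. Each intermediate step $\langle p'_i, \emptyword \rangle \xrightarrow{\comma} \langle p_{i+1}, \emptyword \rangle$ is realized by an internal transition $(p'_i, \comma, p_{i+1}) \in \internalFunction$, which is exactly the witness needed for an edge. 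Reading these off yields the required path in $\keyGraph$, and the starting configuration forces $p_1 = q_0$.

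For the $(\Rightarrow)$ direction I would reverse this construction. Each vertex $(p_i, k_i, p'_i)$ of the given path supplies, by \Cref{def:keyGraph}, a value $v_i$ and a sub-run $\langle p_i, \emptyword \rangle \xrightarrow{k_i v_i} \langle p'_i, \emptyword \rangle$; each edge supplies a $\comma$-transition $(p'_i, \comma, p_{i+1}) \in \internalFunction$. Setting $u = k_1 v_1 \comma \ldots \comma k_n v_n$ and concatenating these building blocks with the $\comma$-steps in between produces a path $\langle q_0, \emptyword \rangle \xrightarrow{u} \langle p'_n, \emptyword \rangle$ with exactly the decomposition stated in the lemma, using $p_1 = q_0$ for the starting configuration.

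The remaining and most delicate part, which I expect to be the main obstacle, is the side claim that $u$ is a factor of some word of $\languageOrderedOf{\grammar}$. Since $\automaton$ has no bin states, $\languageOrderedOf{\grammar} \neq \emptyset$, and because every accepted JSON document starts with $\laccol$, there is a call transition from $q_0$ on $\laccol$ that, being in a 1-SEVPA, targets $q_0$ and pushes $\gamma = (q_0, \laccol)$. Since $u$ is well-matched, the run above lifts to $\langle q_0, \gamma \rangle \xrightarrow{u} \langle p'_n, \gamma \rangle$, so $\laccol u$ is realizable from $\langle q_0, \emptyword \rangle$. To complete $\laccol u$ into an accepted word I would use the no-bin-state property of $p'_n$: it yields a witnessing accepting run through $p'_n$ at some stack configuration, and the subtle point is that this configuration need not coincide with $\gamma$. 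I would handle this by cutting that witness at the first moment $p'_n$ occurs with a stack whose top symbol is $\gamma$ (such a configuration exists because $\gamma$ encodes precisely the call context in which $p'_n$ is reached after reading object content from $q_0$) and grafting its remaining suffix onto our run, obtaining a $\beta$ with $\laccol u \beta \in \languageOrderedOf{\grammar}$ and hence $u$ as a factor.
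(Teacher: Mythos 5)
Your two structural directions are fine and match the paper's: the backward direction is indeed the easy reading-off of vertices and edges, and the forward direction correctly assembles the run $\langle q_0,\emptyword\rangle \xrightarrow{u} \langle p'_n,\emptyword\rangle$ from the witnesses attached to the vertices and edges. The genuine gap is exactly where you expect it --- showing that $u$ is a factor of a word of $\languageOrderedOf{\grammar}$ --- and your proposed fix does not work. By forcing the context of $u$ to be a single top-level $\laccol$ read from the initial configuration, you need $\langle p'_n, (q_0,\laccol)\rangle$ to admit an accepting continuation; but the absence of bin states only guarantees an accepting run through $p'_n$ with \emph{some} stack $\stack$, and your claim that some accepting run must visit $p'_n$ with stack top $(q_0,\laccol)$ is false. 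The paper's own running example refutes it: for the 1-SEVPA of \Cref{fig:vpa_for_json}, the key-graph path $((q_0,\texttt{name},q_6)(q_7,\texttt{year},q_9))$ starts at $q_0$ and yields $u = \texttt{name}\,\texttt{s}\comma\texttt{year}\,\texttt{i}$, yet $q_9$ only ever occurs in accepting runs with stack top $(q_4,\laccol)$, its unique return transition pops $(q_4,\laccol)$, and no word $\laccol u\beta$ is accepted: $u$ is the content of a \emph{nested} object, so the only accepting context is $\laccol\texttt{title}\,\texttt{s}\comma\texttt{conference}\,\laccol\, u\,\raccol\raccol$, not one beginning with a bare $\laccol$. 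The configuration you propose to cut at simply never occurs.

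The repair is to keep the witness's own context instead of imposing $\laccol$. Take the accepting run $\langle q_0,\emptyword\rangle \xrightarrow{t} \langle p'_n,\stack\rangle \xrightarrow{t'} \langle q,\emptyword\rangle$ with $q\in\finalStates$ given by the no-bin-state property, and write $t = t''\,t_{m+1}$ where $t''$ ends with the last unmatched call symbol of $t$ and $t_{m+1}\in\wellMatched$. Since $\automaton$ is a 1-SEVPA, every call transition targets $q_0$, so the run sits at $\langle q_0,\stack\rangle$ right after $t''$. As $u$ is well-matched and $\langle q_0,\emptyword\rangle\xrightarrow{u}\langle p'_n,\emptyword\rangle$, you may substitute $u$ for $t_{m+1}$ and obtain the accepted word $t''ut'$, which exhibits $u$ as a factor of a word of $\languageOrderedOf{\grammar}$. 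This substitution into the witness's own call context is the step your graft was missing.
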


\noindent
Hence, paths in $\keyGraph$ focus on contents of objects being part of JSON documents satisfying $\schema$. Moreover, they abstract paths in $\transitionSystem$ in the sense that only keys $k_i$ are stored and the subpaths labeled by the values $v_i$ are implicit.

\begin{example}
    \begin{figure}
        \centering
        \begin{tikzpicture}[
    automaton,
    node distance=1.1cm and 3cm,
]
    \node [state, initial]                  (q0)    {\(q_0\)};
    \node [state, above=of q0]              (q1)    {\(q_1\)};
    \node [state, right=of q1]              (q2)    {\(q_2\)};
    \node [state, right=of q2]              (q3)    {\(q_3\)};
    \node [state, right=of q3]              (q4)    {\(q_4\)};
    \node [state, right=of q0]              (q10)   {\(q_{5}\)};
    \node [state, right=of q10]             (q11)   {\(q_{6}\)};
    \node [state, right=of q11]             (q12)   {\(q_{7}\)};
    \node [state, below=of q12]             (q13)   {\(q_{8}\)};
    \node [state, left=of q13]              (q14)   {\(q_{9}\)};
    \node [state, left=of q14]              (q15)   {\(q_{10}\)};
    \node [state, left=of q15, accepting]   (q16)   {\(q_{11}\)};

    \foreach \start/\lbl/\target/\option in {
        q0/\texttt{title}/q1/,%
        q1/\texttt{s}/q2/,%
        q2/\(\comma\)/q3/,%
        q3/\texttt{conference}/q4/,%
        q0/\texttt{name}/q10/,%
        q10/\texttt{s}/q11/,%
        q11/\(\comma\)/q12/,%
        q12/\texttt{year}/q13/,%
        q13/\texttt{i}/q14/'%
    }
    {
        \path (\start)  edge [\option]  node {\lbl} (\target);
    }

    \path
        (q14)   edge [']    node {\(\raccol, (q_4, \laccol)\)}  (q15)
        (q15)   edge [']    node {\(\raccol, (q_0, \laccol)\)}  (q16)
    ;
\end{tikzpicture}
        \caption{A 1-SEVPA for the schema from \Cref{fig:json_schema}, without considering the key \texttt{keywords}.}%
        \label{fig:vpa_for_json}

\vspace{.5cm}

        \begin{tikzpicture}[
    automaton,
    node distance=1cm and 5cm,
]
    \node [state]                       (q0 title q2)       {\(q_0, \texttt{title}, q_2\)};
    \node [state, right=of q0 title q2] (q3 conference q10) {\(q_3, \texttt{conference}, q_{10}\)};
    \node [state, below=of q0 title q2] (q0 name q6)        {\(q_0, \texttt{name}, q_6\)};
    \node [state, right=of q0 name q6]  (q7 year q9)        {\(q_7, \texttt{year}, q_9\)};

    \path
        (q0 title q2)   edge    (q3 conference q10)
        (q0 name q6)    edge    (q7 year q9)
    ;
\end{tikzpicture}
        \caption{The key graph for the 1-SEVPA from \Cref{fig:vpa_for_json}.}%
        \label{fig:key_graph_for_json}
    \end{figure}

    Consider the schema from \Cref{fig:json_schema}, without the key \verb!keywords! and its associated schema. It is defined by the following closed CFG $\grammar$ (where the related productions with key permutations are not indicated, see Example~\ref{ex:abstractJSONschema}): 
    \begin{eqnarray*}
\nonTerminal_0 &\Coloneqq& \laccol \texttt{title} ~ \nonTerminal_1 \comma \texttt{conference} ~\nonTerminal_2 \raccol \\
\nonTerminal_1 &\Coloneqq&  \verb!s!\\
\nonTerminal_2 &\Coloneqq& \laccol \texttt{name} ~ \nonTerminal_1 \comma \texttt{year} ~\nonTerminal_3 \raccol \\
\nonTerminal_3 &\Coloneqq&  \verb!i! 
\end{eqnarray*}

    A 1-SEVPA $\automaton$ accepting $\languageOrderedOf{\grammar}$ is given in \Cref{fig:vpa_for_json}. For clarity, call transitions\footnote{Recall the particular form of call transitions and stack alphabet for 1-SEVPAs, see Definition~\ref{def:1SEVPA}.} and the bin state are not represented.
    In Figure~\ref{fig:key_graph_for_json}, we depict its corresponding key graph \(\keyGraph\).
    Since we have the path \(\langle q_0, \emptyword \rangle \xrightarrow{\texttt{title} ~ \texttt{s}} \langle q_2, \emptyword \rangle\) in $\transitionSystem$, the triplet \((q_0, \texttt{title}, q_2)\) is a vertex of \(\keyGraph\).
    Likewise, \((q_0, \texttt{name}, q_6)\) and \((q_7, \texttt{year}, q_9)\) are vertices.
    Since we have the path 
    \[
        \langle q_4, \emptyword \rangle \xrightarrow{\laccol} \langle q_0, (q_4, \laccol) \rangle   \xrightarrow{\texttt{name} ~ \texttt{s} ~\comma~ \texttt{year} ~ \texttt{i}} \langle q_9, (q_4, \laccol) \rangle  \xrightarrow{\raccol} \langle q_{10}, \emptyword \rangle,
    \]
    the triplet \((q_3, \texttt{conference}, q_{10})\) is also a vertex of $\keyGraph$. Finally, since \(\langle q_2, \emptyword \rangle \xrightarrow{\comma} \langle q_3, \emptyword \rangle\), we have an edge from \((q_0, \texttt{title}, q_2)\) to \((q_3, \texttt{conference}, q_{10})\).
\end{example}

\begin{proof}[of \Cref{lem:abstractKeyGraph}]
We only prove one implication, the other being easily proved. Suppose that there exists a path $((p_1,k_1,p'_1)(p_2,k_2,p'_2)\ldots (p_n,k_n,p'_n))$ in $\keyGraph$ with $p_1 = q_0$. Then by definition of $\keyGraph$, there exists in $\transitionSystem$ a path 
\begin{eqnarray}\label{eq:path}
\langle q_0, \emptyword \rangle \xrightarrow{u} \langle p'_n, \emptyword \rangle    
\end{eqnarray}
with $u = k_1 v_1 \comma k_2 v_2 \comma \ldots \comma k_n v_n$ such that $v_i \in \valAlphabet \cup \{a u \bar a \mid a \in \callAlphabet, u \in \wellMatched[\JSONpushdownAlphabet]\}$ for all $i$.
As $p'_n$ is not a bin state, there exists another path in $\transitionSystem$
\begin{eqnarray} \label{eq:noBinState}
    \left\langle q_0, \emptyword \right\rangle \xrightarrow{t} \left\langle p'_n, \stack \right\rangle \xrightarrow{t'} \left\langle q, \emptyword \right\rangle
\end{eqnarray}
with $q \in \finalStates$. Let us decompose $t$ in terms of its unmatched call symbols, that is, $t = t_1a_1t_2a_2 \ldots t_m a_m t_{m+1}$ such that $m \geq 0$ and $t_i \in \wellMatched[\JSONpushdownAlphabet], a_i \in \callAlphabet$ for all $i$. It follows that in~\eqref{eq:noBinState}, $|\stack| = m$. Therefore, with $t = t'' t_{m+1}$ and since $\automaton$ is a 1-SEVPA, we get the path $\left\langle q_0, \emptyword \right\rangle \xrightarrow{t''} \left\langle q_0, \stack \right\rangle \xrightarrow{t_{m+1}} \left\langle p'_n, \stack \right\rangle \xrightarrow{t'} \left\langle q, \emptyword \right\rangle$. In the latter path, by (\ref{eq:path}), we can replace $t_{m+1}$ by $u$ showing that the word $t''ut'$ belongs to $\languageOf{\automaton} = \languageOrderedOf{\grammar}$. Moreover, as each $v_i \in \valAlphabet \cup \{a u \bar a \mid a \in \callAlphabet, u \in \wellMatched[\JSONpushdownAlphabet]\}$, it follows that $k_iv_i$ is a key-value pair for all $i$.
\qed\end{proof}

From \Cref{lem:abstractKeyGraph}, we get that the key graph contains a finite number of paths.
\begin{corollary}\label{cor:acyclic}
In the key graph 
$\keyGraph$, there is no path $((p_1,k_1,p'_1)
\ldots (p_n,k_n,p'_n))$ with $p_1 = q_0$ such that $k_i = k_j$ for some $i \neq j$.
\end{corollary}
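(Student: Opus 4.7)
The plan is to transfer the statement from the abstract key graph back to concrete JSON documents via Lemma~\ref{lem:abstractKeyGraph}, and then conclude using the semantics of $\languageOrderedOf{\grammar}$.

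Given a path $((p_1,k_1,p'_1)\ldots(p_n,k_n,p'_n))$ in $\keyGraph$ with $p_1 = q_0$, I would apply Lemma~\ref{lem:abstractKeyGraph} to obtain a word $u = k_1 v_1 \comma k_2 v_2 \comma \ldots \comma k_n v_n$, where each $k_i v_i$ is a key-value pair and $u$ is a factor of some JSON document $J \in \languageOrderedOf{\grammar}$. By inspecting the productions of Definition~\ref{def:JSONsyntax}, a sequence of key-value pairs separated by commas can only occur in $J$ as the body of a single object $\laccol k_1 v_1 \comma \ldots \comma k_n v_n \raccol$: key-value pairs do not appear outside of objects, and the commas at the top level of arrays separate elements (which are values) rather than key-value pairs. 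Hence $u$ is the content of one object of $J$, and since $J$ belongs to the \emph{ordered} language $\languageOrderedOf{\grammar}$, the keys inside this object respect the fixed order $<$ on $\keyAlphabet$. Therefore $k_1 < k_2 < \ldots < k_n$, which in particular gives $k_i \neq k_j$ whenever $i \neq j$, as required.

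The main point to justify carefully is the identification of $u$ as the body of exactly one object, rather than a fragment spanning several unrelated contexts or nesting levels. This hinges on the fact that, in our abstract JSON alphabet, commas have a unique syntactic role in Definition~\ref{def:JSONsyntax}: they either separate consecutive key-value pairs inside the same object, or they separate consecutive elements inside the same array. Since our factor $u$ interleaves symbols of $\keyAlphabet$ (marking key-value pairs) with commas, neither an array production nor a splice across distinct siblings could generate it, so only an object production can. Once this identification is in place, the conclusion is immediate from the definition of $\languageOrderedOf{\grammar}$, and no further combinatorial argument on runs of $\automaton$ is needed.
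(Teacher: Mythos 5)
Your proof is correct and follows essentially the same route as the paper: both apply Lemma~\ref{lem:abstractKeyGraph} to transfer the path to a factor $u = k_1 v_1 \comma \ldots \comma k_n v_n$ of a word in $\languageOrderedOf{\grammar}$ and then conclude from the syntactic constraints on objects. The paper invokes pairwise distinctness of keys inside an object directly (and leaves the identification of $u$ with an object's content implicit), whereas you additionally derive $k_1 < \cdots < k_n$ from the fixed key order --- a harmless variation of the same argument.
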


\begin{proof}
Assume the opposite. By \Cref{lem:abstractKeyGraph}, there exists a path $\langle q_0, \emptyword \rangle \xrightarrow{u} \langle p'_n, \emptyword \rangle$ in $\transitionSystem$ such that $u = k_1 v_1 \comma k_2 v_2 \comma \ldots \comma k_n v_n$ is factor of a word in $\languageOrderedOf{\grammar}$. This is impossible because keys must be pairwise distinct inside objects appearing in JSON documents.
\qed\end{proof}

\begin{lemma}\label{lem:sizeKeyGraph}
The key graph $\keyGraph$ has $\complexity(\automatonSize^2 \cdot |\keyAlphabet|)$ vertices. Moreover, visiting all vertices along all the paths of $\keyGraph$ that start from a vertex $(p,k,p')$ such that $p = q_0$ is in $\complexity(|Q \times \keyAlphabet|^{|\keyAlphabet|})$.
\end{lemma}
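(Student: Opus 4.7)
The first bound is immediate from Definition~\ref{def:keyGraph}: every vertex of $\keyGraph$ is a triple in $\states \times \keyAlphabet \times \states$, so there are at most $|\states|^2 \cdot |\keyAlphabet| = \automatonSize^2 \cdot |\keyAlphabet|$ of them.

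For the second bound, my plan is to combine the acyclicity result from Corollary~\ref{cor:acyclic} with the determinism of $\automaton$. First, Corollary~\ref{cor:acyclic} ensures that any path in $\keyGraph$ starting at a vertex whose first component is $q_0$ has pairwise distinct keys, hence length at most $|\keyAlphabet|$. Second, the determinism of $\automaton$ sharply constrains the branching along such a path. If $(p_i, k_i, p'_i)$ and $(p_{i+1}, k_{i+1}, p'_{i+1})$ are consecutive vertices, then by Definition~\ref{def:keyGraph} we must have $(p'_i, \comma, p_{i+1}) \in \internalFunction$; because $\automaton$ is deterministic and $\comma$ is an internal symbol, the state $p_{i+1}$ is uniquely determined by $p'_i$. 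Consequently, given the current vertex, the only freedom in choosing the next vertex lies in the pair $(k_{i+1}, p'_{i+1})$, yielding at most $|\states| \cdot |\keyAlphabet|$ possibilities. The starting vertex likewise offers at most $|\states| \cdot |\keyAlphabet|$ possibilities, since $p_1 = q_0$ is fixed and only $(k_1, p'_1)$ remains to be chosen.

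Putting these observations together, the number of paths of length at most $|\keyAlphabet|$ starting from a vertex with first component $q_0$ is bounded by $(|\states| \cdot |\keyAlphabet|)^{|\keyAlphabet|}$, and the total number of vertex visits along all of them is at most $|\keyAlphabet|$ times this quantity, which fits in $\complexity(|\states \times \keyAlphabet|^{|\keyAlphabet|})$ once the polynomial factor is absorbed into the dominant exponential. I do not anticipate any serious obstacle: the argument is essentially combinatorial bookkeeping. The one step that must not be overlooked is the appeal to determinism to cut the per-step branching from the naive $|\states|^2 \cdot |\keyAlphabet|$ (freely choosing all three components of the next triple) down to $|\states| \cdot |\keyAlphabet|$; without this reduction the exponent would be essentially doubled and the claimed bound would fail.
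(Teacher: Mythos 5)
Your proof follows essentially the same route as the paper's: the vertex count is immediate, Corollary~\ref{cor:acyclic} bounds path lengths by $|\keyAlphabet|$, and determinism (via the $\comma$-transitions) bounds the branching at each vertex by $\alpha = |Q \times \keyAlphabet|$, with $\alpha$ choices of starting vertex. The only difference is the final bookkeeping: the paper counts the nodes of the unfolding tree via the geometric sum $\frac{\alpha^{|\keyAlphabet|}-1}{\alpha-1}$ (times $\alpha$ starting vertices), which lands exactly inside the stated bound, whereas your ``number of paths times path length'' count carries an extra factor $|\keyAlphabet|$ that is not literally absorbed by the $\complexity$-notation when $|\keyAlphabet|$ is a parameter --- counting tree nodes (each visited once by the traversal) instead of path--length products removes this slack.
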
 

\begin{proof}
The first statement is trivial. Let us prove the second one. First, notice that a given vertex $(p,k,p')$ has at most $\alpha = |\keyAlphabet \times Q|$ successors $(q,k',q')$ as $\automaton$ is deterministic. Second, the length of a longest path in $\keyGraph$ is bounded by $|\keyAlphabet|$ by \Cref{cor:acyclic}. Third, the number of vertices in the longest paths starting with the vertex $(q_0,k,p')$ is bounded by $\frac{\alpha^{|\keyAlphabet|} - 1}{\alpha - 1}$. As there are $\alpha$ potential starting vertices $(q_0,k,p')$, the announced upper bound follows. 
\qed\end{proof}

\subsection{Computation of the Key Graph}\label{subsec:KeyGraphComputation}

In this section, we provide a polynomial time algorithm to compute $\keyGraph$ from $\automaton$. First, we explain how to compute the reachability relation $\accRelation \subseteq Q^2$:

\begin{itemize}
    \item Initially, $\accRelation$ is the transitive closure of $\identity{\states} \cup \{(q,q') \mid \exists (q,a,q') \in \internalFunction \}$.
    \item Repeat until $\accRelation$ does not change:
    \begin{itemize}
        \item Add to $\accRelation$ all elements $(q,q')$ such that there exist $(q,a,p,\gamma) \in \callFunction$, $(p,p') \in \accRelation$, and $(p',\bar a,\gamma,q') \in \returnFunction$,
        \item Close $\accRelation$ with its transitive closure.
    \end{itemize}
\end{itemize}
This process terminates as the set $\states^2$ is finite. It is easy to see that $\accRelation$ can be computed in time polynomial in $|\states|$ and $|\transitionFunction|$.

Second, recall that $\automaton$ can have bin states. We compute them, if there are any, and we remove them and the related transitions from $\automaton$. A polynomial time algorithm detecting the bin states is given in \Cref{app:ComplexityKeyGraph}.

Finally, let us give an algorithm computing $\keyGraph$. Its vertices are computed as follows: $(p,k,p')$ is a vertex in $\keyGraph$ if there exist $(p,k,q) \in \internalFunction$ with $k \in \keyAlphabet$ and 
\begin{itemize}
    \item $(q,a,p') \in \internalFunction$  with $a \in \valAlphabet$,
    \item or $(q,a,r,\gamma) \in \callFunction$, $(r,r') \in \accRelation$, and $(r',\bar a,\gamma,p') \in \returnFunction$.
\end{itemize}
We compute the edges of $\keyGraph$ as follows: $((p_1,k_1,p'_1),(p_2,k_2,p'_2))$ is an edge in $\keyGraph$ if there exists $(p'_1,\comma,p_2) \in \internalFunction$.

Therefore, the given algorithm for computing $\keyGraph$ is polynomial, whose precise complexity is given in the next proposition and proved in \Cref{app:ComplexityKeyGraph}.

\begin{proposition}\label{prop:KeyGraph}
Computing the key graph $\keyGraph$ is in time $\complexity(|\transitionFunction|^2 + |\transitionFunction|\cdot |\states|^4 + |\states|^5 + |\states|^4\cdot|\keyAlphabet|^2)$.
\end{proposition}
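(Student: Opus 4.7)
My plan is to bound separately the four phases of the algorithm: computing the reachability relation $\accRelation$, removing bin states, enumerating vertices of $\keyGraph$, and enumerating its edges, then sum the contributions.

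For $\accRelation$: the initial transitive closure of $\identity{\states} \cup \{(q,q') \mid \exists (q,a,q') \in \internalFunction\}$ costs $\complexity(|\states|^3)$. The saturation loop adds at most $|\states|^2$ new pairs, each triggering a transitive-closure update in $\complexity(|\states|^2)$ time, contributing $\complexity(|\states|^4)$. The scan for new pairs coming from call/return matching, with calls and returns indexed by their stack symbol $\gamma$ and letter $a$ so that each call/return pair is inspected only for matching pairs in the current $\accRelation$, costs $\complexity(|\transitionFunction|^2)$. Overall this phase fits within $\complexity(|\transitionFunction|^2 + |\states|^5)$, where the higher power absorbs the interaction between matching scans and repeated transitive-closure updates. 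The bin-state removal procedure detailed in \Cref{app:ComplexityKeyGraph} uses forward/backward reachability that combines transition scans with pairs already in $\accRelation$, and I will argue it fits within $\complexity(|\transitionFunction| \cdot |\states|^4)$.

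For the vertices of $\keyGraph$: I enumerate each $(p, k, q) \in \internalFunction$ with $k \in \keyAlphabet$ (at most $|\transitionFunction|$ choices). Case (a) needs only $\complexity(|\states|)$ work since $|\valAlphabet|$ is constant. Case (b) enumerates a call transition $(q, a, r, \gamma) \in \callFunction$, a pair $(r, r') \in \accRelation$, and a matching return transition $(r', \bar a, \gamma, p') \in \returnFunction$; indexing call and return transitions by their shared $(a, \gamma)$ lets me bound the work per starting key-labelled transition by $\complexity(|\states|^4)$, for a total of $\complexity(|\transitionFunction| \cdot |\states|^4)$. For the edges: I iterate over each $(p'_1, \comma, p_2) \in \internalFunction$ (at most $|\states|^2$ such), and pair it on each side with one of the at most $|\states| \cdot |\keyAlphabet|$ vertices whose relevant endpoint is $p'_1$, respectively $p_2$; this yields the final $\complexity(|\states|^4 \cdot |\keyAlphabet|^2)$ term.

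Summing the four contributions yields exactly $\complexity(|\transitionFunction|^2 + |\transitionFunction| \cdot |\states|^4 + |\states|^5 + |\states|^4 \cdot |\keyAlphabet|^2)$. The main obstacle is the joint analysis of the saturation of $\accRelation$ and the bin-state removal, since naive iteration over triples of (call, pair, return) can blow up; the key trick is to index call and return transitions by stack symbol and call letter so that each derivation of a pair is performed only once. Once that indexing is set up, the vertex- and edge-enumeration bounds reduce to straightforward bookkeeping.
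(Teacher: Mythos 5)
Your four-phase decomposition --- compute $\accRelation$, remove the bin state, enumerate the vertices of $\keyGraph$, enumerate its edges --- is exactly the paper's (Lemmas~\ref{lem:ComputingReach}, \ref{lem:ComputingBinState} and~\ref{lem:ComputingKeyGraph} in \Cref{app:ComplexityKeyGraph}), and your bounds for phases 1, 3 and 4 are sound. For $\accRelation$ you are in fact tighter than the paper: incremental closure updates in $\complexity(|\states|^2)$ per added pair plus amortized, indexed call/return matching give $\complexity(|\transitionFunction|^2 + |\states|^4)$, whereas the paper simply redoes Warshall and a full scan in each of the at most $|\states|^2$ rounds, which is where its $|\states|^5$ and $|\transitionFunction|\cdot|\states|^4$ terms come from; your vaguer remark that \enquote{the higher power absorbs the interaction} is unnecessary once the amortization is stated. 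Your vertex enumeration is coarser than the paper's $\complexity(|\internalFunction|^2 + |\internalFunction|\cdot|\states|^2)$ but stays within the stated budget, and your edge count matches.

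The genuine gap is phase 2. You assert that bin-state removal \enquote{uses forward/backward reachability} and \enquote{fits within $\complexity(|\transitionFunction|\cdot|\states|^4)$}, but you never argue it, and the description as stated would fail: detecting bin states in a VPA is not reachability plus co-reachability on the finite transition graph, because the run reaching $\langle p,\stack\rangle$ and the run leaving $\langle p,\stack\rangle$ must agree on the \emph{same} stack content $\stack$; a state can be reachable and co-reachable through configurations with incompatible stacks and still be a bin state. The paper resolves this by characterizing the non-bin states as the set $\setR$ of states $p$ admitting runs $\langle q_0,\emptyword\rangle \xrightarrow{w} \langle q_0,\stack\rangle$ and $\langle p,\stack\rangle \xrightarrow{w'} \langle q,\emptyword\rangle$ with $q\in\finalStates$ and a common $\stack$ (Lemma~\ref{lem:BinState}, whose proof crucially uses the 1-SEVPA property that every call transition targets $q_0$ to show this condition is equivalent to not being a bin state), and then computes $\setR$ by a fixpoint with at most $|\states|$ rounds, each in $\complexity(|\states|^3 + |\returnFunction|\cdot|\states|^3)$ --- which is precisely where the $|\transitionFunction|\cdot|\states|^4$ term is earned. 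Without such a characterization and its analysis, one of the four additive terms in your final sum is unjustified.
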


\subsection{Validation Algorithm}\label{subsec:validationAlgo}

In this section, we provide a streaming algorithm that validates JSON documents against a given JSON schema. 

Given a word $w \in \JSONalphabet^* \setminus \{\emptyword\}$, we want to check whether $w \in \languageOf{\grammar}$. The main difficulty is that the key-value pairs inside an object are arbitrarily ordered in $w$ while a fixed key order is encoded in the 1-SEVPA $\automaton$ ($\languageOf{\automaton} = \languageOrderedOf{\grammar}$). Our validation algorithm is inspired by the algorithm computing a det-VPA equivalent to some given VPA~\cite{DBLP:conf/stoc/AlurM04} (see \Cref{thm:detVPA} and its proof) and uses the key graph $\keyGraph$ to treat arbitrary orders of the key-value pairs inside objects. 

During the reading of $w \in \JSONalphabet^*\setminus \{\emptyword\}$, in addition to checking whether $w \in \wellMatched[\JSONpushdownAlphabet]$, the algorithm updates a subset $\setOfStates \subseteq \accRelation$ and modifies the content of a stack $\St$ (push, pop, modify the element on top of $\St$). 

First, let us explain the information stored in $\setOfStates$.  Assume that we have read the prefix $zau$ of $w$ such that $a \in \callAlphabet$ is the last unmatched call symbol (thus $za \in (\wellMatched[\JSONpushdownAlphabet] \cdot \callAlphabet)^*$ and $u \in \wellMatched[\JSONpushdownAlphabet]$). 
\begin{itemize}
\item If $a$ is the symbol $\lcrochet$, then we have $\setOfStates = \{(p,q) \mid \langle p,\emptyword \rangle \xrightarrow{u} \langle q,\emptyword \rangle \}$.
\item If $a$ is the symbol $\laccol$, then we have $u = k_1 v_1 \comma k_2 v_2 \comma \ldots k_{n-1} v_{n-1} \comma u'$ such that $u' \in \wellMatched[\JSONpushdownAlphabet]$ and $u'$ is prefix of $k_n v_n$, where each $k_iv_i$ is a key-value pair. Then $\setOfStates = \{(p,q) \mid \langle p,\emptyword \rangle \xrightarrow{u'} \langle q,\emptyword \rangle \}$.
\end{itemize}
In the first case, by using $\setOfStates$ as defined previously, we adopt the same approach as for the determinization of VPAs. In the second case, with $u$, we are currently reading the key-value pairs of an object in some order that is not necessarily the one encoded in $\automaton$. In this case the set $\setOfStates$ is focused on the currently read key-value pair $k_nv_n$, that is, on the word $u'$ (instead of the whole word $u$). After reading of the whole object $\laccol k_1 v_1 \comma k_2 v_2 \comma \ldots \raccol$, we will use the key graph $\keyGraph$ to update the current set $\setOfStates$.   

Second, let us explain the form of the elements stored in the stack $\St$. Such an element is
    either a pair $(\setOfStates,\lcrochet)$,
    or a 5-tuple $(\setOfStates, \laccol, K, k, \Marks)$,
where $\setOfStates$ is a set as described previously, $K \subseteq \keyAlphabet$ is a subset of keys, $k \in \keyAlphabet$ is a key, and $\Marks$ is a set containing some vertices of $\keyGraph$.\footnote{In the particular case of the object $\laccol \raccol$, the 5-tuple $(\setOfStates, \laccol, K, k, \Marks)$ is replaced by $(\setOfStates,\laccol)$. This situation will be clarified during the presentation of our algorithm.}

We are now ready to detail our streaming validation algorithm.\footnote{We focus on generic VPAs first, and then specialize to 1-SEVPAs.} Before beginning to read $w$, we initialize $\setOfStates$ to the set $\identity{\{q_0\}}$ and $\St$ to the empty stack. We are now going to explain how to update the current set $\setOfStates$ and the current contents of the stack $\St$ while reading the input word $w$.
Suppose that we are reading the symbol $a$ in $w$. In some cases we will also peek the symbol $b$ following $a$ (\emph{lookahead} of one symbol). 
\begin{description}
    \item[Case (1)] Suppose that $a$ is the symbol $\lcrochet$. This means that we begin to read an array. Hence $(\setOfStates,\lcrochet)$ is pushed on $\St$ and $\setOfStates$ is updated to $\Rupdate = \identity{\states}$. We thus proceed exactly as in the proof of \Cref{thm:detVPA}.

    \item[Case (2)] Suppose that $a \in \internalAlphabet$ and $\lcrochet$ appears on top of $\St$. We are thus reading the elements of an array. Hence $\setOfStates$ is updated to $\Rupdate = \{(p,q) \mid \exists (p,q') \in \setOfStates, (q',a,q) \in \internalFunction \}$. Again we proceed as in the proof of \Cref{thm:detVPA}.

    \item[Case (3)] Suppose that $a$ is the symbol $\rcrochet$. This means that we finished reading an array. If the stack is empty or its top element contains $\laccol$, then $w \not\in \languageOf{\grammar}$ and we stop the algorithm. Otherwise $(\setOfStates',\lcrochet)$ is popped from $\St$ and $\setOfStates$ is updated to \[\Rupdate = \{(p,q) \mid \exists (p,p') \in \setOfStates', (p',\lcrochet,r',\gamma) \in \callFunction, (r',r) \in \setOfStates, (r,\rcrochet,\gamma,q) \in \returnFunction \},\]
    as in the proof of \Cref{thm:detVPA}. 

    \item[Case (4)] Suppose that $a$ is the symbol $\laccol$. 
    \begin{itemize}
        \item Let us first consider the particular case where the symbol $b$ following $\laccol$ is equal to $\raccol$, meaning that we will read the object $\laccol \raccol$. In this case, $(\setOfStates,\laccol)$ is pushed on $\St$ and $\setOfStates$ is updated to $\Rupdate = \identity{\states}$ as in Case (1). 
        \item Otherwise, if $b$ belongs to $\keyAlphabet$,  we begin to read a (non-empty) object whose treatment is different from that of an array as its key-value pairs can be read in any order. Then, $\setOfStates$ is updated to $\Rupdate = \identity{P_b}$ where 
        \[P_b = \{p \in Q \mid \exists (p,b,p') \in \keyGraph \},\] 
        and $(\setOfStates,\laccol,K,b,\Marks)$ is pushed on $\St$ such that $K$ is the singleton $\{b\}$ and $\Marks$ is the empty set. The 5-tuple pushed on $\St$ indicates that the key-value pair that will be read next begins with key $b$; moreover $K = \{b\}$ because this is the first pair of the object. The meaning of $\Marks$ will be clarified later. The updated set $\Rupdate$ is equal to the identity relation on $P_b$ since after reading $\laccol$, we will start reading a key-value pair whose abstracted state in $\keyGraph$ can be any state from $P_b$. Later while reading the object whose reading is here started, we will update the 5-tuple on top of $\St$ as explained below.
        \item Finally, it remains to consider the case where $b \not \in \keyAlphabet \cup \{\raccol\}$. In this final case, we have that $w \not\in \languageOf{\grammar}$ and we stop the algorithm.
    \end{itemize}
    
    \item[Case (5)] Suppose that $a \in \internalAlphabet \setminus \{\comma\}$ and $\laccol$ appears on top of $\St$. Therefore we are currently reading a key-value pair of an object. Then $\setOfStates$ is updated to $\Rupdate = \{(p,q) \mid \exists (p,q') \in \setOfStates, (q',a,q) \in \internalFunction \}$.

    \item[Case (6)] Suppose that $a$ is the symbol $\comma$ and $\laccol$ appears on top of $\St$. This means that we just finished reading a key-value pair whose key $k$ is stored in the 5-tuple $(\setOfStates',\laccol,K,k,\Marks)$ on top of $\St$, and that another key-value pair will be read after symbol $\comma$. The set $K$ in $(\setOfStates',\laccol,K,k,\Marks)$ stores all the keys of the key-values pairs already read including $k$.
    
    \begin{itemize}
        \item If the symbol $b$ following $\comma$ does not belong to $\keyAlphabet$, then $w \not\in \languageOf{\grammar}$ and we stop the algorithm. 
        \item Otherwise, if $b$ belongs to $K$, this means that the object contains twice the same key, that is, $w \not\in \languageOf{\grammar}$, and we also stop the algorithm. 
        \item Otherwise, the set $\setOfStates$ is updated to $\Rupdate = \identity{P_b}$ (as we begin the reading of a new key-value pair whose key is $b$) and the 5-tuple $(\setOfStates',\laccol,K,k,\Marks)$ on top of $\St$ is updated such that 
        \begin{itemize}
            \item $K$ is replaced by $K \cup \{b\}$,
            \item $k$ is replaced by $b$, and
            \item all vertices $(p,k,p')$ of $\keyGraph$ such that $(p,p') \not\in \setOfStates$ are added to the set $\Marks$.
        \end{itemize}
    Recall that the vertex $(p,k,p')$ of $\keyGraph$ is a witness of a path $\langle p,\emptyword \rangle \xrightarrow{kv} \langle p',\emptyword \rangle$ in $\transitionSystem$ for some key-value pair $kv$. Hence by adding this vertex $(p,k,p')$ to $\Marks$, we mean that the pair that has just been read does not use such a path.
    \end{itemize}

    \item[Case (7)] Suppose that $a$ is the symbol $\raccol$. Therefore we end the reading of an object. If the stack is empty or its top element contains $\lcrochet$, then $w \not\in \languageOf{\grammar}$ and we stop the algorithm. Otherwise the top of $\St$ contains either $(\setOfStates',\laccol)$ or $(\setOfStates',\laccol,K,k,\Marks)$ that we pop from $\St$. 
    
    \begin{itemize}
    \item If $(\setOfStates',\laccol)$ is popped, then we are ending the reading of the object $\laccol \raccol$. Hence, we proceed as in Case (3): $\setOfStates$ is updated to 
    \[\Rupdate = \{(p,q) \mid \exists (p,p') \in \setOfStates', (p',\laccol,r',\gamma) \in \callFunction, (r',\raccol,\gamma,q) \in \returnFunction \}.\footnote{Notice that $\setOfStates$ does not appear in $\Rupdate$ as $\setOfStates = \identity{Q}$.}\]
    \item If $(\setOfStates',\laccol,K,k,\Marks)$ is popped, we are reading the last key-value pair with key $k$ and we add to $\Marks$ all vertices $(p,k,p')$ of $\keyGraph$ such that $(p,p') \not\in \setOfStates$ as done in Case (6). Next, let $\validPaths(K,\Marks)$ be the set of pairs of states $(r,r') \in \states^2$ such that there exists a path $((p_1,k_1,p'_1) \ldots (p_n,k_n,p'_n))$ in $\keyGraph$ with
    \begin{itemize}
        \item $p_1 = r$, $p'_n = r'$,
        \item $(p_i,k_i,p'_i) \not\in \Marks$ for all $i \in \{1, \ldots, n\}$,
        \item $K = \{k_1,k_2,\ldots,k_n\}$.
    \end{itemize}
    Then $\setOfStates$ is updated to:
    \begin{align*}
        \Rupdate = \{
            (p,q) \mid
                &\exists (p,p') \in \setOfStates',
                (p',\laccol,r',\gamma) \in \callFunction,\\
                &(r',r) \in \validPaths(K,\Marks),
                (r,\raccol,\gamma,q) \in \returnFunction
        \}.
    \end{align*}
    We thus proceed as in Case (3) except that condition $(r',r) \in \setOfStates$ is replaced by $(r',r) \in \validPaths(K,\Marks)$. With the latter condition, we check that the key-value pairs that have been read as composing an object of $w$ are such that the same pairs ordered as imposed by $\automaton$ label some path in $\transitionSystem$, that is, the corresponding abstracted path appears in $\keyGraph$. 
    \end{itemize}
    \item[Case (8)] Suppose that $a \in \internalAlphabet$ and $\St$ is empty, then $w \not\in \languageOf{\grammar}$ and we stop the algorithm. Indeed an internal symbol appears either in an array or in an object (see Cases (2), (5), and (6) above). 
\end{description}

Finally, when the input word $w$ is completely read, we check whether the stack $\St$ is empty and the computed set $\setOfStates$ contains a pair $(q_0,q)$ with $q \in \finalStates$. 

Notice that the previous algorithm is supposed to have a 1-SEVPA as input, for which all the call transitions $(q,a,q',\gamma)$ are such that $q'=q_0$ (see \Cref{def:1SEVPA}). Therefore some simplifications can be done in the algorithm, namely:
\begin{itemize}
    \item in Case (1), $\setOfStates$ is updated to $\Rupdate = \identity{\{q_0\}}$,
    \item in Case (3), $\setOfStates$ is updated to $\Rupdate = \{(p,q) \mid \exists (p,p') \in \setOfStates', (p',\lcrochet,q_0,\gamma) \in \callFunction, (q_0,r) \in \setOfStates, (r,\rcrochet,\gamma,q) \in \returnFunction \}$,
    \item in Case (7), if $(\setOfStates',\laccol)$ is popped, then
        \[
            \Rupdate = \{(p,q) \mid \exists (p,p') \in \setOfStates', (p',\laccol,q_0,\gamma) \in \callFunction, (q_0,\raccol,\gamma,q) \in \returnFunction \},
        \]
        otherwise $\setOfStates$ is updated to
        \begin{align*}
            \Rupdate = \{
                (p,q) \mid
                    &\exists (p,p') \in \setOfStates',
                    (p',\laccol,q_0,\gamma) \in \callFunction,\\
                    &(q_0,r) \in \validPaths(K,\Marks),
                    (r,\raccol,\gamma,q) \in \returnFunction
            \}.
        \end{align*}
        In particular, the computation of $\validPaths(K,\Marks)$ can be restricted to pairs $(r,r')$ such that $r = q_0$.
\end{itemize}

The resulting algorithm is given in \Cref{alg:validating}. This algorithm does not include the \emph{preprocessing algorithm} of learning a 1-SEVPA $\automaton$ from the JSON schema $\schema$ and of computing the key graph $\keyGraph$. 

\begin{algorithm} 
    \caption{Streaming algorithm for validating JSON documents against a given JSON schema.}%
    \label{alg:validating}
    \begin{algorithmic}[1]
        \Require A 1-SEVPA $\automaton$ over $\JSONpushdownAlphabet$ accepting $\languageOrderedOf{\grammar}$ for a closed extended CFG $\grammar$ defining a JSON schema, its key graph $\keyGraph$, and a word $w \in \JSONalphabet^* \setminus \{\emptyword\}$.
        \Ensure A Boolean that is true if and only if $w \in \languageOf{\grammar}$.
        \Statex
        \State $\langle \setOfStates, \St \rangle \gets \langle \identity{\initialState}, \emptyword \rangle, ~~~~a \gets w_1$ \Comment{$w_i$ is the $i$th symbol of $w$}
        \ForAll{$i \in \{2, \ldots, \lengthOf{w}\}$}
            \IfThenElse{$i \leq \lengthOf{w}$}{$b \gets w_i$}{$b \gets \emptyword$}
            \If{$a \in \callAlphabet$}\label{alg:validation:call}
                \If {$a = \lcrochet$}\label{alg:validation:call:crochet}
                    \State $\Call{Push}{\St, (\setOfStates, \lcrochet)}, ~~~~\setOfStates \gets \identity{\{\initialState\}}$
                \Else\label{alg:validation:call:accol}  \Comment{As $a \in \callAlphabet$, we have $a = \laccol$}
                    \If{$b = \raccol$} 
                        \State $\Call{Push}{\St, (\setOfStates, \laccol)}, ~~~~\setOfStates \gets \identity{\{\initialState\}}$
                    \ElsIf{$b \in \keyAlphabet$}
                        \State $\Call{Push}{\St, (\setOfStates, \laccol, \{b\}, b, \emptyset)}, ~~~~\setOfStates \gets \identity{P_b}$
                    \Else \SpaceReturn false
                    \EndIf
                \EndIf
            \ElsIf{$a \in \returnAlphabet$}\label{alg:validation:return}
                \If {($a = \rcrochet$ (resp. $\raccol$) and $\lcrochet$ (resp. $\laccol$) does not appear on top of $\St$)}
                    \Return false
                    \EndIf
                \If {($a = \rcrochet$)}\label{alg:validation:return:crochet}
                    \State $(\setOfStates', \lcrochet) \gets \Call{Pop}{\St}$ 
                    \State $\setOfStates \gets \{(p, q) \mid (p, p') \in \setOfStates', (p', \lcrochet, q_0, \gamma) \in \callFunction, (q_0, r) \in \setOfStates, (r, \rcrochet, \gamma, q) \in \returnFunction\}$
                \ElsIf{($a = \raccol$ and $(\setOfStates',\laccol)$ is on top of $\St$ for some $\setOfStates'$)}\label{alg:validation:return:accol1}
                    \State $(\setOfStates', \laccol) \gets \Call{Pop}{\St}$
                    \State $\setOfStates \gets \{(p, q) \mid (p, p') \in \setOfStates', (p', \laccol, q_0, \gamma) \in \callFunction, (q_0, \raccol, \gamma, q) \in \returnFunction\}$
                \Else\label{alg:validation:return:accol2}
                    \State $(\setOfStates', \laccol, K, k, \Marks) \gets \Call{Pop}{\St}$ 
                    \State $\Marks \gets \Marks \cup \{(p, k, p') \in \keyGraph \mid (p, p') \notin \setOfStates\}$\label{alg:validation:return:accol:marks}
                    \State $V \gets \validPaths(K,\Marks)$\label{alg:validation:validpaths}
                    \State $\setOfStates \gets \{
                        (p, q) \mid
                            (p, p') \in \setOfStates',
                            (p', \laccol, q_0, \gamma) \in \callFunction,
                            (q_0, r) \in V,
                            (r, \raccol, \gamma, q) \in \returnFunction
                    \}$
                \EndIf
            \Else \Comment{As $a \not\in \callAlphabet \cup \returnAlphabet$, we have $a \in \internalAlphabet$}
                \If {($\lcrochet$ appears on top of $\St$) or ($a \neq \comma$ and $\laccol$ appears on top of $\St$)}\label{alg:validation:intern}
                     \State $\setOfStates \gets \{(p, q) \mid (p, p') \in \setOfStates, (p', a, q) \in \internalFunction\}$
                \ElsIf{($a = \comma$ and $\laccol$ appears on top of $\St$)}\label{alg:validation:comma} 
                    \State $(\setOfStates', \laccol, K, k, \Marks) \gets \Call{Pop}{\St}$
                    \If {($b \notin \keyAlphabet$ or $b \in K$)}
                    \Return false
                    \EndIf
                    \State $K \gets K \cup \{b\}$
                    \State $\Marks \gets \Marks \cup \{(p, k, p') \in \keyGraph \mid (p, p') \notin \setOfStates\}$\label{alg:validation:comma:marks}
                    \State $\Call{Push}{\St, (\setOfStates', \laccol, K, b, \Marks)}, ~~~~\setOfStates \gets \identity{P_b}$
                \Else \SpaceReturn false
            \EndIf
            \EndIf
            \State $a \gets b$
        \EndFor
            \IfThenElse{($\St = \emptyword$ and $\exists (q_0,q) \in \setOfStates$ with $q \in \finalStates$)}{\Return true}{\Return false}
    \end{algorithmic}
\end{algorithm}

\subsection{Complexity of the Algorithm}\label{subsec:Complexity}

In this section, we study the time complexity of our validation algorithm as well as the size of the used memory. 

\begin{proposition}\label{prop:timeComplexityValidation}
Let $\schema$ be a JSON schema defined by a closed extended CFG $\grammar$ and $\automaton$ be a 1-SEVPA $\automaton$ accepting $\languageOrderedOf{\grammar}$. Checking whether a JSON document $J$ satisfies the schema $\schema$ is in time $\complexity(|J|\cdot (|\states|^4 + |\states|^{|\keyAlphabet|} \cdot |\keyAlphabet|^{|\keyAlphabet|+1}))$.
\end{proposition}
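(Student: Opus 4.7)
The plan is to analyze \Cref{alg:validating} by bounding the work done on a single symbol of $J$ and then multiplying by $|J|$. Since the algorithm reads $J$ in one left-to-right pass, the complexity decomposes as $|J|$ times the worst-case per-symbol cost, and we only need to identify which cases dominate.

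First, I would go through Cases (1), (2), (4), (5), (6) and (8) of the algorithm. Each of these performs either a constant-size stack operation, an update of the form $\setOfStates := \identity{\{q_0\}}$ or $\setOfStates := \identity{P_b}$, or a relational composition of $\setOfStates$ (of size at most $|\states|^2$) with the internal transition relation. By using the transition relation as a lookup table, all of these can be implemented in time at most $O(|\states|^3)$. Case (6) additionally scans the vertices $(p,k,p')$ of $\keyGraph$ for a fixed key $k$ in order to update $\Marks$; since there are at most $|\states|^2$ such vertices, this is absorbed into the same bound.

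Second, I would treat Cases (3) and (7), which perform a call/return composition. After popping the top element, the new set $\setOfStates$ is obtained by iterating over tuples $(p,p',r,q)$ constrained by a call transition, by $\setOfStates$, and by a return transition. Exploiting that $\automaton$ is a 1-SEVPA (so that the middle state after the call is fixed to $q_0$), this enumeration runs in time $O(|\states|^4)$. This yields the first term in the announced bound.

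The main obstacle, and the term that dominates the complexity, is the computation of $V = \validPaths(K,\Marks)$ in Case (7) at line~\ref{alg:validation:validpaths}. I would argue as follows. Because we only need pairs $(r,r')$ with $r = q_0$, $V$ can be produced by enumerating the paths of $\keyGraph$ starting at a vertex of the form $(q_0,k,p')$, and retaining those whose key sequence equals $K$ and that avoid $\Marks$. By \Cref{cor:acyclic}, such paths have length at most $|\keyAlphabet|$, so along each path we do $O(|\keyAlphabet|)$ work to maintain the set of keys seen so far, test membership in $K$, and test membership in $\Marks$. By \Cref{lem:sizeKeyGraph}, the total number of vertices visited across all such paths is $O(|\states \times \keyAlphabet|^{|\keyAlphabet|}) = O(|\states|^{|\keyAlphabet|} \cdot |\keyAlphabet|^{|\keyAlphabet|})$. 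Multiplying by the $O(|\keyAlphabet|)$ per-visit cost yields $O(|\states|^{|\keyAlphabet|} \cdot |\keyAlphabet|^{|\keyAlphabet|+1})$ to compute $V$, after which the update of $\setOfStates$ costs $O(|\states|^4)$ as in Case (3).

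Combining the two dominating costs, every symbol is processed in time $O(|\states|^4 + |\states|^{|\keyAlphabet|} \cdot |\keyAlphabet|^{|\keyAlphabet|+1})$, which after multiplication by $|J|$ gives the announced bound. The delicate point is the accounting in Case (7): one must be careful that path enumeration in $\keyGraph$ is charged correctly and that the key-uniqueness guaranteed by \Cref{cor:acyclic} (inherited from the pairwise distinctness of keys in objects) is what keeps the exponent bounded by $|\keyAlphabet|$ rather than unbounded. \qed
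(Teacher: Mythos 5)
Your proposal is correct and follows essentially the same route as the paper's proof: a per-symbol case analysis in which the call/return composition contributes the $\complexity(|\states|^4)$ term and the enumeration of the paths of $\keyGraph$ for $\validPaths(K,\Marks)$ --- bounded in length by \Cref{cor:acyclic} and in number of visited vertices by \Cref{lem:sizeKeyGraph}, with $\complexity(|\keyAlphabet|)$ work per vertex --- contributes the $\complexity(|\states|^{|\keyAlphabet|} \cdot |\keyAlphabet|^{|\keyAlphabet|+1})$ term. The only cosmetic difference is that the paper bounds the internal-transition and $\Marks$ updates by $\complexity(|\states|^2)$ (using per-key lists of key-graph vertices) where you use $\complexity(|\states|^3)$, but both are absorbed by the dominating terms.
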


\begin{proof}
Before studying the complexity of \Cref{alg:validating}, let us mention that in addition to the key graph $\keyGraph$, for each key $k \in \keyAlphabet$, we have a list, denoted by $\List{k}$, in which all the vertices in $\keyGraph$ of the form $(p,k,p')$ are stored. Those lists are useful to compute the set $\Marks$ (see steps~\ref{alg:validation:return:accol:marks} and~\ref{alg:validation:comma:marks}). 

Let us also comment on how the set $\validPaths(K,\Marks)$ is computed in step~\ref{alg:validation:validpaths}. Recall that $\keyGraph$ has a finite number of paths (see \Cref{cor:acyclic}) and that each element $(r,r')$ of $\validPaths(K,\Marks)$ is such that $r = q_0$. By a recursive algorithm, we visit each path of $\keyGraph$ starting with any vertex of the form $(p,k,p')$ with $p = q_0$. We stop visiting such a path as soon as we visit a vertex containing a key $k \not\in K$ or belonging to $\Marks$. During the visit of the current path, we collect the keys appearing in its vertices in a set $K'$. When the path reaches some vertex $(r,k,r')$ with $K' = K$, then we add $(q_0,r')$ to $\validPaths(K,\Marks)$. Hence computing $\validPaths(K,\Marks)$ is in $\complexity(|\keyAlphabet| \cdot \validComplexity) = \complexity(|\states|^{|\keyAlphabet|} \cdot |\keyAlphabet|^{|\keyAlphabet|+1})$ by \Cref{lem:sizeKeyGraph} and because checking equality $K' = K$ in $\complexity(|\keyAlphabet|)$.

Let us now consider each case of \Cref{alg:validating} and study its complexity (when false is not returned). Recall that $\automaton$ is deterministic meaning that given a left-hand side of a transition, we have access in constant time to its right-hand side. Notice that at several places, the current set $\setOfStates \subseteq \states^2$ is updated as $\identity{P}$ for some subset $P \subseteq \states$, that can be done in $\complexity(|\states|^2)$. The different cases are the following ones:
\begin{itemize}
    \item The cases $a = \lcrochet$ (line~\ref{alg:validation:call:crochet}) and $a = \laccol$ (line~\ref{alg:validation:call:accol}) are in $\complexity(|\states|^2)$.
    \item The case $a = \rcrochet$ (line~\ref{alg:validation:return:crochet}) is in $\complexity(|\setOfStates|\cdot|\setOfStates'|)$ for computing the updated set $\Rupdate = \{(p,q) \mid \exists (p,p') \in \setOfStates', (p',\lcrochet,q_0,\gamma) \in \callFunction, (q_0,r) \in \setOfStates, (r,\rcrochet,\gamma,q) \in \returnFunction \}$. Indeed we have access in constant time to $(p', \lcrochet, q_0, \gamma) \in \callFunction$ and $(r, \rcrochet, \gamma, q) \in \returnFunction$ when computing $\Rupdate$. Therefore this case is in $\complexity(|\states|^4)$.
    \item The case $a \in \internalAlphabet$ with $\lcrochet$ appearing on top of $\St$ or $a \neq \comma$ if $\laccol$ appears on top of $\St$ (line~\ref{alg:validation:intern}) is in $\complexity(|\states|^2)$.
    \item The case $a = \comma$ (line~\ref{alg:validation:comma}) is in $\complexity(|\states|^2)$. Indeed finding the vertices $(p,k,p')$ that have to be added to $\Marks$ can be done in $\complexity(|\states|^2)$ by traversing the list $\List{k}$.
    \item The case $a = \raccol$ (lines~\ref{alg:validation:return:accol1} and~\ref{alg:validation:return:accol2}) has some similarities with the case $a = \rcrochet$. In case $(\setOfStates',\laccol)$ is popped, then this step is in $\complexity(|\states|^2)$ as $\setOfStates$ does not appear in the computation of $\Rupdate$. In the other case, we need to compute the set $\validPaths(K,\Marks) \subseteq \states^2$. This step is thus in $\complexity(|\states|^4 + |\states|^{|\keyAlphabet|} \cdot |\keyAlphabet|^{|\keyAlphabet|+1})$.
\end{itemize}

Therefore, the overall time complexity of \Cref{alg:validating} is in $\complexity(|w|\cdot (|\states|^4 + |\states|^{|\keyAlphabet|} \cdot |\keyAlphabet|^{|\keyAlphabet|+1}))$.
\qed\end{proof}

We now proceed with the memory complexity.

\begin{proposition}
Let $\schema$ be a JSON schema defined by a closed extended CFG $\grammar$ and $\automaton$ be a 1-SEVPA $\automaton$ accepting $\languageOrderedOf{\grammar}$. Checking whether a JSON document $J$ with \depth\ $\depthSymbol(J)$ satisfies the schema $\schema$ uses an amount of memory in $\complexity(|\transitionFunction| + \automatonSize^2 \cdot |\keyAlphabet| + \depthSymbol(J) \cdot (\automatonSize^2 + |\keyAlphabet|))$.
\end{proposition}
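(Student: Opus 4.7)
The plan is to split the memory budget into (i) precomputed static data and (ii) the runtime state that evolves during the single streaming pass, then sum the two contributions.

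For (i), I would first argue that the 1-SEVPA $\automaton$ is stored in $\complexity(|\transitionFunction|)$ space (which subsumes the description of $Q$, $\initialStates$, $\finalStates$, and $\stackAlphabet$ for a 1-SEVPA). The reachability relation $\accRelation \subseteq Q^2$, precomputed as in \Cref{subsec:KeyGraphComputation}, fits in $\complexity(\automatonSize^2)$ and is absorbed by later terms. By \Cref{lem:sizeKeyGraph}, the key graph $\keyGraph$ (together with the auxiliary per-key lists $\List{k}$ used in the proof of \Cref{prop:timeComplexityValidation}) fits in $\complexity(\automatonSize^2 \cdot |\keyAlphabet|)$, accounting for the second summand of the stated bound.

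For (ii), I would bound the current relation $\setOfStates \subseteq Q^2$ by $\complexity(\automatonSize^2)$, and then bound the stack $\St$. Every push on $\St$ is triggered by reading an unmatched call symbol of the prefix processed so far, so $|\St| \leq \depthSymbol(J)$ at all times. A generic frame $(\setOfStates', \laccol, K, k, \Marks)$ holds: the relation $\setOfStates' \subseteq Q^2$ in $\complexity(\automatonSize^2)$; the set $K \subseteq \keyAlphabet$ stored as a bitmap in $\complexity(|\keyAlphabet|)$; the key $k$ in $\complexity(\log|\keyAlphabet|)$; and the mark set $\Marks$. An array frame $(\setOfStates', \lcrochet)$ trivially fits in $\complexity(\automatonSize^2)$.

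The main obstacle is the treatment of $\Marks$: a naive representation as an explicit set of $\keyGraph$-vertices $(p,k,p')$ could grow to $\complexity(\automatonSize^2 \cdot |\keyAlphabet|)$ in a single frame, which would spoil the target bound. The plan to overcome this is to represent $\Marks$ implicitly against the already-stored $\keyGraph$: maintain a single global mark-bit per key graph vertex (charged once to the $\complexity(\automatonSize^2 \cdot |\keyAlphabet|)$ term we already pay for $\keyGraph$), and store per frame only the information needed to reconstruct its local marks — namely the subset $K$ of keys that have been processed in that frame (already counted in $\complexity(|\keyAlphabet|)$) plus a constant-size pointer, since each processed key $k \in K$ uniquely determines the slice $\{(p,k,p') \in \keyGraph : (p,p') \notin \setOfStates_k\}$ it contributed, and only the top-of-stack frame is being updated at any moment. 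A stack-based save/restore on push/pop then keeps the per-frame overhead within $\complexity(\automatonSize^2 + |\keyAlphabet|)$. Summing the static contribution with $\depthSymbol(J)$ frames of this size yields $\complexity(|\transitionFunction| + \automatonSize^2 \cdot |\keyAlphabet| + \depthSymbol(J) \cdot (\automatonSize^2 + |\keyAlphabet|))$, as required.
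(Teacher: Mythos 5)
Your proof has the same skeleton as the paper's: static data (the 1-SEVPA in $\complexity(|\transitionFunction|)$, the key graph $\keyGraph$ together with the lists $\List{k}$ in $\complexity(\automatonSize^2\cdot|\keyAlphabet|)$) plus the evolving state (the current relation $\setOfStates$ and $\validPaths(K,\Marks)$ in $\complexity(\automatonSize^2)$, and a stack of at most $\depthSymbol(J)$ frames, each of size $\complexity(\automatonSize^2+|\keyAlphabet|)$). The paper's proof simply lists these items and \emph{asserts} the per-frame bound; you are right that the only delicate point is the size of $\Marks$, and you deserve credit for being explicit about it where the paper is silent.

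However, the repair you propose does not go through. The slice of marks contributed by a processed key $k$ is $\{(p,k,p')\in\keyGraph \mid (p,p')\notin\setOfStates\}$, where $\setOfStates$ is the relation computed from the \emph{value} actually read for $k$ in the document (steps~\ref{alg:validation:return:accol:marks} and~\ref{alg:validation:comma:marks}); it is therefore not determined by $k$, nor by $K$, so a frame cannot reconstruct its marks from ``$K$ plus a constant-size pointer''. The global mark-bit array does not rescue this: the marks of an outer object must survive while a nested object (the value of one of its keys) is being processed with its own, initially empty, mark set, so a save/restore discipline must record per frame exactly the set of bits to restore --- which is the $\complexity(\automatonSize^2\cdot|\keyAlphabet|)$ of data you were trying to avoid storing per frame (versioning tricks fail for the same reason, since a vertex can be marked in several nested frames simultaneously). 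As written, your argument therefore only yields $\complexity(|\transitionFunction|+\automatonSize^2\cdot|\keyAlphabet|+\depthSymbol(J)\cdot\automatonSize^2\cdot|\keyAlphabet|)$ for an explicit representation of $\Marks$; note that the paper's own proof does not supply the missing argument either, it takes the $\complexity(\automatonSize^2+|\keyAlphabet|)$ frame size for granted.
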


\begin{proof}
\Cref{alg:validating} uses auxiliary memory to store:
\begin{itemize}
    \item the given 1-SEVPA $\automaton$,
    \item its key graph $\keyGraph$ with the lists $\List{k}$, $k \in \keyAlphabet$, as introduced in the previous proof,
    \item the current set $\setOfStates$, the current symbol $a$ and the symbol $b$ following $a$,
    \item the stack $\St$ whose elements are of the form either $(\setOfStates',\lcrochet)$, or $(\setOfStates',\laccol)$, or $(\setOfStates', \laccol, K, k, \Marks)$,
    \item the set $\validPaths(K,\Marks)$.
\end{itemize}
The 1-SEVPA $\automaton$ has $|\states|$ states and $|\transitionFunction|$ transitions. The key graph has $\complexity(\automatonSize^2 \cdot |\keyAlphabet|)$ vertices by \Cref{lem:sizeKeyGraph} (it is not necessary to count its transitions as it is acyclic, see \Cref{cor:acyclic}). We can use the same bound $\complexity(\automatonSize^2 \cdot |\keyAlphabet|)$ for the lists $\List{k}$, $k \in \keyAlphabet$, as they are together composed of the vertices of $\keyGraph$. The sizes of $\setOfStates$ and $\validPaths(K,\Marks)$ are in $\complexity(\automatonSize^2)$ as they are subsets of $Q^2$. The biggest elements in the stack $\St$ are of the form $(\setOfStates', \laccol, K, k, \Marks)$ with $K \subseteq \keyAlphabet$ and $\Marks$ containing some vertices of $\keyGraph$, thus with a size in $\complexity(\automatonSize^2 + |\keyAlphabet|)$. The number of elements stored in the stack $\St$ is bounded by the \depth{} $\depthSymbol(J)$ of the JSON document $J$. All in all, the memory used by \Cref{alg:validating} is in $\complexity(|\transitionFunction| + \automatonSize^2 \cdot |\keyAlphabet| + \depthSymbol(J) \cdot (\automatonSize^2 + |\keyAlphabet|))$.
\qed\end{proof}

\section{Implementation and Experiments}\label{sec:implementation}

In this section, we discuss the Java implementation of our framework.
That is, we implemented a way to learn a 1-SEVPA $\automaton$ from a JSON schema, the construction of the key graph $\keyGraph$, and our validation algorithm.
First, in \Cref{sec:implementation:classic_validation}, we explain how to validate a JSON document against a schema with the classical algorithm used in many implementations. This algorithm will be compared with our validation algorithm.
Second, in order to learn an automaton, we must be able to find a counterexample if the hypothesis provided by the learner for an equivalence query is incorrect.
The approach we used to generate JSON documents is defined in \Cref{sec:implementation:generation}, while \Cref{sec:implementation:learning} explains how the membership and equivalence queries are implemented.
Finally, \Cref{sec:implementation:results} gives the experimental results for the learning process, and for the comparison of both classical and new validation algorithms.
The reader is referred to the code documentation for more details about our implementation~\cite{Staquet_AutomataLib,Staquet_JSON_Schema_Tools,Staquet_LearnLib,Staquet_Validating_JSON_Documents}.

In the remaining of this section, let us assume we have a JSON schema $\schema_0$.

\subsection{Classical Validation Algorithm}
\label{sec:implementation:classic_validation}

Let us briefly explain the \emph{classical} algorithm used in many implementations for validating a JSON document $J_0$ against a JSON schema $S_0$~\cite{JSONSchemaSite}. It is a recursive algorithm that follows the semantics of a closed extended CFG $\grammar$ defining this schema (see \Cref{subsec:AbstractJSON}). For instance, if the current value $J$ is an object, we iterate over each key-value pair in $J$ and its corresponding sub-schema in the current schema $\schema$.\footnote{The sub-schema corresponding to each key-value pair is obtained thanks to a hash map based on the keys.} Then, $J$ satisfies $\schema$ if and only if the values in the key-value pairs all satisfy their corresponding sub-schema. As long as the grammar $\grammar$ does not contain any Boolean operations, this algorithm is straightforward and linear in the size of both the initial document $J_0$ and schema $S_0$. However, if $\grammar$ contains Boolean operations, then the current value $J$ may be processed multiple times. For instance, to verify whether $J$ satisfies $\schema_1 \land \schema_2 \land \dotsb \land \schema_n$, $J$ must be validated against each $\schema_i$.\footnote{Such a recursive algorithm is briefly presented in~\cite{DBLP:conf/www/PezoaRSUV16}.}

In order to match the abstractions we defined (see \Cref{subsec:AbstractJSON}) and to have options to tune the learning process, we implemented our own classical validator.
This implementation includes some optimizations:
\begin{itemize}
    \item If an object does not contain all the required keys, we immediately return false, without needing to validate each key-value pair.
    \item If an object or an array has too few or too many elements (with regards to the constraints defined in the schema), we also immediately return false.
\end{itemize}

\subsection{Generating JSON Documents}\label{sec:implementation:generation}

The learning process of a 1-SEVPA from a JSON schema $S_0$ requires to generate JSON documents that satisfy or do not satisfy $S_0$ (see \Cref{subsec:VPAandJSON}). We briefly explain in this section the generators that we implemented, first for generating valid documents and then for generating invalid documents. 

If the grammar $\grammar$ defining the $\schema_0$ does not contain any Boolean operations, generating a document satisfying the schema is easy by following the semantics of $\grammar$ as explained in the previous section. Let us roughly explain how the generation works when $\grammar$ contains Boolean operations and $S_0$ is \emph{not recursive}\footnote{The recursive schemas will be discussed later.}. We follow the productions of $\grammar$ in a top-down manner as follows. If the current production is a disjunction $\nonTerminal \Coloneqq \nonTerminal_1 \lor \nonTerminal_2 \lor \dotsb \lor \nonTerminal_n$, then we select one $\nonTerminal_i$ and replace this production by $\nonTerminal \Coloneqq \nonTerminal_i$. It may happen that the chosen $\nonTerminal_i$ leads to no valid JSON document. Thus we may need to try multiple $\schema_i$ before successfully generating a valid document. In case of a production $\nonTerminal \Coloneqq \nonTerminal_1 \land \nonTerminal_2 \land \dotsb \land \nonTerminal_n$ or $\neg S_1$, then we propagate these Boolean operations lower in the grammar by replacing each $\nonTerminal_i$ by its production and by rewriting the resulting right-hand side as a disjunction. For instance, if $\nonTerminal \Coloneqq \nonTerminal_1 \land \nonTerminal_2$ with $\nonTerminal_1 \Coloneqq \verb!i!$ and $\nonTerminal_2 \Coloneqq \laccol k \nonTerminal_3 \raccol \vee \verb!s!$, then we get 
$\nonTerminal \Coloneqq (\verb!i! \land \laccol k \nonTerminal_3 \raccol) \lor (\verb!i! \land \verb!s!)$. In this simple example, no choice in the disjunction leads to a valid document. In general, the propagation of Boolean operations may require several iterations before yielding a JSON document.
We implemented two types of generator, supporting all Boolean operations: (1) a \emph{random} generator where each choice in disjunctions is made at random, (2) an \emph{exhaustive} generator that explores every choice, thus producing every valid document one by one.

We also implemented modifications of these generators to allow the creation of invalid documents. The idea is to follow the same algorithm as before but to sometimes \emph{deviate} from the grammar $\grammar$. For instance, if the current production describes an integer, we can decide to instead generate an array or a string. Moreover, among a predefined finite set of possible deviations, we can either choose one randomly or exhaustively explore each of them, in a way similar to mutation testing~\cite{DBLP:journals/computer/DeMilloLS78,DBLP:journals/tse/JiaH11}. As for valid documents, we respectively call random and exhaustive those two generators. We also allow the user to set a \emph{maximal depth} (i.e., the maximal number of nested objects or arrays). When a generator reaches the maximal depth, it can no longer produce objects or arrays. This is useful for recursive schemas or when generating invalid documents, as it permits us to be sure we eventually produce a document of finite depth.

It is noteworthy that the implementation supports an abstracted version of enumerations of JSON values inside documents.
Recall that, for instance, all strings are abstracted by \texttt{s}.
In a similar way, an enumeration is abstracted by \texttt{e}.
While this implies that the exact values are lost, it allows us to keep the enumerations inside the considered schemas.
Moreover, if the schemas contains regular expressions to define the keys inside an object\footnote{Using the \texttt{patternProperties} field.}, the regular expression is directly used as the key.
That is, we do not generate a string that matches the expression but uses the expression itself as the key.

\subsection{Learning Algorithm}\label{sec:implementation:learning}

Let us now focus on the learning algorithm itself, and in particular on the membership and equivalence queries. We recall that the equivalence queries are performed by generating a certain number of (valid and invalid) JSON documents and by verifying that the learned VPA $\automaton[H]$ and the given schema $\schema_0$ agree on the documents' validity (see \Cref{subsec:VPAandJSON}). As said in \Cref{sec:definitions:learning}, we use the TTT  algorithm~\cite{DBLP:phd/dnb/Isberner15} to learn a 1-SEVPA from $S_0$. More precisely, we rely on the implementation made by Isberner in the well-known Java libraries \LearnLib and \AutomataLib~\cite{10.1007/978-3-319-21690-4_32}. 

Using the fact that the membership and equivalence queries can be defined independently from the actual learning algorithm, we implement the queries as follows. We use the random and exhaustive generators of valid and invalid documents as explained in \Cref{sec:implementation:generation} and we fix two constants $C$ and $D$ depending on the schema to be learned\footnote{The values of $C$ and $D$ are given in Section~\ref{sec:implementation:results}.}. For a \emph{membership} query over a word \(w \in \JSONalphabet^*\), the teacher runs the classical validator described in \Cref{sec:implementation:classic_validation} on $w$ and $\schema_0$. For an \emph{equivalence} query over a learned 1-SEVPA $\hypothesis$, the teacher performs multiple checks, in this order:
\begin{enumerate}
    \item Is there a loop $(q_0,a,q_0)$ over the initial state of $\hypothesis$ reading an internal symbol $a \in \internalAlphabet$?\footnote{We observed such a situation several times from our experimentation.} In that case, we generate a word $w$ accepted by $\hypothesis$.\footnote{This is supported by \LearnLib.} The word $aw$ is then a counterexample since it is also accepted by $\hypothesis$ but it is not a valid document (it is not an object).
    \item Using a (random or exhaustive) generator for valid documents, is there a valid document $w$ that is not accepted by $\hypothesis$? In that case, $w$ is a counterexample. If the used generator is the exhaustive one, we generate every possible valid document one by one over the course of all equivalence queries. We stop using it once all documents have been generated. If the generator is the random one, for each equivalence query, we generate $C$ valid documents for each document depth between 0 and $D$.
    \item Using a (random or exhaustive) generator for invalid documents, is there an invalid document $w$ that is accepted by $\hypothesis$? In that case, $w$ is a counterexample. Both generators are used as explained in the previous step.
    \item In the key graph $\keyGraph[\hypothesis]$, is there a path $((p_1,k_1,p'_1)(p_2,k_2,p'_2)\ldots (p_n,k_n,p'_n))$ with $p_1 = q_0$ such that $k_i = k_j$ for some $i \neq j$ (see \Cref{cor:acyclic})? In that case, from this path, we construct a counterexample, i.e., a word accepted by $\hypothesis$ that is not a valid document. See \Cref{sec:counterexample_from_key_graph} for the details.
\end{enumerate}
If $\hypothesis$ fails every test (i.e., we could not find a counterexample), we conclude that $\hypothesis$ is correct, the equivalence query succeeds, and we finish the learning process.

\subsection{Experimental Evaluation}\label{sec:implementation:results}

\subsubsection{Evaluated Schemas}
For the experimental evaluation of our algorithms, we consider the following schemas, sorted in increasing size:
\begin{enumerate}
    \item A schema that accepts documents defined recursively.
        Each object contains a string and can contain an array
        whose single element satisfies the whole schema, i.e., this is a recursive list.
        See \Cref{sec:schemas_implementation}.
    \item A schema that accepts documents containing each type of values, i.e., an object, an array, a string, a number, an integer, and a Boolean.
        See \Cref{sec:schemas_implementation}.
    \item A schema that defines how snippets must be described in \emph{Visual Studio Code}.
        This schema can be downloaded from \url{https://raw.githubusercontent.com/Yash-Singh1/vscode-snippets-json-schema/main/schema.json}.
    \item A recursive schema that defines how the metadata files for \emph{VIM plugins} must be written, see \url{https://json.schemastore.org/vim-addon-info.json}.
    \item A schema that defines how \emph{Azure Functions Proxies} files must look like, see \url{https://json.schemastore.org/proxies.json}.
    \item A schema that defines the configuration file for a code coverage tool called \emph{codecov}, see \url{https://json.schemastore.org/codecov.json}.
\end{enumerate}
That is, we consider two schemas written by ourselves to test our framework, and four schemas that are used in real world cases.
The last four schemas were modified to make all object keys mandatory\footnote{That is, we added all the keys in a \texttt{required} field in each object.} and to remove unsupported keywords.
All schemas and scripts used for the benchmarks can be consulted on our repository~\cite{Staquet_Validating_JSON_Documents}.
In the rest of this section, the schemas are referred to by their order in the enumeration.

We present three types of experimental results.
First, we discuss the time and the number of membership and equivalence queries needed to learn a 1-SEVPA 
from a JSON schema. Then, given such a 1-SEVPA \(\automaton\), we give the time and memory required to compute its reachability relation \(\accRelation\) and its key graph \(\keyGraph\).
Finally, the time and memory used to validate a JSON document using the classical and our new algorithms are provided.
The server used for the benchmarks ran Debian 10 over Linux 5.4.73-1-pve with a 4-core Intel\textregistered{} Xeon\textregistered{} Silver 4214R Processor with 16.5M cache, and 64GB of RAM\@. 
Moreover, we used OpenJDK version 11.0.12.

\subsubsection{Learning VPAs}
As the first part of the preprocessing step of our validation algorithm, we must learn a 1-SEVPA\@.
For the first three evaluated schemas, we use an exhaustive generator thanks to the small number of valid documents that can be produced (up to a certain depth $D$ defined below).
For the remaining three schemas, we rely instead on a random generator where we fix the number of generated documents per round at \(C = 10000\).
For both exhaustive and random generators, the maximal depth of the generated documents is set at \(D = \depth(S) + 1\), where \(\depth(S)\) is the maximal number of nested defined objects and arrays in the schema \(S\), except for the recursive list schema where $D = 10$, and for the recursive \emph{VIM plugins} schema where $D = 7$. 

In all cases, we learn the 1-SEVPA while measuring the total time (including the time needed for the membership and equivalence queries), and the total number of queries.
After removing the bin state of the resulting 1-SEVPA (see \Cref{app:ComplexityKeyGraph}), we also measure its number of states and transitions, as well as its diameter.
For the first five schemas, we do not set a time limit and repeat the learning process ten times. For the last schema, we set a time limit of one week and, for time constraints, only perform the learning process once.
After that, we stop the computation and retrieve the learned 1-SEVPA at that point.

Recall that for one of the checks\footnote{Check 4., see Section~\ref{sec:implementation:learning}.} performed in an equivalence query, we must construct the key graph of the hypothesis $\hypothesis$.
In particular, we must compute its reachability relation $\accRelation[\hypothesis]$.
As we observe that a counterexample implies only a small change in $\hypothesis$, we gain efficiency by  avoiding to compute the new reachability relation from scratch at each equivalence query.

The results for the learning benchmarks are given in \Cref{tab:benchmarks:learning}. The learning process for the last schema was not completed after one week. The retrieved learned 1-SEVPA is therefore an approximation of this schema. Its key graph has repeated keys along some of its paths, a situation that cannot occur if the 1-SEVPA was correctly learned, see \Cref{cor:acyclic}.
Notice that, when using a random generator, different runs of the algorithm may yield different 1-SEVPAs.

\begin{table}
    \centering
    \begin{tabular}{rrrrrrrrrrr}
    \toprule
    Time (s) & Membership  & Equivalence & $|\states|$ & $|\Sigma|$ & $|\delta_{c}|$ & $|\delta_{r}|$ & $|\delta_i|$ & Diameter \\
    \midrule
    2.2      & 2055.0      & 5.0         & 7.0            & 15.0       & 14.0           & 3.0            & 5.0          & 3.0      \\
    4.5      & 69514.0     & 3.0         & 24.0           & 20.0       & 48.0           & 3.0            & 26.0         & 12.0     \\
    9.0      & 21943.0     & 5.0         & 16.0           & 17.0       & 32.0           & 7.0            & 18.0         & 13.0     \\
    9590.3   & 4246085.0   & 36.4        & 150.0          & 27.0       & 300.0          & 2946.5         & 760.3        & 9.0      \\
    35008.2  & 4063971.7   & 30.5        & 121.0          & 35.0       & 242.0          & 2123.0         & 752.5        & 13.3     \\
    Timeout  & 633049534.0 & 192.0       & 884.0          & 77.0       & 1768.0         & 89695.0        & 8557.0       & 28.0     \\
    \bottomrule
\end{tabular}
    \caption{Results of the learning benchmarks. For the first five schemas, values are averaged out of ten experiments. For the last schema, only one experiment was conducted.}%
    \label{tab:benchmarks:learning}
\end{table}

\subsubsection{Construction of the Key Graph}
The second part of the preprocessing step is to construct the key graph of the learned 1-SEVPA\@.
For each evaluated schema, we select the learned 1-SEVPA with the largest set of states, in order to report a worst-case measure.
From that 1-SEVPA $\automaton$, we compute its reachability relation $\accRelation$ and its key graph $\keyGraph$, and measure the time and memory used, as well as their sizes. Values obtained after a single experiment are given in \Cref{tab:benchmarks:preprocessing}. We can see that the size of the key graph is far from the theoretical upper bound in $\complexity(|\states|^2 \cdot |\keyAlphabet|)$ (see Lemma~\ref{lem:sizeKeyGraph}), and that its storage does not consume more than one megabyte, except for \emph{codecov} schema. That is, even for non-trivial schemas, the key graph is relatively lightweight.

\begin{table}
    \centering
    \begin{tabular}{rr@{\hspace{5pt}}r @{\hspace{20pt}} rrr@{\hspace{5pt}}r}
    \toprule
    \multicolumn{3}{c}{$\accRelation$} & \multicolumn{4}{c}{$\keyGraph$}                                                                        \\
    \cmidrule(r{20pt}){1-3} \cmidrule{4-7}
    Time (s)                           & Memory (kB)                     & Size   & Time (s) & Computation (kB)                     & Storage (kB) & Size \\
    \midrule
    34                                 & 492                             & 31     & 100      & 2231                            & 65      & 3    \\
    67                                 & 1152                            & 213    & 234      & 2623                            & 69      & 9    \\
    67                                 & 737                             & 125    & 118      & 2223                            & 69      & 10   \\
    1756                               & 10316                           & 5832   & 1715     & 11827                           & 419     & 418  \\
    2208                               & 13978                           & 4420   & 2839     & 17968                           & 667     & 541  \\
    377141                             & 212970                          & 270886 & 187659   & 120398                          & 16335   & 6397 \\
    \bottomrule
\end{tabular}

    \caption{Time and memory needed to compute $\accRelation$ and $\keyGraph$, and their size. Values are taken from a single experiment.
    For $\keyGraph$, the Computation (resp.\ Storage) column gives the memory required to compute $\keyGraph$ (resp.\ to store $\keyGraph$).
    }%
    \label{tab:benchmarks:preprocessing}
\end{table}

\subsubsection{Comparing Validation Algorithms}
Finally, we compare both validation algorithms: the classical one and our new streaming algorithm.
For the latter, we use the 1-SEVPA (and its key graph) selected in the previous section. 
We first generate 5000 valid and 5000 invalid JSON documents using a random generator, with a maximal depth equal to $D = 20$. We then execute both validation algorithms on these documents, while measuring the time and memory required.
On all considered documents, both algorithms return the same classification output, even for the partially learned 1-SEVPA.\@

Since obtaining a close approximation of the consumed memory requires Java to \enquote{stop the world} to destroy all unused objects, we execute each algorithm twice: one time where we only measure time, and a second time where we request the memory to be cleaned each iteration of the algorithm (or each recursive call).

For our algorithm, we only measure the memory required to execute the algorithm, as we do not need to store the whole document to be able to process it. We also do not count the memory to store the 1-SEVPA and its key graph.
As the classical algorithm must have the complete document stored in memory, in this case, we sum the RAM consumption for the document and for the algorithm itself. This is coherent to what happens in actual web-service handling: Whenever a new validation request is received, we would spawn a new subprocess that handles a specific document. Since the 1-SEVPA and its key graph are the same for all subprocesses, they would be loaded in a memory space shared by all processes. 

In order to focus on the three largest schemas, we defer the presentation of the first three schemas to \Cref{app:more_results}. Results for \emph{VIM plugins}, \emph{Azure Functions Proxies}, and \emph{codecov} are given in \Crefrange{fig:benchmarks:validating:vim}{fig:benchmarks:validating:codecov}.
The blue crosses give the values for our algorithm, while the red circles stand for the classical algorithm.
The x-axis gives the number of symbols of our alphabet in each document.
We can see that our algorithm usually requires more time than the classical algorithm to validate a document.
We observed that a majority of the total time is spent computing the set $\validPaths(K, \Marks)$.
For both \emph{VIM plugins} and \emph{Azure Functions Proxies}, our algorithm consumes less memory than the classical algorithm. We observed that, even if we add the number of bytes needed to store the automaton and the key graph (see Tables~\ref{tab:benchmarks:learning} and~\ref{tab:benchmarks:preprocessing}), we remain under the classical algorithm, especially for large documents.

For the last \emph{codecov} schema, we recall that the learning process was not completed, leading to an approximated 1-SEVPA whose key graph has repeated keys. We had to modify our validation algorithm, in particular the computation of the set $\validPaths(K,\Marks)$, see proof of \Cref{prop:timeComplexityValidation}: when visiting each path of the key graph, we pay attention to stop it on the second visit of a key. This means that we have to explore some invalid paths, increasing the memory and time consumed by our algorithm. Thus, it appears that, while a 1-SEVPA that is not completely learned can still be used in our algorithm to correctly decide whether a document is valid, stopping the learning process early may drastically increase the time and space required.

\begin{figure}
    \centering
    \begin{subfigure}{0.45\textwidth}
        \includegraphics{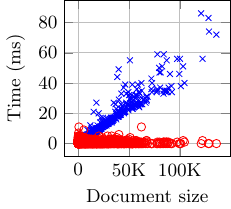}
        \caption{Time.}
    \end{subfigure}
    \hfill
    \begin{subfigure}{0.45\textwidth}
        \includegraphics{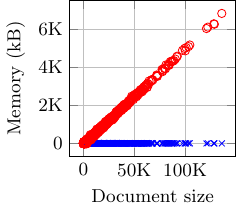}
        \caption{Memory.}
    \end{subfigure}
    \caption{Results of validation benchmarks for the metadata files for \emph{VIM plugins}.%
    }%
    \label{fig:benchmarks:validating:vim}

    \begin{subfigure}{0.45\textwidth}
        \includegraphics{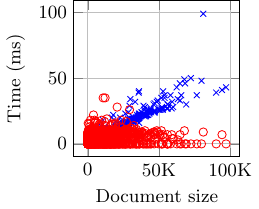}
        \caption{Time.}
    \end{subfigure}
    \hfill
    \begin{subfigure}{0.45\textwidth}
        \includegraphics{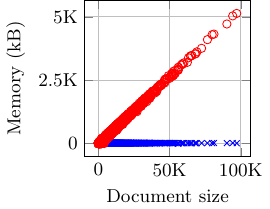}
        \caption{Memory.}
    \end{subfigure}
    \caption{Results of validation benchmarks for the \emph{Azure Functions Proxies} file.%
    }%
    \label{fig:benchmarks:validating:proxies}

    \begin{subfigure}{0.45\textwidth}
        \includegraphics{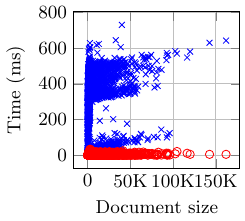}
        \caption{Time.}
    \end{subfigure}
    \hfill
    \begin{subfigure}{0.45\textwidth}
        \includegraphics{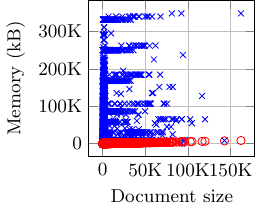}
        \caption{Memory.}
    \end{subfigure}
    \caption{Results of validation benchmarks for the configuration files of \emph{codecov}.%
    }%
    \label{fig:benchmarks:validating:codecov}
\end{figure}

\subsection{Worst Case for the Classical Validator}

Finally, let us consider one last schema, purposefully made to highlight the differences between both algorithms.
Its extended CFG (up to key permutations) is as follows, with \(\ell \geq 1\):
\[
    \begin{array}{lcll}
        \nonTerminal &\Coloneqq& \nonTerminal_1 \land \nonTerminal_2 \land \dotsb \land \nonTerminal_{\ell}&\\
        \nonTerminal_i &\Coloneqq& R_i \lor R_{i+1} \lor \dotsb \lor R_{\ell} &\forall i \in \{1, \dotsc, \ell\}\\
        R_i &\Coloneqq& \laccol k_i \verb!s! \comma k_{i+1} \verb!s! \comma \dotso \comma k_{\ell} \verb!s! \raccol &\forall i \in \{1, \dotsc, \ell\}
    \end{array}
\]
Notice the difference between $\nonTerminal_i$ and $\nonTerminal_{i+1}$: $R_i$ is removed from $\nonTerminal_i$ to get $\nonTerminal_{i+1}$. Hence an equivalent grammar is $\nonTerminal \Coloneqq R_\ell$. On the one hand, the classical validator has to explore each $\nonTerminal_i$ in the conjunction, and each $R_j$ in the related disjunction, before finally considering $\nonTerminal_\ell$ and thus $R_{\ell}$. On the other hand, as we learn minimal 1-SEVPAs, we can here easily construct a 1-SEVPA directly for the simplified grammar \(\nonTerminal \Coloneqq R_{\ell}\). This example is a worst-case for the classical validator, while being easy to handle for our algorithm, as shown in \Cref{tab:benchmarks:preprocessing:worstcase,fig:benchmarks:validating:worstcase}.

\begin{table}
    \centering
    \begin{tabular}{rr@{\hspace{5pt}}r @{\hspace{20pt}} rrr@{\hspace{5pt}}r}
    \toprule
    \multicolumn{3}{c}{$\accRelation$} & \multicolumn{4}{c}{$\keyGraph$}                                                                        \\
    \cmidrule(r{20pt}){1-3} \cmidrule{4-7}
    Time (s)                           & Memory (kB)                     & Size   & Time (s) & Computation (kB)                     & Storage (kB) & Size \\
    \midrule
    33.0 & 492.0 & 31.0 & 109.0 & 1981.0 & 60.0 & 2.0 \\
    \bottomrule
\end{tabular}

    \caption{Time and memory needed to compute $\accRelation$ and $\keyGraph$, and their size. Values are taken from a single experiment.
    For $\keyGraph$, the Computation (resp. Storage) column gives the memory required to compute $\keyGraph$ (resp.\ to store $\keyGraph$).
    }%
    \label{tab:benchmarks:preprocessing:worstcase}
\end{table}

\begin{figure}
    \begin{subfigure}{0.45\textwidth}
        \includegraphics{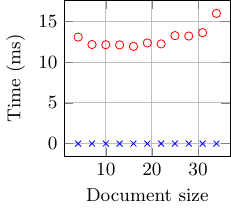}
        \caption{Time.}
    \end{subfigure}
    \hfill
    \begin{subfigure}{0.45\textwidth}
        \includegraphics{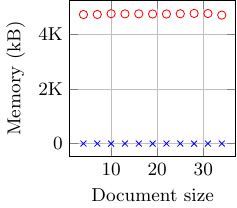}
        \caption{Memory.}
    \end{subfigure}
    \caption{Results of validation benchmarks for the \emph{worst-case} schema, with \(\ell = 10\).%
    }%
    \label{fig:benchmarks:validating:worstcase}
\end{figure}

\section{Conclusion}
In this paper, we have proved that, given any JSON schema, one can construct a VPA that accepts the same set of JSON documents as the schema. Leveraging this fact and \TTT, we designed a learning algorithm that yields a VPA for the schema, under a fixed order of the keys inside objects. We then abstracted as a key graph the part of this VPA dealing with objects, and proposed a streaming algorithm that uses both the graph and the VPA to decide whether a document is valid for the schema, under any order on the keys.

As future work, one could focus on constructing the VPA directly from the schema, without going through a learning algorithm. While this task is easy if the schema does not contain Boolean operations, it is not yet clear how to proceed in the general case. Second, it could be worthwhile to compare our algorithm against an implementation of a classical algorithm used in the industry. This would require either to modify the industrial implementations to support abstractions, or to modify our algorithm to work on unabstracted JSON schemas. Third, in our validation approach, we decided to use a VPA accepting the JSON documents satisfying a fixed key order --- thus requiring to use the key graph and its costly computation of the set $\validPaths(K,\Marks)$. It could be interesting to make additional experiments to compare this approach with one where we instead use a VPA accepting the JSON documents and all their key permutations --- in this case, reasoning on the key graph would no longer be needed.
Finally, motivated by obtaining efficient querying algorithms on XML trees, the authors of~\cite{DBLP:conf/birthday/SeidlSM08} have introduced the concept of mixed automata in a way to accept subsets of unranked trees where some nodes have ordered sons and some other have unordered sons. It would be interesting to adapt our validation algorithm to different formalisms of documents, such as the one of mixed automata.
 
\bibliographystyle{splncs04}
\bibliography{references}

\appendix
    
\section{Interest of Fixing a Key Order}\label{app:expSmaller}
In this section, we provide a family of JSON schemas $S_n$, $n \geq 1$, and a key-order, such that the minimal 1-SEVPA $\automaton_n$ accepting all the JSON documents satisfying $S_n$ is exponentially larger than the minimal 1-SEVPA ${\automaton[B]}_n$ accepting those documents respecting the key order.

Let $\keyAlphabet = \{k_1,\ldots,k_n\}$ with the key order $k_1 < \dotsb < k_n$. The proposed JSON schema $S_n$ is the one defining all JSON documents $\laccol k_{i_1} \verb!s! \comma \ldots \comma k_{i_n} \verb!s! \raccol$ where $(k_{i_1}, \ldots, k_{i_n})$ is a permutation of $(k_1,\ldots,k_n)$. It is easy to see that the 1-SEVPA ${\automaton[B]}_n$ has $\complexity(n)$ states since it only accepts the document $\laccol k_{1} \verb!s! \comma \ldots \comma k_{n} \verb!s! \raccol$. Let us show that the 1-SEVPA ${\automaton[A]}_n$ has at least $2^n$ states.

Let us consider the state $q$ of $\automaton_n$ such that
    \[
        \langle q_0, \emptyword \rangle \xrightarrow{\laccol} \langle q_0, \gamma \rangle \xrightarrow{x} \langle q, \gamma \rangle \xrightarrow{y} \langle p, \gamma \rangle \xrightarrow{\raccol} \langle r, \emptyword \rangle
    \]
with $r \in \finalStates$, $x = k_{i_1} \verb!s! \comma \ldots \comma k_{i_\ell} \verb!s!$, and $y = k_{i_{\ell + 1}} \verb!s! \comma \ldots \comma k_{i_n} \verb!s!$ where $(k_{i_1}, \ldots, k_{i_\ell}, \ldots, k_{i_n})$ is a permutation of $(k_1,\ldots,k_n)$. It is not possible to have another path 
    \[
        \langle q_0, \gamma \rangle \xrightarrow{x'} \langle q, \gamma \rangle
    \]
    with $x' = k_{j_1} \verb!s! \comma \ldots \comma k_{j_m} \verb!s!$ such that $\{k_{i_1}, \ldots, k_{i_\ell} \} \neq \{k_{j_1}, \ldots, k_{j_m}\}$. 
Otherwise, ${\automaton[A]}_n$ would accept an invalid document. It follows that there are as many such states $q$ as there are subsets $\{k_{i_1}, \ldots, k_{i_\ell} \}$ of $\keyAlphabet$. Therefore ${\automaton[A]}_n$ has at least $2^n$ states.

\section{Complexity of the Key Graph}\label{app:ComplexityKeyGraph}

We recall that the algorithm for computing the key graph $\keyGraph$ from the 1-SEVPA $\automaton$ requires three main steps: (1) compute the relation $\accRelation$,  (2) detect the bin state and remove it from $\automaton$ (it is unique, if it exists), (3) compute the vertices and edges of $\keyGraph$. We proceed as follows and give the related time complexities.\footnote{Recall that $\automaton$ is deterministic meaning that given a left-hand side of a transition, we have access in constant time to its right-hand side.}

\subsection{Relation $\accRelation$}
First, we enrich the reachability relation $\accRelation$ with words as follows. We define a map $\wit : \accRelation \rightarrow \alphabet^*$ such that for all $(q,q') \in \accRelation$, we have $\wit(q,q') = w$ with $\left\langle q, \emptyword \right\rangle \xrightarrow{w} \left\langle q', \emptyword \right\rangle$. The word $w$ is a \emph{witness} of the membership of $(q,q')$ to $\accRelation$. 
We compute this map as follows by choosing the witnesses step by step while computing $\accRelation$ (an algorithm computing $\accRelation$ is given in Section~\ref{subsec:KeyGraphComputation}).
\begin{itemize}
    \item Initially, for all $q \in Q$, we define $\wit(q,q) = \emptyword$, and for each $(q,q')$ such that $q \neq q'$ and there exists $(q,a,q') \in \internalFunction$, we define $\wit(q,q') = a$.
    \item During the computation of $\accRelation$, a new element $(q,q')$ is added to $\accRelation$ 
    \begin{enumerate}
        \item when there exist $(q,a,p,\gamma) \in \callFunction$, $(p,p') \in \accRelation$, and $(p',\bar a,\gamma,q') \in \returnFunction$,
        \item or when there exist $(q,p), (p,q') \in \accRelation$ (during the transitive closure).
    \end{enumerate}
    In the first case, if $\wit(p,p') = w$, then we define $\wit(q,q') = a w \bar a$. In the second case, if $\wit(q,p) = w$ and $\wit(p,q') = w'$, then we define $\wit(q,q') = ww'$.
\end{itemize}

\begin{lemma}\label{lem:ComputingReach}
Computing the relation $\accRelation$ enriched with the map $\wit$ is in time $\complexity(|\states|^5 + |\transitionFunction| \cdot |\states|^4)$.
\end{lemma}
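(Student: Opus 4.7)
The plan is to bound the time complexity by decomposing the algorithm into three parts: the initialization, the cost of one iteration of the main loop, and the number of iterations. I would then multiply and sum to obtain the stated bound $\complexity(|\states|^5 + |\transitionFunction| \cdot |\states|^4)$.

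First, I would argue that the initialization (computing the transitive closure of $\identity{\states} \cup \{(q,q') \mid \exists (q,a,q') \in \internalFunction\}$) can be carried out in $\complexity(|\states|^3)$ via a standard Warshall-style algorithm, assigning a witness $\wit$ for every pair as it is first added. Crucially, I would represent each witness symbolically (as a pointer to either an internal transition, a call/return match, or a composition of two previously computed witnesses), so that extending or combining witnesses is constant time; otherwise, naive string concatenation of $aw\bar a$ or $ww'$ could blow up the representation exponentially and ruin the bound. This symbolic representation suffices, since the surrounding algorithm only queries witnesses structurally when building $\keyGraph$.

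Second, I would analyse one iteration of the main loop. The augmentation step ranges over all call transitions $(q,a,p,\gamma) \in \callFunction$ and all pairs $(p,p') \in \states^2$, and for each such choice performs an $\complexity(1)$ lookup (using determinism of $\automaton$) for a matching return transition $(p',\bar a,\gamma,q') \in \returnFunction$; this costs $\complexity(|\transitionFunction| \cdot |\states|^2)$. The subsequent transitive-closure step, together with composition of witnesses, costs $\complexity(|\states|^3)$. Thus a single iteration runs in $\complexity(|\transitionFunction| \cdot |\states|^2 + |\states|^3)$.

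Third, I would observe that every iteration that does not trigger the termination condition strictly enlarges $\accRelation \subseteq \states^2$, so the loop executes at most $|\states|^2$ times. Multiplying the per-iteration cost by the iteration count and adding the initialization cost yields
\[
\complexity\!\left(|\states|^3 + |\states|^2 \cdot \big(|\transitionFunction| \cdot |\states|^2 + |\states|^3\big)\right) = \complexity(|\transitionFunction| \cdot |\states|^4 + |\states|^5),
\]
as claimed. The main obstacle I anticipate is the witness bookkeeping: one has to be careful that the map $\wit$ is maintained without enlarging the time per update, which is why a symbolic/DAG representation of witness words, rather than explicit strings, is essential to the argument.
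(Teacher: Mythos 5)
Your proof is correct and follows essentially the same decomposition as the paper's: initialization, at most $|\states|^2$ iterations of the main loop (each strictly enlarging $\accRelation$), a Warshall-style transitive closure in $\complexity(|\states|^3)$ per iteration, and an augmentation step in $\complexity(|\transitionFunction|\cdot|\states|^2)$ using determinism for constant-time return-transition lookup. Your additional remark that the witnesses must be kept symbolically (as a DAG of compositions) rather than as explicit strings is a careful point the paper leaves implicit, and it is indeed needed for the per-update cost not to blow up.
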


\begin{proof}
The relation $\accRelation$ with its witnesses are stored in a matrix of size $|\states|^2$. The initialization is in $\complexity(|\states|^2 + |\internalFunction|)$. The main loop uses at most $|\states|^2$ steps. One operation inside this loop is the transitive closure that can be computed in $\complexity(|\states|^3)$ with Warshall's algorithm. The other operation is in $\complexity(|\callFunction| \cdot |\states|^2)$. Therefore, the overall complexity of computing $\accRelation$ is in $\complexity(|\states|^5 + |\transitionFunction| \cdot |\states|^4)$.
\qed\end{proof}

\subsection{Bin State}
Second, for detecting the bin state of $\automaton$, we define the following set $\setR \subseteq Q$: 
    \[
        \setR = \{p \in \states \mid \exists \left\langle q_0, \emptyword \right\rangle \xrightarrow{w} \left\langle q_0, \stack \right\rangle \mbox{ and } \left\langle p, \stack \right\rangle \xrightarrow{w'} \left\langle q, \emptyword \right\rangle  \mbox{ with } q \in \finalStates\},
    \]
enriched with the witness map $\wit' : \setR \rightarrow {(\alphabet^*)}^2$ that assigns to each $p \in \setR$ a pair of words $(w,w')$ as in the previous definition. In the definition of $\setR$, notice the same stack content $\stack$ in both configurations $\left\langle q_0, \stack \right\rangle$ and $\left\langle p, \stack \right\rangle$, and the presence of $q_0$ in the first configuration\footnote{We can restrict to $q_0$, as we work with 1-SEVPAs having all their call transitions going to $q_0$.}.

From this set $\setR$, we easily derive the next lemma. 

\begin{lemma}\label{lem:BinState}
For each state $p \in \states$, $p$ is not a bin state if and only if $p \in \setR$.
\end{lemma}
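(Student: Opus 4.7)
The plan is to prove both directions of the equivalence by exploiting the defining property of 1-SEVPAs: every call transition targets the initial state $q_0$.

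For the forward direction, I would take an accepting run
\[
\langle q_0, \emptyword \rangle \xrightarrow{v} \langle p, \sigma \rangle \xrightarrow{v'} \langle q, \emptyword \rangle
\]
with $q \in \finalStates$ witnessing that $p$ is not a bin state. Decomposing $v$ by its unmatched call symbols, $v = v_1 a_1 v_2 a_2 \cdots v_m a_m v_{m+1}$ (with each $v_i$ well-matched and each $a_i \in \callAlphabet$ unmatched), the 1-SEVPA property guarantees that reading each call $a_i$ moves the control to $q_0$. The prefix $w := v_1 a_1 \cdots v_m a_m$ (or $w := \emptyword$ when $m = 0$) therefore satisfies $\langle q_0, \emptyword \rangle \xrightarrow{w} \langle q_0, \sigma \rangle$: the state is $q_0$ as the target of the last unmatched call (or the initial configuration itself), and the stack matches $\sigma$ because the trailing well-matched factor $v_{m+1}$ does not alter the stack. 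Taking $w' := v'$ then supplies the second path required by $\setR$'s definition, so $p \in \setR$.

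For the backward direction, I would take witnesses $(w, \sigma, w', q)$ for $p \in \setR$ and construct a single connected accepting run through $p$ by bridging the two paths. The required bridge is a well-matched word $u$ with $\langle q_0, \emptyword \rangle \xrightarrow{u} \langle p, \emptyword \rangle$, since well-matched words act independently of the stack content beneath their starting level and therefore also give $\langle q_0, \sigma \rangle \xrightarrow{u} \langle p, \sigma \rangle$. Concatenation then yields
\[
\langle q_0, \emptyword \rangle \xrightarrow{w} \langle q_0, \sigma \rangle \xrightarrow{u} \langle p, \sigma \rangle \xrightarrow{w'} \langle q, \emptyword \rangle,
\]
an accepting run passing through $p$, so $p$ is not a bin state.

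The main obstacle is supplying the bridging word $u$, equivalently showing $(q_0, p) \in \accRelation$. I expect this to follow under the standing assumption that every state of $\automaton$ is reachable from $q_0$: any realizing path to $p$ decomposes as in the forward direction, producing the desired well-matched factor. The algorithm computing $\setR$ elsewhere in this appendix effectively bundles this reachability check into the backward propagation from $\finalStates$, so that $\setR$ as computed exactly characterizes the non-bin states.
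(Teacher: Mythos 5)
Your proof is correct and takes essentially the same approach as the paper: the forward direction uses the identical decomposition of the run into unmatched call symbols (exploiting that every call transition of a 1-SEVPA targets $q_0$), and the backward direction bridges the two witnesses of $\setR$ with a well-matched word realizing $(q_0,p)\in\accRelation$. The reachability of $p$ that you flag as the remaining obligation is exactly what the paper invokes (via minimality of $\automaton$) and is discharged by the same unmatched-call decomposition you already used.
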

\begin{proof}
Assume $p$ is not a bin state, i.e., there exists a path $\left\langle q_0, \emptyword \right\rangle \xrightarrow{t} \left\langle p, \stack \right\rangle \xrightarrow{t'} \left\langle q, \emptyword \right\rangle$ in $\transitionSystem$, with $q \in \finalStates$. By repeating the argument given in the proof of Lemma~\ref{lem:abstractKeyGraph}, see~\eqref{eq:noBinState}, we get that this path decomposes as $\left\langle q_0, \emptyword \right\rangle \xrightarrow{t''} \left\langle q_0, \stack \right\rangle \xrightarrow{t_{m+1}} \left\langle p, \stack \right\rangle \xrightarrow{t'} \left\langle q, \emptyword \right\rangle$ with $t = t''t_{m+1}$. This shows that $p \in \setR$.

Now, assume $p \in \setR$, i.e., there exists two paths $\langle q_0, \emptyword \rangle \xrightarrow{w} \langle q_0, \sigma \rangle$ and $\langle p, \sigma \rangle \xrightarrow{w'} \langle q, \emptyword \rangle$ with $q \in \finalStates$.
If we prove that $(q_0,p) \in \accRelation$, we are done. As $\automaton$ is minimal, there exists some path $\left\langle q_0, \emptyword \right\rangle \xrightarrow{t} \left\langle p, \stack \right\rangle$ in $\transitionSystem$. As done above with~\eqref{eq:noBinState}, we get that this path decomposes as $\left\langle q_0, \emptyword \right\rangle \xrightarrow{t''} \left\langle q_0, \stack \right\rangle \xrightarrow{t_{m+1}} \left\langle p, \stack \right\rangle$. It follows that $(q_0,p) \in \accRelation$.
\qed\end{proof}

We compute $\setR$ and $\wit'$ as follows:
\begin{itemize}
    \item Initially, add $q$ to $\setR$ for all $q \in \finalStates$, and assign the witness $\wit'(q) = (\emptyword,\emptyword)$ to $q$. 
    \item Then, repeat until $\setR$ stabilizes:
    \begin{enumerate}
    \item If we have $p \in \setR$ with $\wit'(p) = (w,w')$ and $(p',p) \in \accRelation$ with $\wit(p',p) = t'$, then add\footnote{If $p$ already belongs to $\setR$, we do nothing.} $p'$ to $\setR$ with $\wit'(p') = (w,t'w')$.
    \item If we have $p \in \setR$ with $\wit'(p) = (w,w')$, $(q_0,r) \in \accRelation$ with $\wit(q_0,r) = t$, $(r,a,\gamma,q_0) \in \callFunction$ and $(p', \bar a, \gamma, p) \in \returnFunction$, then add $p'$ to $\setR$ with $\wit'(p') = (wta,\bar a w')$.
    \end{enumerate}
\end{itemize}
This algorithm is correct as shown by the next lemma.

\begin{lemma}
    Let \(\setRalgo\) computed by the algorithm above.
    Then, \(\setRalgo = \setR\).
\end{lemma}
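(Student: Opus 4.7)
The plan is to prove the two inclusions $\setRalgo \subseteq \setR$ and $\setR \subseteq \setRalgo$ separately, the first by a straightforward induction on when a state is added, and the second by induction on the stack depth $|\sigma|$ appearing in the certificate for $\setR$.

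For soundness ($\setRalgo \subseteq \setR$), I would proceed by induction on the step at which $p$ is added. In the initialization step, every $q \in \finalStates$ obviously belongs to $\setR$ with the trivial witnesses $w = w' = \emptyword$ and $\sigma = \emptyword$. For the inductive step, I would check that each of the two update rules preserves the invariant. For Rule~1, if $p \in \setR$ with $\wit'(p) = (w,w')$ and $(p',p) \in \accRelation$ with $\wit(p',p) = t'$, then since $t'$ is well-matched it can be read from any stack, so $\langle p',\sigma \rangle \xrightarrow{t'} \langle p,\sigma \rangle$, and concatenating with $w'$ gives $\wit'(p') = (w,t'w')$. For Rule~2, from $\langle q_0,\emptyword\rangle \xrightarrow{w} \langle q_0,\sigma\rangle$, lifting $(q_0,r) \in \accRelation$ yields $\langle q_0,\sigma\rangle \xrightarrow{t} \langle r,\sigma\rangle$, the call $(r,a,q_0,\gamma)$ pushes $\gamma$ and lands at $\langle q_0,\gamma\sigma\rangle$, so the first component of $\wit'(p')$ is $wta$. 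Symmetrically, the return $(p',\bar a,\gamma,p)$ pops $\gamma$ and lands at $\langle p,\sigma\rangle$, after which $w'$ completes the computation to a final state, so the second component is $\bar a w'$.

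For completeness ($\setR \subseteq \setRalgo$), I would do induction on $|\sigma|$, where $\sigma$ is the stack content witnessing $p \in \setR$. The base case $|\sigma| = 0$ is easy: the path $\langle p,\emptyword\rangle \xrightarrow{w'} \langle q,\emptyword\rangle$ shows $(p,q) \in \accRelation$ with $q \in \finalStates \subseteq \setRalgo$, and Rule~1 then adds $p$. For the inductive step with $\sigma = \gamma_1 \cdots \gamma_n$ (top on the left), I consider the first moment in the $w'$-path at which $\gamma_1$ is popped, writing $w' = u_1 \bar a w''$ where $u_1 \in \wellMatched$, $\langle p,\sigma\rangle \xrightarrow{u_1} \langle p_1,\sigma\rangle \xrightarrow{\bar a} \langle p_2,\gamma_2 \cdots \gamma_n \rangle \xrightarrow{w''} \langle q,\emptyword\rangle$ with return transition $(p_1,\bar a,\gamma_1,p_2)$. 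On the $w$-side, I decompose $w$ along its unmatched calls $a_1,\dots,a_n$ pushing $\gamma_n,\dots,\gamma_1$ respectively; the crucial fact is that in a 1-SEVPA the call $a_{n-1}$ that pushes $\gamma_2$ lands at $q_0$, so after reading the prefix ending at $a_{n-1}$ we are at $\langle q_0, \gamma_2\cdots\gamma_n\rangle$, and the remaining well-matched part brings us to $\langle s, \gamma_2\cdots\gamma_n\rangle$ where $s$ is the source of the last call $a_n = a$ pushing $\gamma_1 = (s,a)$. This exhibits $p_2 \in \setR$ with stack $\gamma_2\cdots\gamma_n$ of size $n-1$, so by induction $p_2 \in \setRalgo$. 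Applying Rule~2 with $(q_0,s) \in \accRelation$, the call $(s,a,q_0,(s,a))$ and the return $(p_1,\bar a,(s,a),p_2)$ then adds $p_1$ to $\setRalgo$, and a final application of Rule~1 using $(p,p_1) \in \accRelation$ witnessed by $u_1$ adds $p$.

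The main obstacle is the inductive step of the reverse inclusion: one must decompose $w$ and $w'$ consistently so as to expose a smaller-stack element of $\setR$, and the decomposition crucially exploits the 1-SEVPA property that every call transition has $q_0$ as its target. Without that property, the intermediate configuration obtained by truncating $w$ before the last push would not necessarily have $q_0$ as its state, and the induction hypothesis could not be applied to $p_2$.
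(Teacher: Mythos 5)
Your proof is correct and follows essentially the same route as the paper's: the inclusion $\setRalgo \subseteq \setR$ is checked rule by rule, and the reverse inclusion is proved by induction on the number of unmatched calls (your $|\sigma|$ equals the paper's induction measure $\balance(w)$), splitting $w$ at its last unmatched call and $w'$ at its first unmatched return, applying the induction hypothesis to the post-pop state, and then closing with Rule~2 followed by Rule~1 exactly as the paper does. Your explicit remark that the argument hinges on every call transition of a 1-SEVPA targeting $q_0$ is the same fact the paper uses implicitly when it writes the truncated $w$-path as ending in configuration $\langle q_0, \sigma_1 \rangle$.
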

\begin{proof}
    It is easy to see that \(\setRalgo \subseteq \setR\).
    Let us prove that \(\setR \subseteq \setRalgo\).
    Let \(p \in \setR\).
    That is, we have \(w, w' \in \Sigma^*, \sigma \in \stackAlphabet^*\) and \(q \in \finalStates\) such that
    \begin{equation}
        \langle q_0, \emptyword \rangle \xrightarrow{w} \langle q_0, \sigma \rangle \text{ and } \langle p, \sigma \rangle \xrightarrow{w'} \langle q, \emptyword \rangle. \label{eq:proof_setR:paths}
    \end{equation}
    We prove by induction over \(\balance(w) = -\balance(w')\) that \(p \in \setRalgo\).
    
    \emph{Base case:} \(\balance(w) = 0\), i.e., \(\sigma = \emptyword\).
        Thus, we can focus on the witnesses \((\emptyword, w')\) instead of \((w, w')\), as \(\langle q_0, \emptyword \rangle \xrightarrow{\emptyword} \langle q_0, \emptyword \rangle\).
        Moreover, we have \(w' \in \wellMatched\) and \((p, q) \in \accRelation\).
        By the initialization, \(q \in \setRalgo\), and by instruction 1., \(p \in \setRalgo\).

    \emph{Induction step:} \(\balance(w) > 0\).
        In that case, we can decompose \(w = t w_1 a w_2\), with \(t \in {(\wellMatched \cdot \callAlphabet)}^*, a \in \Sigma_c, w_1, w_2 \in \wellMatched\), and \(w' = w'' \overline{a} t'\), with \(w'' \in \wellMatched, \overline{a} \in \returnAlphabet\).
        We can thus decompose the paths~\eqref{eq:proof_setR:paths} into
        \begin{gather*}
            \langle q_0, \emptyword \rangle \xrightarrow{t} \langle q_0, \sigma_1 \rangle \xrightarrow{w_1} \langle r, \sigma_1 \rangle \xrightarrow{a} \langle q_0, \sigma \rangle \xrightarrow{w_2} \langle q_0, \sigma \rangle
        \shortintertext{and}
            \langle p, \sigma \rangle \xrightarrow{w''} \langle s, \sigma \rangle \xrightarrow{\overline{a}} \langle s', \sigma_1 \rangle \xrightarrow{t'} \langle q, \emptyword \rangle.
        \end{gather*}
        Therefore, we can focus on the witnesses \((t w_1 a, w')\) instead of \((w, w')\).

        Since \(\langle q_0, \emptyword \rangle \xrightarrow{t} \langle q_0, \sigma_1 \rangle\) and \(\langle s', \sigma_1 \rangle \xrightarrow{t'} \langle q, \emptyword \rangle\), it holds that \(s' \in \setR\) and, by induction, \(s' \in \setRalgo\).
        By instruction 2.\ of the algorithm, \(s \in \setRalgo\).
        Finally, by instruction 1., \(p \in \setRalgo\).

    In conclusion, we have \(p \in \setR \implies p \in \setRalgo\).
\qed\end{proof}

\begin{lemma}\label{lem:ComputingBinState}
Detecting and removing the bin state of $\automaton$, if it exists, is in time $\complexity(|\transitionFunction|\cdot|\states|^4)$.
\end{lemma}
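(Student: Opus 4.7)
The plan is to analyze the three-phase procedure: (i) precompute the reachability relation $\accRelation$, (ii) run the fixed-point algorithm computing $\setR$, and (iii) excise the bin state (if any) from $\automaton$. Phase (i) has already been bounded by $\complexity(|\states|^5 + |\transitionFunction|\cdot|\states|^4)$ in Lemma~\ref{lem:ComputingReach}. Phase (iii), once $\setR$ is known, only requires scanning $\states$ to identify the bin state (unique by Section~\ref{subsec:MinimalVPAs}) and filtering out the $\complexity(|\transitionFunction|)$ transitions that touch it; this is clearly within the claimed bound.

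The nontrivial part is phase (ii). I would reorganize the fixed-point as a worklist (BFS-like) routine: the final states are initially added to $\setR$ and to a queue, and each popped state $p$ is processed once to saturate instructions~1 and~2 with respect to $p$. Because each of the at most $|\states|$ states enters $\setR$ at most once, the total number of state processings is $\complexity(|\states|)$. For instruction~1, when $p$ is processed I iterate over all $(p',p) \in \accRelation$ and add each $p'$ to $\setR$ (with the updated witness) if it is not already present; this costs $\complexity(|\states|)$ per processed state, hence $\complexity(|\states|^2)$ in total. For instruction~2, I iterate over the return transitions $(p', \bar a, \gamma, p) \in \returnFunction$ whose last coordinate is $p$, and for each such transition I scan the matching call transitions $(r, a, \gamma, q_0) \in \callFunction$ (sharing the same $\gamma$ and $a$) and query $(q_0,r) \in \accRelation$ in constant time (the relation is stored as a matrix, see Lemma~\ref{lem:ComputingReach}); summed over all processed states this is bounded by $\complexity(|\returnFunction| \cdot |\callFunction|) \leq \complexity(|\transitionFunction|^2)$.

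Summing the three phases yields $\complexity(|\states|^5 + |\transitionFunction|\cdot|\states|^4 + |\transitionFunction|^2)$. The term $|\transitionFunction|^2$ is absorbed by $|\transitionFunction|\cdot|\states|^4$ whenever $|\transitionFunction| \leq |\states|^4$, which is the typical regime here since $|\transitionFunction| = \complexity(|\states|^2 \cdot |\alphabet|)$ for 1-SEVPAs. Likewise, $|\states|^5$ is absorbed as soon as $|\transitionFunction| \geq |\states|$, which holds after discarding transitions with no effect on the accepted language. The overall complexity thus collapses to $\complexity(|\transitionFunction|\cdot|\states|^4)$ as claimed.

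The main obstacle is phase~(ii): ensuring that we amortize the cost of instruction~2 across the whole run rather than paying it per iteration. A naive fixed-point that re-scans $\setR$ together with both call and return transition sets at every round would blow up by an extra factor of $|\states|$ and overshoot the target bound. The worklist discipline, where the expansion triggered by each state is charged once globally, is what keeps instruction~2 at $\complexity(|\transitionFunction|^2)$ and thereby makes the overall bound go through.
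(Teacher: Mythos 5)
Your algorithm is the same as the paper's --- the fixed-point computation of $\setR$ via instructions~1 and~2, detection of the bin state through Lemma~\ref{lem:BinState}, and removal of that state together with its incident transitions --- and your worklist reorganization, in which each state is processed once upon entering $\setR$, is a legitimate (indeed slightly sharper) way to run the fixed point than the paper's accounting of ``at most $|\states|$ rounds, each of cost $\complexity(|\states|^3 + |\returnFunction|\cdot|\states|^3)$.'' Two accounting points, however, keep your argument from cleanly delivering the stated bound. First, the paper does not charge the computation of $\accRelation$ to this lemma at all: that is Lemma~\ref{lem:ComputingReach}, whose $\complexity(|\states|^5 + |\transitionFunction|\cdot|\states|^4)$ cost is added separately in Proposition~\ref{prop:KeyGraph}. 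Folding it in here happens to be harmless only because a minimal 1-SEVPA has a total transition function, hence $|\transitionFunction|\geq|\states|$, but it obliges you to an absorption step the lemma never required. Second, and more substantively, your bound of $\complexity(|\returnFunction|\cdot|\callFunction|)$ for instruction~2, followed by the absorption $|\transitionFunction|^2 = \complexity(|\transitionFunction|\cdot|\states|^4)$, is conditional on $|\transitionFunction|\leq|\states|^4$. You concede this only holds ``in the typical regime''; the paper, by contrast, consistently treats the alphabet (in particular $|\keyAlphabet|$, and hence $|\transitionFunction|$) as a parameter independent of $|\states|$, so an unconditional proof cannot rest on that inequality. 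The fix is easy and makes the scan over $\callFunction$ unnecessary: in a 1-SEVPA the stack symbol $\gamma$ of a return transition $(p',\bar a,\gamma,p)$ is itself a pair $(r,a)$, so the matching call transition is read off in constant time, the check $(q_0,r)\in\accRelation$ is a matrix lookup, and instruction~2 costs $\complexity(|\returnFunction|)$ in total over the whole worklist run --- comfortably within $\complexity(|\transitionFunction|\cdot|\states|^4)$ with no absorption argument at all.
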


\begin{proof}
Let us first show that computing the set $\setR$ is in $\complexity(|\transitionFunction|\cdot|\states|^4)$. The initialization is in $\complexity(|\states|)$. The main loop uses at most $|\states|$ steps until $\setR$ stabilizes. Its body is in $\complexity(|\states|^3 + |\returnFunction|\cdot|\states|^3)$. Second, detecting the bin state of $\automaton$ is in $\complexity(|\states|^2)$ by Lemma~\ref{lem:BinState}. Finally, removing this bin state from $\automaton$ and its related transitions is in $\complexity(|\states| + |\transitionFunction|)$. Therefore we get the complexity announced in the lemma.
\qed\end{proof}

\subsection{Key Graph}
Third, the vertices and the edges $\keyGraph$ are computed as follows: $(p,k,p')$ is a vertex in $\keyGraph$ if there exist $(p,k,q) \in \internalFunction$ with $k \in \keyAlphabet$ and 
\begin{itemize}
    \item $(q,a,p') \in \internalFunction$  with $a \in \valAlphabet$,
    \item or $(q,a,r,\gamma) \in \callFunction$, $(r,r') \in \accRelation$, and $(r',\bar a,\gamma,p') \in \returnFunction$;
\end{itemize}
and $((p_1,k_1,p'_1),(p_2,k_2,p'_2))$ is an edge in $\keyGraph$ if there exists $(p'_1,\comma,p_2) \in \internalFunction$.

\begin{lemma}\label{lem:ComputingKeyGraph}
Constructing the key graph $\keyGraph$ is in $\complexity(|\transitionFunction|^2 + |\transitionFunction| \cdot |\states|^2 + |\states|^4 \cdot |\keyAlphabet|^2)$.
\end{lemma}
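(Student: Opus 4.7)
The plan is to split the construction into three phases and bound each one using the size of $\accRelation$, the size of $\transitionFunction$, and the determinism of $\automaton$. The three phases are: (i) enumerating vertices arising from a key followed by a primitive value, (ii) enumerating vertices arising from a key followed by a bracketed well-matched word, and (iii) enumerating the comma-edges between vertices. I expect these phases to contribute the three summands $|\transitionFunction|^2$, $|\transitionFunction|\cdot|\states|^2$, and $|\states|^4\cdot|\keyAlphabet|^2$ respectively.

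For phase (i), a vertex $(p,k,p')$ of this type is witnessed by an ordered pair of internal transitions $(p,k,q) \in \internalFunction$ with $k \in \keyAlphabet$ and $(q,a,p') \in \internalFunction$ with $a \in \valAlphabet$. Iterating over all pairs in $\internalFunction \times \internalFunction$ (each pair tested in constant time via a direct lookup) costs $\complexity(|\internalFunction|^2) \subseteq \complexity(|\transitionFunction|^2)$.

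For phase (ii), the relevant witness is a quadruple consisting of an internal key-transition $(p,k,q) \in \internalFunction$, a call transition $(q,a,r,\gamma) \in \callFunction$, a pair $(r,r') \in \accRelation$, and a matching return transition $(r',\bar a,\gamma,p') \in \returnFunction$. The trick is to exploit determinism: once $q$ and $a$ are fixed the call transition is uniquely determined, and once $r'$, $\bar a$ and $\gamma$ are fixed so is the return transition. Thus I would enumerate the pair $(p,k,q)$ ranging over $\internalFunction$ restricted to $\keyAlphabet$, and for each such triple, range over $(r,r') \in \accRelation$ (which has size at most $|\states|^2$), performing constant-time lookups for the unique compatible call and return transitions. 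This gives a cost of $\complexity(|\internalFunction|\cdot|\states|^2) \subseteq \complexity(|\transitionFunction|\cdot|\states|^2)$.

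For phase (iii), by \Cref{lem:sizeKeyGraph} the number of vertices is in $\complexity(\automatonSize^2\cdot|\keyAlphabet|)$, so there are $\complexity(\automatonSize^4\cdot|\keyAlphabet|^2)$ ordered pairs of vertices. For each pair $((p_1,k_1,p'_1),(p_2,k_2,p'_2))$, checking whether $(p'_1,\comma,p_2) \in \internalFunction$ is a constant-time lookup (again by determinism), giving the third summand. Summing the three bounds yields the announced complexity. The only subtle point — and the closest thing to an obstacle — is the determinism argument in phase (ii): without it, naively iterating over the call and return transitions would multiply by an extra factor of $|\transitionFunction|$ and blow past the stated bound.
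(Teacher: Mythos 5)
Your proposal is correct and follows essentially the same decomposition as the paper: vertices of the primitive-value kind in $\complexity(|\internalFunction|^2)$, vertices of the bracketed kind in $\complexity(|\internalFunction|\cdot|\states|^2)$ using determinism for constant-time call/return lookups, and edges in $\complexity(|\states|^4\cdot|\keyAlphabet|^2)$ via \Cref{lem:sizeKeyGraph}. The paper's proof is just a terser statement of exactly these three bounds, relying on the same remark that determinism gives constant-time access to a transition's right-hand side from its left-hand side.
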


\begin{proof} 
Constructing the vertices of $\keyGraph$ is in $\complexity(|\internalFunction|^2 + |\internalFunction| \cdot |\states|^2)$. Constructing its edges is in $\complexity(|\states|^4 \cdot |\keyAlphabet|^2)$ by Lemma~\ref{lem:sizeKeyGraph}.
\qed\end{proof}

The complexity announced in \Cref{prop:KeyGraph} follows from the previous \Cref{lem:ComputingReach,lem:ComputingBinState,lem:ComputingKeyGraph}.

\section{Generating a Counterexample from the Key Graph}\label{sec:counterexample_from_key_graph}

\begin{lemma}
Let $\hypothesis$ be an automaton constructed by the learner. If the key graph $\keyGraph[\hypothesis]$ contains a path $((p_1,k_1,p'_1)\ldots (p_n,k_n,p'_n))$ with $p_1 = q_0$ such that $k_i = k_j$ for some $i \neq j$, then one can construct a word accepted by $\hypothesis$ that is not a valid JSON document.
\end{lemma}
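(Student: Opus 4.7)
The plan is to apply \Cref{lem:abstractKeyGraph} to the 1-SEVPA $\hypothesis$ (with its bin states removed as in \Cref{subsec:KeyGraphComputation}) and the given path $((p_1,k_1,p'_1)\ldots(p_n,k_n,p'_n))$ in $\keyGraph[\hypothesis]$ starting at $p_1 = q_0$. This yields a word $u = k_1 v_1 \comma k_2 v_2 \comma \ldots \comma k_n v_n$, where each $k_i v_i$ is a key-value pair, together with a run $\langle q_0, \emptyword \rangle \xrightarrow{u} \langle p'_n, \emptyword \rangle$ in the transition system of $\hypothesis$.

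Next, I would replay the last step of the proof of \Cref{lem:abstractKeyGraph} to exhibit an accepting word that contains $u$ as a factor. Since $p'_n$ is not a bin state, there exists an accepting path of $\hypothesis$ passing through a configuration $\langle p'_n, \sigma \rangle$ for some stack content $\sigma$. Exploiting the 1-SEVPA property that every call transition leads to $q_0$, I would decompose this path as $\langle q_0, \emptyword \rangle \xrightarrow{t''} \langle q_0, \sigma \rangle \xrightarrow{t_{m+1}} \langle p'_n, \sigma \rangle \xrightarrow{t'} \langle q, \emptyword \rangle$ for some final state $q$, and then substitute $u$ for $t_{m+1}$. This produces an accepting run on $w = t'' u t'$, hence $w \in \languageOf{\hypothesis}$.

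Finally, I would show that $w$ is not a valid JSON document. The factor $u$ reads a sequence of key-value pairs separated by commas, all at a common nesting depth, and it shares the stack content $\sigma$ throughout. By the JSON syntax recalled in \Cref{subsec:AbstractJSON}, symbols from $\keyAlphabet$ may appear only inside objects, and each object must have pairwise distinct keys. If $\sigma$ is empty or its top symbol records $\lcrochet$, then $w$ places a key outside an object and already fails to be a JSON document. Otherwise the top of $\sigma$ records $\laccol$, and $u$ lies inside a single object that contains both $k_i v_i$ and $k_j v_j$ with $k_i = k_j$, violating the distinct-keys requirement. In either case $w \in \languageOf{\hypothesis} \setminus \languageOf{\grammar}$, giving the desired counterexample. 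The main subtlety I anticipate is the case analysis on the enclosing context of $u$: the argument must remain valid for an arbitrary learner hypothesis $\hypothesis$, including those that have not yet learned to forbid misplaced keys, which is precisely the reason such an \enquote{invalid-shape} $w$ can be extracted from $\keyGraph[\hypothesis]$ in the first place.
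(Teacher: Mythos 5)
Your proposal is correct and follows essentially the same route as the paper: both arguments replay the proof of \Cref{lem:abstractKeyGraph} to obtain a run $\langle q_0, \emptyword \rangle \xrightarrow{u} \langle p'_n, \emptyword \rangle$ labelled by $u = k_1v_1 \comma \ldots \comma k_nv_n$, use the absence of bin states and the 1-SEVPA call structure to embed $u$ into an accepted word $t''ut'$, and conclude invalidity from the repeated key. Two small points of comparison. First, the lemma claims one can \emph{construct} the counterexample, and the paper spends most of its proof on making the argument effective: it enriches the vertices of $\keyGraph[\hypothesis]$ and the set $\setR$ with the witness maps $\wit$ and $\wit'$ precomputed in the key-graph algorithm, so that the concrete word $wuw'$ can be read off rather than merely shown to exist. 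Your proof establishes existence and names the decomposition but gives no recipe for producing $t''$ and $t'$; in the context where this lemma is used (generating a counterexample during an equivalence query), that algorithmic content is the point of the appendix, so you would want to add it. Second, your explicit case analysis on the enclosing context of $u$ (key outside an object, versus repeated key inside an object) is slightly more careful than the paper, which simply asserts that $u$ is part of an object; this is a harmless refinement that makes the invalidity claim robust for arbitrary hypotheses.
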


\begin{proof}
We proceed exactly as in the proof of \Cref{lem:abstractKeyGraph}. In the key graph $\keyGraph[\hypothesis]$ of $\hypothesis$, let 
\begin{eqnarray} \label{eq:path:counterexample}
  ((p_1,k_1,p'_1)(p_2,k_2,p'_2)\ldots (p_n,k_n,p'_n))  
\end{eqnarray}
be a path with $p_1 = q_0$ such that $k_i = k_j$ for some $i \neq j$. Then, in the proof of \Cref{lem:abstractKeyGraph}, we proved that there exists in $\transitionSystem$ a path 
\begin{eqnarray} \label{eq:factor}
    \left\langle q_0, \emptyword \right\rangle \xrightarrow{t''} \left\langle q_0, \stack \right\rangle \xrightarrow{u} \left\langle p'_n, \stack \right\rangle \xrightarrow{t'} \left\langle q, \emptyword \right\rangle
\end{eqnarray}
with $q \in \finalStates$ such that $u = k_1v_1 \comma \ldots \comma k_n v_n$ is part of an object. This shows that the word $t''ut'$ is accepted by $\hypothesis$ and is not a valid document by the presence of the repeated keys $k_i, k_j$. Let us explain how to \emph{construct} this word $t''ut'$. 

For this purpose, we are going to use the witnesses introduced in Appendix~\ref{app:ComplexityKeyGraph}. First, thanks to the map $\wit[\hypothesis]$ associated with $\accRelation[\hypothesis]$ and computed in Appendix~\ref{app:ComplexityKeyGraph}, we can similarly enrich the vertices of $\keyGraph[\hypothesis]$ with witnesses: we assign a key-value pair $kv = \wit[\hypothesis](p,k,p')$ to each vertex $(p,k,p')$ of $\keyGraph[\hypothesis]$ such that $\left\langle q, \emptyword \right\rangle \xrightarrow{kv} \left\langle q', \emptyword \right\rangle$. Therefore, from a path in $\keyGraph[\hypothesis]$ like~\eqref{eq:path:counterexample}, we derive the witness $u = k_1v_1 \comma \cdots \comma k_n v_n$ such that $\left\langle q_0, \emptyword \right\rangle \xrightarrow{u} \left\langle p'_n, \emptyword \right\rangle$. It remains to extend this witness $u$ into a witness $t''ut'$ of a path like in~\eqref{eq:factor}. This is possible by noticing that $p'_n \in \setR$. Therefore with the computed witness $\wit[\hypothesis]'(q_0,p'_n) = (w,w')$ and the path $\left\langle q_0, \emptyword \right\rangle \xrightarrow{u} \left\langle p'_n, \emptyword \right\rangle$, we get the path $\left\langle q_0, \emptyword \right\rangle \xrightarrow{w} \left\langle q_0, \stackWord \right\rangle \xrightarrow{u} \left\langle p'_n, \stackWord \right\rangle \xrightarrow{w'} \left\langle q, \emptyword \right\rangle$ for some $q \in \finalStates$. The witness $wuw'$ of this path is the required counterexample.
\qed\end{proof}

\section{JSON Schemas Used for Experiments}\label{sec:schemas_implementation}

\Cref{fig:schemas_implementation:recursive,fig:schemas_implementation:all_types} give the first two evaluated JSON schemas of \Cref{sec:implementation}.

\begin{figure}
    \lstinputlisting[language=JSON]{figures/schema/recursiveList.json}
    \captionof{lstfloat}{The recursive JSON schema.}%
    \label{fig:schemas_implementation:recursive}

    \lstinputlisting[language=JSON]{figures/schema/basicTypes.json}
    \captionof{lstfloat}{The JSON schema accepting documents with all types.}%
    \label{fig:schemas_implementation:all_types}
\end{figure}

\section{Validation Results for the First Three Schemas}\label{app:more_results}
Results for the comparison of the classical and our new validation algorithms on the first three schemas of \Cref{sec:implementation:results} are given in \Crefrange{fig:benchmarks:validating:basicTypes}{fig:benchmarks:validating:vscode}.
Recall that blue crosses give the values for our algorithm, while the red circles stand for the classical algorithm.

\begin{figure}
    \centering
    \begin{subfigure}{0.45\textwidth}
        \includegraphics{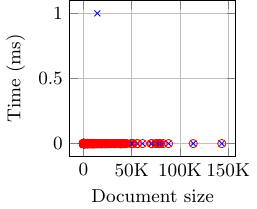}
        \caption{Time.}
    \end{subfigure}
    \hfill
    \begin{subfigure}{0.45\textwidth}
        \includegraphics{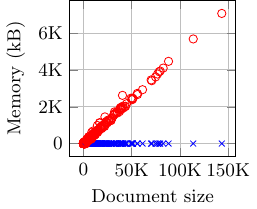}
        \caption{Memory.}
    \end{subfigure}
    \caption{Results of validation benchmarks for the recursive list schema.%
    }%
    \label{fig:benchmarks:validating:recursive}

    \begin{subfigure}{0.45\textwidth}
        \includegraphics{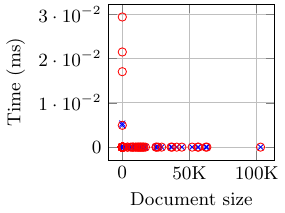}
        \caption{Time.}
    \end{subfigure}
    \hfill
    \begin{subfigure}{0.45\textwidth}
        \includegraphics{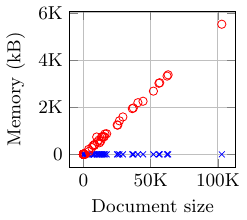}
        \caption{Memory.}
    \end{subfigure}
    \caption{Results of validation benchmarks for the schema iterating over the types of values.%
    }%
    \label{fig:benchmarks:validating:basicTypes}

    \begin{subfigure}{0.45\textwidth}
        \includegraphics{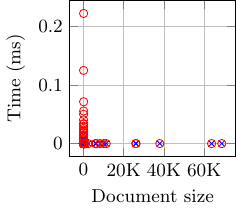}
        \caption{Time.}
    \end{subfigure}
    \hfill
    \begin{subfigure}{0.45\textwidth}
        \includegraphics{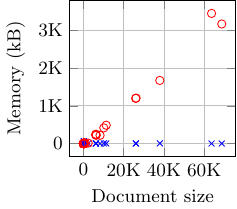}
        \caption{Memory.}
    \end{subfigure}
    \caption{Results of validation benchmarks for the snippet configuration of \emph{Visual Studio Code}.%
    }%
    \label{fig:benchmarks:validating:vscode}
\end{figure}

\end{document}